\documentclass[11pt,a4paper]{amsart}
\usepackage[foot]{amsaddr}
\usepackage{ifxetex}
\ifxetex
  \usepackage[no-math]{fontspec}
\else
\fi
\usepackage{amsmath}
\usepackage{amsfonts}
\usepackage{amssymb}
\usepackage{amsthm}
\usepackage{fullpage}
\usepackage{microtype}
\usepackage[libertine]{newtxmath}
\usepackage[tt=false]{libertine} 
\usepackage{caption}
\usepackage{bbm}
\usepackage{hyperref, color}
\hypersetup{colorlinks=true,citecolor=blue, linkcolor=blue, urlcolor=blue}
\usepackage[linesnumbered,boxed,ruled,vlined]{algorithm2e}
\usepackage{bm}
\usepackage{bbm}
\usepackage[numbers]{natbib}
\usepackage{xcolor}
\usepackage{enumerate} 
\usepackage{enumitem}
\usepackage{tabularx}
\usepackage{array}
\usepackage{cleveref}
\newcolumntype{L}[1]{>{\raggedright\arraybackslash}p{#1}}
\newcolumntype{C}[1]{>{\centering\arraybackslash}m{#1}}
\newcolumntype{R}[1]{>{\raggedleft\arraybackslash}p{#1}}

\usepackage{makecell}

\usepackage{cleveref}
\usepackage{color}
\usepackage{graphicx}
\usepackage{float}
\usepackage{bbm}

\usepackage{footnote}
\makesavenoteenv{tabular}


\newtheorem{thm}{\textbf{Theorem}}
\newenvironment{thmbis}[1]
  {%
   \addtocounter{thm}{-1}%
   \begin{thm}}
  {\end{thm}}

\newtheorem{thms}{\textbf{Proposition}}

\newtheorem{theorem}{Theorem}[section]
\newtheorem{claim}[theorem]{Claim}

\newtheorem{lemma}[theorem]{Lemma}
\newtheorem{definition}[theorem]{Definition}

\newtheorem{remark}[theorem]{Remark}
\newtheorem{proposition}[theorem]{Proposition}
\newtheorem{condition}[theorem]{Condition}

\newtheorem{fact}[theorem]{Fact}

\newcommand{\Neighbor}{\mathcal{N}}

\newcommand{\DependencyGraph}{G_D}

\newcommand{\BipartiteGraph}{G_B}
\newcommand{\EventSet}{\mathcal{A}}

\newcommand{\Event}{A}

\newcommand{\Neg}[1]{\overline{#1}}
\renewcommand{\vec}[1]{\mathbf{#1}}
\renewcommand{\Pr}{\mathbb{P}}

\newcommand{\Interior}{\mathcal{I}}

\newcommand{\dist}{\mathrm{dist}}

\newcommand{\E}{\mathbb{E}}
\newcommand{\N}{\mathcal{N}}

\newcommand{\eps}{\varepsilon}

\newcommand{\A}{\mathcal{A}}

\newcommand{\M}{\mathcal{M}}

\renewcommand{\d}{\mathrm{d}}

\newcommand{\mdl}{\Delta}

\newcommand{\vbrl}{\mathrm{vbl}}

\newcommand{\vecdelta}{\boldsymbol{\delta}}

\renewcommand{\vec}[1]{\boldsymbol{#1}}

\newcommand{\lowerBB}{\digamma}
\newcommand{\Location}{\mathcal{L}}

\newcommand{\vbl}{\mathrm{vbl}}
\newcommand{\D}{\mathcal{D}}

\renewcommand{\d}{\mathrm{d}}

\newcommand{\Rem}{\mathscr{S}}
\newcommand{\Mat}{\mathscr{M}}

\newcommand{\locg}{K}

\newcommand{\warn}[1]{{\color{red}{#1}}}

\newcommand{\lqc}[1]{#1}

\title{Moser-Tardos Algorithm: Beyond Shearer's Bound}
\date{}

\author{Kun He}
\author{Qian Li}
\author{Xiaoming Sun}

\address[Kun He, Qian Li, and Xiaoming Sun]{Institute of Computing Technology, Chinese Academy of Sciences, Beijing, China. University of Chinese Academy of Sciences. Beijing, China. \textnormal{E-mail: \url{hekun.threebody@foxmail.com}, \url{liqian@ict.ac.cn}, and \url{sunxiaoming@ict.ac.cn}}.}

\begin{document}

\maketitle

\begin{abstract}

In a seminal paper (Moser and Tardos, JACM’10), Moser and Tardos developed a simple and powerful algorithm to find solutions to combinatorial problems in the \emph{variable} Lov{\'a}sz Local Lemma (LLL) setting. Kolipaka and Szegedy (Kolipaka and Szegedy, STOC’11) proved that the Moser-Tardos algorithm is efficient up to the tight condition of the \emph{abstract} Lov{\'a}sz Local Lemma, known as Shearer's bound.
A fundamental problem 
around LLL is whether the efficient region of the Moser-Tardos algorithm can be further extended.

In this paper, we give a positive answer to this problem. We show that the efficient region of the Moser-Tardos algorithm goes beyond the Shearer's bound of the underlying dependency graph, if the graph is not chordal. Otherwise, the dependency graph is chordal, and it has been shown that Shearer's bound exactly characterizes the efficient region for such graphs (Kolipaka and Szegedy, STOC’11; He, Li, Liu, Wang and Xia, FOCS’17).

Moreover, we demonstrate that the efficient region can exceed Shearer's bound by a constant by explicitly calculating the gaps on several infinite lattices. 

The core of our proof is a new criterion on the efficiency of the Moser-Tardos algorithm which takes the intersection between dependent events into consideration. Our criterion is strictly better than Shearer's bound whenever the intersection exists between dependent events. Meanwhile, if any two dependent events are mutually exclusive, our criterion becomes the Shearer's bound, which is known to be tight in this situation for the Moser-Tardos algorithm (Kolipaka and Szegedy, STOC’11; Guo, Jerrum and Liu, JACM’19).

\end{abstract}

\section{Introduction}
Suppose \lqc{$\EventSet=\{\Event_1,\cdots,\Event_m\}$} 
is a set of bad events. If the events are mutually independent, then we can avoid all of these events 
simultaneously whenever no event has probability 1. Lov{\'a}sz Local Lemma (LLL)~\cite{erdos1975problems}, one of the most important probabilistic methods, allows for limited dependency among the events, but still concludes that all the events can be avoided simultaneously if each individual event has a bounded probability.
In the most general setting (a.k.a. abstract LLL), the dependency among $\EventSet$ is characterized by an undirected graph $G_D=([m],E_D)$, called a
\emph{dependency graph} of $\EventSet$, which satisfies that for any vertex $i$, $\Event_i$ is independent of $\{\Event_j: j\notin \Neighbor_{G_D}(i) \cup \{i\}\}$. Here $\Neighbor_{G}(i)$ stands for the set of neighbors of vertex $i$ in a given graph $G$. 

We use $\EventSet\sim(G_D,\vec{p})$ to denote that (i) $G_D$ is a dependency graph of $\EventSet$ and (ii) the probability vector of $\EventSet$ is $\vec{p}$. 
Given a graph $G_D$, define the \emph{abstract interior} $\Interior_a(G_D)$ to be the set consisting of all vectors $\vec{p}$ such that $\Pr\left(\cap_{\Event\in \EventSet} \Neg{\Event} \right)>0$ for any $\EventSet\sim(G_D,\vec{p})$. In this context, the most frequently used abstract LLL can be stated as follows:
\begin{theorem}[\cite{spencer1977asymptotic}]\label{thm:asymmetric}
Given any graph $\DependencyGraph=([m],E_D)$ and any probability vector $\vec{p}\in (0,1]^m$, if there exist real numbers $x_1,...,x_m \in (0,1)$ such that $p_i \leq x_i \prod_{j\in  \Neighbor_{G_D}(i)} (1-x_j)$ for any $i\in[m]$, then $\vec{p}\in\Interior_a(G_D)$.
\end{theorem}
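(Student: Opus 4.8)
The plan is to prove the classical local strengthening by induction on the size of the conditioning set: for every $i\in[m]$ and every $S\subseteq[m]\setminus\{i\}$ one has $\Pr\bigl(\bigcap_{j\in S}\Neg{\Event_j}\bigr)>0$ and
\[
\Pr\Bigl(\Event_i \,\Big|\, \bigcap_{j\in S}\Neg{\Event_j}\Bigr)\le x_i .
\]
Granting this for $S=[m]\setminus\{i\}$ and unwinding it along an arbitrary ordering of the events via the chain rule gives $\Pr\bigl(\bigcap_{i=1}^m\Neg{\Event_i}\bigr)=\prod_{i=1}^m\Pr\bigl(\Neg{\Event_i}\mid\bigcap_{j<i}\Neg{\Event_j}\bigr)\ge\prod_{i=1}^m(1-x_i)>0$, so that $\vec{p}\in\Interior_a(\DependencyGraph)$. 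Note that the hypothesis forces $p_i\le x_i<1$, which is exactly what makes the base case $\Pr(\Event_i)=p_i\le x_i$ go through.

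For the inductive step, fix $i$ and $S$ and split $S=S_1\sqcup S_2$ with $S_1=S\cap\Neighbor_{\DependencyGraph}(i)$ and $S_2=S\setminus\Neighbor_{\DependencyGraph}(i)$. Writing the conditional probability as a ratio,
\[
\Pr\Bigl(\Event_i \,\Big|\, \bigcap_{j\in S}\Neg{\Event_j}\Bigr)
=\frac{\Pr\bigl(\Event_i\cap\bigcap_{j\in S_1}\Neg{\Event_j}\;\big|\;\bigcap_{k\in S_2}\Neg{\Event_k}\bigr)}
{\Pr\bigl(\bigcap_{j\in S_1}\Neg{\Event_j}\;\big|\;\bigcap_{k\in S_2}\Neg{\Event_k}\bigr)},
\]
I would bound the numerator from above by discarding the events indexed by $S_1$ and using that $\Event_i$ is independent of $\{\Event_k:k\in S_2\}$ (since $S_2$ misses $\Neighbor_{\DependencyGraph}(i)$), which gives $p_i$. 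For the denominator I would enumerate $S_1=\{j_1,\dots,j_r\}$, apply the chain rule once more, and bound each factor $\Pr\bigl(\Neg{\Event_{j_\ell}}\mid\bigcap_{\ell'<\ell}\Neg{\Event_{j_{\ell'}}}\cap\bigcap_{k\in S_2}\Neg{\Event_k}\bigr)\ge 1-x_{j_\ell}$ by the induction hypothesis (each conditioning set there is a proper subset of $S$, hence strictly smaller). The product of these factors is at least $\prod_{j\in\Neighbor_{\DependencyGraph}(i)}(1-x_j)$, and combining with $p_i\le x_i\prod_{j\in\Neighbor_{\DependencyGraph}(i)}(1-x_j)$ yields exactly the required bound $x_i$.

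The only delicate point is keeping all the conditional probabilities well-defined, i.e.\ making sure every denominator that appears is strictly positive; this is precisely why the positivity clause $\Pr\bigl(\bigcap_{j\in S}\Neg{\Event_j}\bigr)>0$ is carried along inside the induction. For $|S|=k$, picking any $j_0\in S$ and writing $\Pr\bigl(\bigcap_{j\in S}\Neg{\Event_j}\bigr)=\Pr\bigl(\Neg{\Event_{j_0}}\mid\bigcap_{j\in S\setminus\{j_0\}}\Neg{\Event_j}\bigr)\cdot\Pr\bigl(\bigcap_{j\in S\setminus\{j_0\}}\Neg{\Event_j}\bigr)$ lets one deduce positivity at level $k$ from the two assertions at level $k-1$, since the first factor is $1-\Pr(\Event_{j_0}\mid\cdots)\ge 1-x_{j_0}>0$. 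I expect this bookkeeping, rather than the probabilistic inequality itself, to be the only part of the argument that needs genuine care; everything else is routine manipulation of conditional probabilities.
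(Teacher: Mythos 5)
Your argument is correct: it is the standard inductive proof of the asymmetric Lov\'asz Local Lemma (conditional bound $\Pr(\Event_i \mid \bigcap_{j\in S}\Neg{\Event_j})\le x_i$ proved by induction on $|S|$, with the positivity of the conditioning events carried along in the same induction), and the split $S=S_1\sqcup S_2$, the use of mutual independence of $\Event_i$ from $\{\Event_k: k\in S_2\}$, and the chain-rule lower bound on the denominator are all handled properly. The paper states this theorem as a cited classical result and gives no proof of its own, so there is nothing to compare against; your write-up matches the textbook argument.
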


Shearer~\cite{shearer1985problem} obtained the strongest possible condition for abstract LLL.  
Let $\text{Ind}(G_D)$ be the set of all independent sets of an undirected graph $G_D=([m],E_D)$ and $\vec{p}=(p_1,\cdots,p_m)\in (0,1]^m$. For each $I\in \text{Ind}(G_D)$, define the quantity
\[
q_{I}(G_D,\vec{p})=\sum_{J\in \text{Ind}(G_D),I\subseteq J}(-1)^{|J|-|I|}\prod_{i\in J}p_i.
\]
$\vec{p}$ is called \emph{in Shearer's bound} of $G_D$ if $q_{I}(G_D,\vec{p})>0$ for any $I\in \text{Ind}(G_D)$. Otherwise we say $\vec{p}$ is \emph {beyond Shearer's bound} of $G_D$. 
Shearer's result can be stated as follows.

\begin{theorem}[\cite{shearer1985problem}]\label{thm:shearer1985problem}
For any 
graph $G_D =([m],E_D)$ and any probability vector $\vec{p}\in (0,1]^{m}$, $\vec{p}\in \Interior_a(G_D)$ if and only if $\vec{p}$ is in Shearer's bound of $G_D$.
\end{theorem}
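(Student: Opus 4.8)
The plan is to prove both directions of \Cref{thm:shearer1985problem}; together they say that $\inf_{\EventSet\sim(\DependencyGraph,\vec p)}\Pr\big(\bigcap_{\Event\in\EventSet}\Neg{\Event}\big)$ is strictly positive precisely when $\vec p$ is in Shearer's bound. Throughout, for $S\subseteq[m]$ write $Z_S:=\sum_{I\in\Independentset(\DependencyGraph),\,I\subseteq S}(-1)^{|I|}\prod_{i\in I}p_i$, so that $Z_{[m]}=q_\emptyset(\DependencyGraph,\vec p)$; splitting the sum according to whether a fixed $i\in S$ lies in $I$ yields the deletion--contraction recursion $Z_S=Z_{S\setminus i}-p_i\,Z_{S\setminus\Neighbor[i]}$, where $\Neighbor[i]:=\{i\}\cup\Neighbor_{\DependencyGraph}(i)$ and $\Neighbor[I]:=\bigcup_{i\in I}\Neighbor[i]$. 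A preliminary combinatorial fact, implicit in Shearer's work and made transparent by Scott and Sokal via the hard-core lattice gas, is that $\vec p$ is in Shearer's bound of $\DependencyGraph$ if and only if $Z_S>0$ for every $S\subseteq[m]$; one direction uses the identity $q_I(\DependencyGraph,\vec p)=\big(\prod_{i\in I}p_i\big)Z_{[m]\setminus\Neighbor[I]}$, the converse propagates positivity along the recursion, and a byproduct is that Shearer's region is closed under passing to induced subgraphs.

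For the sufficiency direction, which is the harder of the two, fix $\EventSet\sim(\DependencyGraph,\vec p)$ with $\vec p$ in Shearer's bound. I would prove, by induction on $|S|$, the following two coupled statements for all $S\subseteq[m]$ and all $i\in[m]\setminus S$:
\[
\Pr\Big(\bigcap_{j\in S}\Neg{\Event_j}\Big)\ \ge\ Z_S
\qquad\text{and}\qquad
\Pr\Big(\Event_i\ \Big|\ \bigcap_{j\in S}\Neg{\Event_j}\Big)\ \le\ p_i\,\frac{Z_{S\setminus\Neighbor[i]}}{Z_S}.
\]
The base case $S=\emptyset$ is immediate. Granting both statements for all sets of size at most $|S|$: the first for $S\cup\{i\}$ follows from $\Pr\big(\bigcap_{j\in S\cup i}\Neg{\Event_j}\big)=\big(1-\Pr(\Event_i\mid\bigcap_{j\in S}\Neg{\Event_j})\big)\Pr\big(\bigcap_{j\in S}\Neg{\Event_j}\big)$, the second statement on $S$, the recursion $Z_{S\cup i}=Z_S-p_iZ_{S\setminus\Neighbor[i]}$, and positivity of all the $Z$'s. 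The substantive step is the second statement on $S$: one peels off, one at a time, the neighbors of $i$ lying in $S$, using that $\Event_i$ is independent of the remaining (non-neighbor) events to extract a factor $p_i$ and using the first statement on the smaller sets produced to control the conditional probabilities. This is where the signs in $Z_S$ are essential and crude Bonferroni-type estimates are insufficient, since they would only recover the union bound $\Pr(\bigcap_{\Event\in\EventSet}\Neg{\Event})\ge 1-\sum_i p_i$. Specializing the first statement to $S=[m]$ gives $\Pr\big(\bigcap_{\Event\in\EventSet}\Neg{\Event}\big)\ge Z_{[m]}=q_\emptyset(\DependencyGraph,\vec p)>0$, i.e.\ $\vec p\in\Interior_a(\DependencyGraph)$. (An alternative route to this direction runs through the witness-tree / stable-set-sequence analysis of a resampling algorithm, as in Kolipaka--Szegedy.)

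For necessity, suppose $\vec p$ is beyond Shearer's bound; I must exhibit $\EventSet\sim(\DependencyGraph,\vec p)$ with $\Pr\big(\bigcap_{\Event\in\EventSet}\Neg{\Event}\big)=0$. First slide to a boundary point: along $t\mapsto t\vec p$ each $q_I(\DependencyGraph,t\vec p)$ has positive leading term in $t$ and so is positive for all small $t>0$, while the open set $\{\,\vec q:q_I(\DependencyGraph,\vec q)>0\ \forall I\,\}$ does not contain $\vec p$; hence $t^\star:=\sup\{t:t\vec p\text{ in Shearer's bound}\}\in(0,1]$, $t^\star\vec p$ is not in Shearer's bound, yet $Z_S(\DependencyGraph,t^\star\vec p)\ge0$ for all $S$ by continuity, so $Z_{S'}(\DependencyGraph,t^\star\vec p)=0$ for some $S'$, and propagating this zero upward along the recursion (exactly as in the preliminary fact) forces $Z_{[m]}(\DependencyGraph,t^\star\vec p)=q_\emptyset(\DependencyGraph,t^\star\vec p)=0$. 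At $t^\star\vec p$ all the $q_I$ are nonnegative, which is precisely the condition under which Shearer's extremal instance exists: an instance $\EventSet^\star\sim(\DependencyGraph,t^\star\vec p)$ whose cylinder probabilities take the extreme values dictated by the $q_I$, for which $\Pr\big(\bigcap_{\Event\in\EventSet^\star}\Neg{\Event}\big)=q_\emptyset(\DependencyGraph,t^\star\vec p)=0$. Finally transfer back to $\vec p$: enlarge each $\Event_i^\star$ by a disjoint piece carved from a fresh independent random bit so its probability rises from $t^\star p_i$ to $p_i$; a direct computation shows the enlarged family still has $\DependencyGraph$ as a dependency graph, and $\bigcap_{\Event}\Neg{\Event}$ only shrinks, so its probability stays $0$ and $\vec p\notin\Interior_a(\DependencyGraph)$.

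The main obstacle is the second coupled inequality in the sufficiency proof: establishing that the conditional probabilities $\Pr(\Event_i\mid\bigcap_{j\in S}\Neg{\Event_j})$ are governed by the independent-set-polynomial ratio $Z_{S\setminus\Neighbor[i]}/Z_S$. A naive induction that discards cross terms collapses to the union bound, so the peeling has to be organized to retain the full alternating structure, and getting the bookkeeping right for an arbitrary dependency graph $\DependencyGraph$ is the delicate heart of Shearer's argument. Secondary obstacles are the preliminary propagation lemma (the equivalence of the $q_I>0$ and $Z_S>0$ conditions, used in both directions) and, on the necessity side, the explicit description of the extremal instance together with the verification that it genuinely realizes $\DependencyGraph$ as a dependency graph despite the global correlations among its events.
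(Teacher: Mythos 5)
The paper does not prove Theorem~\ref{thm:shearer1985problem}; it is imported verbatim from Shearer's 1985 paper, so there is no internal proof to compare yours against. Measured against the classical argument that the citation points to, your outline is essentially that argument and is structurally correct: the coupled induction $\Pr\big(\bigcap_{j\in S}\Neg{\Event_j}\big)\ge Z_S$ and $\Pr\big(\Event_i\mid\bigcap_{j\in S}\Neg{\Event_j}\big)\le p_i Z_{S\setminus(\Neighbor_{\DependencyGraph}(i)\cup\{i\})}/Z_S$, with the telescoping peel-off of the neighbours of $i$ inside $S$, is exactly Shearer's sufficiency proof in the Scott--Sokal formulation; and your necessity route --- scale to the boundary, propagate a vanishing $Z_{S'}$ up the deletion--contraction recursion to force $q_\emptyset=0$, realize $q_\emptyset$ by the extremal instance, then inflate the events back to probability $p_i$ with fresh independent randomness --- is the standard one, and the inflation step correctly preserves the dependency graph.

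Two places where your plan leans on assertions that are true but carry real weight. First, the equivalence of ``$q_I>0$ for all independent $I$'' with ``$Z_S>0$ for all $S\subseteq[m]$'' is not a one-line propagation in the converse direction: since the sets $[m]\setminus(\Neighbor_{\DependencyGraph}(I)\cup I)$ do not range over all subsets, one must show Shearer's region is closed under passing to induced subgraphs (equivalently, $Z_S=q_\emptyset(\DependencyGraph[S],\vec p|_S)$), which needs its own induction; this lemma is also what justifies ``$Z_S\ge 0$ at $t^\star\vec p$ by continuity.'' Second, the extremal instance requires checking that the measure putting mass $q_I$ on each independent set $I$ really admits $\DependencyGraph$ as a dependency graph (mutual independence of $\Event_i$ from all non-neighbours), which you flag but do not supply. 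A cosmetic wrinkle: as written, $\sup\{t:t\vec p\text{ in Shearer's bound}\}$ need not lie in $(0,1]$ unless you restrict the supremum to $t\le 1$ (and then use openness of the region to see $t^\star\vec p$ is not in it when $t^\star<1$) or first establish that the set of good $t$ is an interval. None of these affects the soundness of the plan; they are the parts where the actual work of Shearer's proof lives.
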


\vspace{1.5ex}
\noindent\underline{Variable Lov{\'a}sz Local Lemma.} Variable Lov{\'a}sz Local Lemma (VLLL) is another quite general and common setting of LLL,
 which applies to variable-generated event systems. 
 In this setting, there is a set of underlying mutually independent random variables $\{X_1,\cdots,X_n\}$, and each event $\Event_i$ can be fully determined by some variables $\vbrl(A_i)$ of them. The dependency between events and variables can be naturally characterized by a bipartite graph $G_B=([m],[n],E_B)$, known as the event-variable graph, such that edge $(i,j) \in [m]\times [n]$ exists if and only if $X_j \in \vbrl(A_i)$. 

The variable setting is important, mainly because most applications of LLL have natural underlying independent variables, such as the satisfiability of CNF formulas\cite{gebauer2009lovasz,gebauer2016local,moitra2019approximate,feng2020fast}, hypergraph coloring \cite{mcdiarmid1997hypergraph,guo2019counting},
and Ramsey numbers\cite{spencer1975ramsey,spencer1977asymptotic,harris2016lopsidependency}. 
In particular, the groundbreaking result by Moser and Tardos \cite{moser2010constructive} on constructive LLL applies in the variable setting. 

There is a natural choice for the dependency graph of variable-generated systems, called the \emph{canonical dependency graph}: two events are adjacent if they share some common variables. Formally, given a bipartite graph $\BipartiteGraph=(U,V,E_B)$, its \emph{base graph} is defined as the graph  $\DependencyGraph(\BipartiteGraph)=(U,E_D)$
such that for any two vertices $u_i, u_j \in U$, $(u_i, u_j) \in E_D$ if and only if $u_i$ and $u_j$ share common neighbors in $\BipartiteGraph$. 
If $\BipartiteGraph$ is the event-variable graph of a variable-generated system $\EventSet$, then $\DependencyGraph(\BipartiteGraph)$
is the canonical dependency graph of $\EventSet$.

Given a graph $G_D$, define the variable interior $\Interior_v(G_D)$ to be the set consisting of all vectors $\vec{p}$ such that $\Pr\left(\cap_{\Event\in \EventSet} \Neg{\Event} \right)>0$ for any \emph{variable-generated} event system $\EventSet\sim(G_D,\vec{p})$. 
Obviously, $\Interior_v(G_D)\supseteq\Interior_a(G_D)$ for any $G_D$.
In contrast with the abstract LLL, the Shearer's bound (of the canonical dependency graph) turns out to be not tight for variable-generated systems \cite{he2017variable}: the containment is proper 
if and only if $G_D$ is not chordal\footnote{A graph is chordal if it has no induced cycle of length at least four.}.

\vspace{1.5ex}
\noindent\underline{Constructive (variable) Lov{\'a}sz Local Lemma and Moser-Tardos algorithm.}
The abstract LLL and the variable LLL mentioned above are not constructive in that they do not indicate how to efficiently find an object avoiding all the bad events.
In a seminal paper \cite{moser2010constructive}, Moser and Tardos developed an amazingly simple efficient algorithm for variable-generated systems, depicted in Algorithm \ref{alg:mt}\footnote{Throughout the paper, the Moser-Tardos algorithm is allowed to follow arbitrary selection rules.}, and showed that this algorithm terminates quickly under the condition in Theorem~\ref{thm:asymmetric}. 
Following the Moser-Tardos algorithm (or MT algorithm for short), a large amount of effort devoted to constructive LLL, including the remarkable works which extend the MT techniques beyond the variable setting
~\cite{HarrisS14a,AchlioptasIV17,AchlioptasIK19,AchlioptasIS19,IliopoulosS20,HarveyV20}. 
The MT algorithm has been applied to many important problems, including $k$-SAT~\cite{gebauer2016local}, hypergraph coloring~\cite{harris2016lopsidependency}, Hamiltonian cycle~\cite{harris2016lopsidependency}, and their counting and sampling 
~\cite{guo2019uniform,moitra2019approximate,feng2020fast,feng2021sampling,Vishesh21sampling,he2021perfect}. 

\begin{algorithm}[htb]
 	\caption{Moser-Tardos Algorithm}
 	\label{alg:mt}
 	
    Assign random values to $X_1,\cdots,X_n$\;
   \While{$\exists i~\in[m]$ such that $A_i$ holds}{
 	  Arbitrarily select 
 	  one such $i$ and resample all variables $X_j$ in $\vbrl(A_i)$;
    }
   Return the current assignment;
\end{algorithm}

Mainly because such a simple algorithm is so powerful and general-purpose, 
it is one of the most intriguing and fundamental problems on constructive LLL how powerful the MT algorithm is. Given a graph $G_D$, define the \emph{Moser-Tardos interior} $\Interior_{MT}(G_D)$ to be the set consisting of all vectors $\vec{p}$ such that the MT algorithm is efficient for any \emph{variable-generated} event system  
$\EventSet\sim(G_D,\vec{p})$. Clearly, $\Interior_{MT}(G_D)\subseteq\Interior_v(G_D)$ for any $G_D$.
A major line of follow-up works explores $\Interior_{MT}(G_D)$
~\cite{CLLL,pegden2014extension,kolipaka,catarata2017moser}. 
The best known criterion is obtained by Kolipaka and Szegedy~\cite{kolipaka}.
They extended the MT interior to the Shearer's bound. That is, they showed that $\Interior_{MT}(G_D)\supseteq \Interior_a(G_D)$.
As mentioned above, if $G_D$ is not chordal, $\Interior_a(G_D)$ is properly contained in $\Interior_v(G_D)$, so it is possible to further push $\Interior_{MT}(G_D)$ beyond Shearer's bound.

\vspace{1ex}
In this paper, we concentrate on the following open problem:

\vspace{1ex}
\noindent\textit{\textbf{Problem 1}: does $\Interior_{MT}(G_D)$ \emph{properly} contain $\Interior_a(G_D)$ for some $G_D$? If so, for what kind of graph $G_D$? 
}
\vspace{1ex}

Rather than potential applications, our main motivations are the following fundamental problems around LLL itself:

\begin{itemize}[leftmargin=20pt]
    \item \emph{The limitation of the constructive LLL in the variable setting.}
    In the most fascinating problems around LLL, 
    a mysterious conjecture says that
    there is an algorithm which is efficient for all variable-generated systems $\A$ if 
    $\A \sim (G_D,\vec{p})$ for some $G_D$ and $\vec{p}\in \Interior_{v}(G_D)$ \cite{szegedy2013lovasz}. 
    It would be a small miracle if the conjecture is true, since if so, one can always \emph{construct} a solution efficiently in the variable setting if 
    solutions are guaranteed to \emph{exist} 
    by the LLL condition.
    Towards this conjecture
    , a good start is to show that $\Interior_{MT}(G_D)\supsetneqq \Interior_a(G_D)$ for some $G_D$, as 
    $\Interior_v(G_D)\supsetneqq \Interior_a(G_D)$ for $G_D$ which is not chordal.
    
    \item \emph{The limitation of the MT algorithm.} The MT algorithm is one of the most intriguing topics in modern algorithm researches, not only because it is very simple and with magic power, 
    but also because it is closely related to the famous Walksat algorithm for random $k$-SAT. 
    A mysterious problem about the MT algorithm is where is its true limitation~\cite{szegedy2013lovasz,catarata2017moser}. It is conjectured that $\Interior_{MT}(G_D) = \Interior_{v}(G_D)$ for any $G_D$~\cite{szegedy2013lovasz}.
    To prove this 
     conjecture, the first step is to give a positive answer to Problem 1.
    Moreover, due to the connection between Shearer's bound and the Repulsive Lattice Gas model, it is conjectured that \emph{essential connection exists between statistical mechanics and the MT algorithm}~\cite{szegedy2013lovasz}.
    Whether $\Interior_{MT}(G_D) = \Interior_{a}(G_D)$ for each $G_D$ is critical to this conjecture.
\end{itemize}



\begin{remark}
To explore the power of the MT algorithm in specific applications, one may employ 
special structures of the 
applications, such as the way the variables interact, to obtain sharp bounds rather than in terms of the canonical  
dependency graph only.
Nevertheless, characterizing the power of the MT algorithm in terms of the canonical 
dependency graph is a very fundamental problem and also the focus of the major line of researches~\cite{moser2010constructive,pegden2014extension,bissacot2011improvement,kolipaka}.
Moreover, a major difficulty to strengthen 
the guarantees of the MT algorithm is that the analysis should be valid for all possible variable-generated event systems.
It is not quite surprising to obtain better bounds 
if the event system has further restrictions.
To substantially improve 
the guarantees of the MT algorithm and provide deep insight about its dynamics,
we would rather focus on the general variable LLL setting than employ the special structures in the applications.
\end{remark}

We should emphasize that Problem 1 is still quite open! 
As mentioned before, it has been proved that 
the Shearer's bound is not tight for variable-generated systems \cite{he2017variable}.
However, this only says that there is some probability vector $\vec{p}$ beyond the Shearer's bound such that all variable-generated event systems $\EventSet \sim (G_D,\vec{p})$ must have a satisfying assignment. It is unclear whether the MT algorithm can construct such an assignment efficiently.

It also has been proved that the MT algorithm can still be efficient even beyond the Shearer’s bound \emph{for some specific applications}~\cite{harris2016lopsidependency}. 
Despite its novel contribution, this result does 
not provide an answer to Problem 1. The result in~\cite{harris2016lopsidependency} focuses on the event systems with special structures. 
Thus, it only implies that there is a probability vector $\vec{p}$ beyond the Shearer's bound such that the MT algorithm is efficient for \textbf{some restricted} variable-generated event systems $\EventSet\sim (G_D,\vec{p})$.
However, to show $\Interior_{MT}(G_D)\supsetneqq \Interior_a(G_D)$,
one must prove that the MT algorithm is efficient for \textbf{all possible} event systems,
and this is one \lqc{major difficulty to resolve}  
Problem 1.

\subsection{Results and contributions}
We provide a complete answer to Problem 1 \lqc{(Theorem \ref{thm-dep-graph-beyond})}: if $G_D$ is not chordal, then $\Interior_{MT}(G_D)\supsetneqq\Interior_a(G_D)$, i.e., the efficient region of the MT algorithm goes beyond Shearer's bound. Otherwise, $\Interior_{MT}(G_D)=\Interior_a(G_D)$, \lqc{because} 
$\Interior_a(G_D) \subseteq \Interior_{MT}(G_D)\subseteq \Interior_v(G_D)$ and $\Interior_v(G_D)=\Interior_a(G_D)$ for chordal graphs $G_D$  ~\cite{he2017variable}.

\lqc{The core of the proof of Theorem \ref{thm-dep-graph-beyond} is a new convergence criterion for the MT algorithm \lqc{(Theorem \ref{thm:main})}, which may be of independent interest. This new criterion takes the intersection between dependent events into consideration, and is strictly better than Shearer’s bound when there exists a pair of dependent events which are not mutually exclusive. 
}




 \subsubsection{Moser-Tardos algorithm: beyond Shearer's bound}

Given a dependency graph $G_D=([m],E_D)$ and a probability vector $\vec{p}=(p_1,p_2,\cdots,p_m)\in(0,1)^m$,
we say \lqc{that} $\vec{p}$ is on the Shearer's boundary of $G_D$ if $(1 - \eps)\vec{p}$ is in Shearer's bound
and $(1 + \eps)\vec{p}$ is not for any $\eps>0$. A chordless cycle in a graph $G_D$ is an induced cycle of length at least 4. A chordal graph is a graph without chordless cycles.


Given two vectors $\vec{p}$ and $\vec{q}$, we say $\vec{p}\leq \vec{q}$ if the inequality holds entry-wise. Additionally, if the
inequality is strict on at least one entry, we say that $\vec{p}< \vec{q}$.

\begin{definition}[Maximum $L_1$-gap to the Shearer's bound]\label{def-distance}
Given a dependency graph $G_D$ and a probability vector $\vec{p}$ beyond the Shearer's bound of $G_D$, 
define the maximum $L_1$-gap from $\vec{p}$ to the Shearer's bound of $G_D$ as
\[
d(\vec{p},G_D) \triangleq \arg \sup_{||\vec{q}||_1}\{\vec{p} - \vec{q} \not\in \Interior_a(G_D):\vec{q}\leq \vec{p}\}.
\]
For \lqc{convenience, }
we let $d(\vec{p},G_D) = -1$ if $\vec{p}$ is in the Shearer's bound of $G_D$.
\end{definition}
Intuitively, $d(\vec{p},G_D)$ \lqc{measures }
how far $\vec{p}$ is from the Shearer's bound of $G_D$.
One can verify that $d(\vec{p},G_D)<0$ if 
$\vec{p}$ is in the Shearer's bound, $d(\vec{p},G_D)=0$ if $\vec{p}$ is on the Shearer's boundary, and $d(\vec{p},G_D)>0$ if $\vec{p}$ is beyond Shearer's bound but not on the Shearer's boundary. 
\lqc{Now, we are ready to state our main result.}

 \begin{theorem}\label{thm-dep-graph-beyond}
 For any chordal graph $G_D$, $\Interior_{MT}(G_D)=\Interior_a(G_D)$, i.e., $\vec{p}\in \Interior_{MT}(G_D)$ iff $d(\vec{p},G_D)<0$.
 
 For any graph $G_D$ which is not chordal,  $\vec{p}\in \Interior_{MT}(G_D)$ if 
\begin{equation*}\label{eq-main}
 d(\vec{p},G_D) < \frac{1}{545}\cdot\sum_{i\leq \ell} |C_i|\big(\min_{j\in C_i} p_j\big)^4\cdot\left(\max\left\{\frac{2\sum_{j\in C_i} \sqrt{p_j}}{|C_i|}  - 1,0\right\}\right)^2
 \end{equation*}
for some disjoint chordless cycles  $C_1,C_2,\cdots,C_\ell$ in $G_D$.
In particular,
there is a probability vector $\vec{p}$ with  
$d(\vec{p},G_D)\geq 2^{-20} \locg^{-3}$ satisfying the above condition,  
where $\locg$ is the length the shortest chordless cycle.
This implies that $\Interior_{MT}(G_D)$ contains a probability vector $\vec{p}$ with $d(\vec{p},G_D)\geq 2^{-20} \locg^{-3}$.
 \end{theorem}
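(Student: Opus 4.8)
# Proof Proposal for Theorem \ref{thm-dep-graph-beyond}

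The plan is to split into the chordal and non-chordal cases. The chordal case is immediate from the sandwiching $\Interior_a(G_D) \subseteq \Interior_{MT}(G_D) \subseteq \Interior_v(G_D)$ together with the known identity $\Interior_v(G_D) = \Interior_a(G_D)$ for chordal $G_D$ from \cite{he2017variable}, so essentially all the work is in the non-chordal case. There the strategy is to reduce the global statement about $G_D$ to a statement about a single chordless cycle (or a disjoint union of them), and then to invoke the new convergence criterion (Theorem \ref{thm:main}). Concretely, I would first show that it suffices to construct, for each chordless cycle $C$ in $G_D$, a variable-generated event system on a cycle-shaped base graph whose intersection pattern between adjacent events is ``as favorable as possible'', so that the criterion of Theorem \ref{thm:main} certifies convergence at a probability vector $\vec{p}$ strictly beyond Shearer's bound. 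The ``disjoint cycles'' version then follows by taking a product/union construction over the $C_i$: resampling within one cycle's gadget does not affect the others, so the gains are additive in the way the right-hand side of \eqref{eq-main} expresses.

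The technical heart is the single-cycle analysis. Here I would take a chordless cycle $C = (v_1, \dots, v_k)$ with $k = |C| \geq 4$ and build a variable-generated system realizing $C$ as its canonical dependency graph in which each pair of adjacent events $A_{v_i}, A_{v_{i+1}}$ shares a variable and overlaps with probability as large as the marginals allow (roughly $\sqrt{p_{v_i} p_{v_{i+1}}}$, which explains the $\sqrt{p_j}$ terms and the threshold $\frac{2\sum \sqrt{p_j}}{|C|} - 1$ in the bound). I would then apply Theorem \ref{thm:main} to this gadget: the criterion, being strictly stronger than Shearer's bound exactly when dependent events are not mutually exclusive, yields a quantitative margin, and tracking the constants through that argument produces the $\frac{1}{545}|C|(\min_j p_j)^4(\cdots)^2$ expression. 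To get the ``in particular'' clause I would then choose $\vec{p}$ judiciously on the cycle — for instance uniform $p_j = p$ for a suitable $p = \Theta(1/k)$ near the cycle's Shearer threshold — compute a lower bound $d(\vec{p}, G_D) \geq 2^{-20}\locg^{-3}$ on how far this point lies beyond Shearer's bound (using known asymptotics of Shearer's bound on cycles, where the critical marginal scales like $1/k$ and the boundary has curvature of order $1/k$), and simultaneously check that this same $\vec{p}$ satisfies \eqref{eq-main}.

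The two quantitative estimates that must be made compatible are: (a) the \emph{lower} bound $d(\vec{p}, G_D) \geq 2^{-20}\locg^{-3}$, i.e. the chosen point is genuinely far beyond Shearer's bound; and (b) the \emph{upper} side of \eqref{eq-main}, i.e. that same distance is still within what the new criterion can certify. Making (a) and (b) hold at one point $\vec{p}$ is where the polynomial-in-$1/k$ scaling has to line up exactly, which is why the exponents ($\locg^{-3}$ on one side, $(\min p_j)^4 \approx k^{-4}$ times a squared threshold on the other) are what they are. I expect the main obstacle to be precisely this bookkeeping: bounding the Shearer gap on a cycle from below — this requires either an explicit formula or a careful perturbation estimate for $q_I$ on cycles — while keeping enough room that the (necessarily lossy) constants coming out of Theorem \ref{thm:main} do not swallow the gain. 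A secondary subtlety is verifying that the single-cycle gadget really has $C$ as its canonical dependency graph (no spurious extra dependencies, no collapse of edges) and that embedding it into $G_D$ by making all other events trivial (probability-$0$, or on fresh private variables) is legitimate for the definition of $\Interior_{MT}(G_D)$.
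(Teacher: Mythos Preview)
Your proposal contains a fundamental quantifier error that would make the argument prove the wrong statement. By definition, $\vec{p}\in\Interior_{MT}(G_D)$ means the Moser--Tardos algorithm is efficient for \emph{every} variable-generated event system $\EventSet\sim(G_D,\vec{p})$, not for \emph{some} favorable one. Your plan is to ``build a variable-generated system \ldots\ in which each pair of adjacent events \ldots\ overlaps with probability as large as the marginals allow'' and then apply Theorem~\ref{thm:main} to that gadget; later you speak of ``embedding it into $G_D$ by making all other events trivial''. This would at best show that there exists one instance at probability $\vec{p}$ on which MT converges --- a triviality, since the extremal instance is the hard case and your construction avoids it. The paper itself flags exactly this pitfall in the introduction (the discussion around ``one must prove that the MT algorithm is efficient for \textbf{all possible} event systems'').

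What the paper does instead is prove a \emph{universal lower bound on intersection}: Theorem~\ref{thm:lower-bound-intersection} (via Lemmas~\ref{lem:general2elementary} and~\ref{lemma:elementary}) shows that \emph{any} variable-generated system whose event--variable graph realizes a chordless cycle must have, along some matching, total squared intersection at least $(\lowerBB^+)^2$. This is the missing ingredient in your outline: one cannot choose the instance, so one must argue that every instance is forced to have enough overlap for Theorem~\ref{thm:main} to bite. Only after that does the paper feed the guaranteed $\vec{\delta}$ into Theorem~\ref{thm:main} and compare $\vec{p}^-$ against Shearer's bound (Lemma~\ref{lemma-beyond}). Your single-cycle computations about Shearer's boundary and the choice of $\vec{p}$ near $1/4$ are in the right spirit for the ``in particular'' clause (cf.\ Lemma~\ref{lemma-beyond-exist}), but they sit on top of the universal-intersection step, not in place of it.
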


The intuition of Theorem~\ref{thm-dep-graph-beyond} is as follows. 
The theorem characterizes the efficient region of the MT algorithm with $d(\vec{p},G_D)$. 
It shows that if $d(\vec{p},G_D)$ is upper bounded by a \emph{non-negative} quantity related to 
the chordless cycles in $G_D$,
then the MT algorithm is efficient. Since $\Interior_a(G_D)$ is the set of $\vec{p}$ where $d(\vec{p},G_D)<0$, our criterion is at least as good as Shearer's bound. Moreover, for each $G_D$ which is not chordal, our criterion is strictly better: there exists some $\vec{p}$ with $d(\vec{p},G_D)\geq  2^{-20} \locg^{-3}$ satisfying our criterion. 
Intuitively, Theorem~\ref{thm-dep-graph-beyond} implies that \emph{chordless cycles in $G_D$ enhance the power of the MT algorithm}.

We emphasize that Theorem~\ref{thm-dep-graph-beyond} provides a \emph{complete} answer to Problem 1: $\Interior_{MT}(G_D)$ properly contains $\Interior_a(G_D)$ if and only if $G_D$ is not chordal.


 

 \subsubsection{A new constructive LLL for non-extremal instances}\label{sec:intro_illl}
Given \lqc{a set $\EventSet$ of events} 
with dependency graph $G_D$,
$\EventSet$ is called \emph{extremal} if all pairs of  
dependent events are mutually exclusive, and \emph{non-extremal} otherwise.
\lqc{Kolipaka and Szegedy \cite{kolipaka} showed that the MT algorithm is efficient up to the Shearer's bound. In particular, Shearer's bound is the tight convergence criterion for extremal instances \cite{kolipaka,guo2019uniform}.}
\lqc{Here, we provide a new convergence criterion (Theorem \ref{thm:main}) which is a strict improvement of Kolipaka and Szegedy's result: this criterion is strictly better than Shearer's bound when the instance is non-extremal, and becomes Shearer's bound when the instance is extremal.}
This criterion, named \emph{intersection LLL}, is the core of our proof of Theorem~\ref{thm-dep-graph-beyond}.

Let $G_D=([m],E_D)$ be a canonical dependency graph and $\vec{p}=(p_1,\cdots,p_m)\in (0,1)^m$ be a probability vector. Let $\M=\{(i_1,i_1'),(i_2,i_2'),\cdots\} \subseteq E_D$ be a matching of $G_D$, and  $\vecdelta=(\delta_{i_1,i_1'},\delta_{i_2,i_2'},\cdots)\in(0,1)^{|\M|}$ be another probability vector.
 We say that an event set $\A$ is of the setting $(G_D,\vec{p},\M,\vecdelta)$, and write $\A\sim(G_D,\vec{p},\M,\vecdelta)$, if $\A\sim(G_D,\vec{p})$
 and $\Pr(A_i\cap A_{i'})\geq \delta_{i,i'}$ for each pair $(i,i')\in\M$.
 Given $(G_D,\vec{p},\M,\vecdelta)$, define $\vec{p}^{-}\in(0,1)^m$ as follows:
 \begin{align*}
 \forall i\in [m]:\quad
 p_{i}^-=\begin{cases}
 p_i-\frac{1}{17}\cdot \delta_{i,i'}^2, & \text{if } (i,i')\in\M \text{ for some } i';\\
 p_i, & \text{otherwise}.
 \end{cases}
 \end{align*}
 
 \begin{theorem}[intersection LLL (informal)]\label{thm:main}
 	For any $\A\sim(G_D,\vec{p},\M,\vecdelta)$, MT algorithm terminates quickly 
 	if $\vec{p}^-$ is in the Shearer's bound of $G_D$.
 \end{theorem}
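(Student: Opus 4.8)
The plan is to prove the intersection LLL (Theorem~\ref{thm:main}) by a reduction to Kolipaka--Szegedy's result that the MT algorithm converges quickly whenever the probability vector lies in Shearer's bound. Concretely, given an event system $\A\sim(G_D,\vec p,\M,\vecdelta)$, I would construct a coupled auxiliary event system $\A'$ on the same underlying variables, with the same dependency graph $G_D$, whose probability vector is dominated by $\vec p^-$, and then argue that any execution of the MT algorithm on $\A$ can be simulated by (or charged to) an execution on $\A'$ whose running time is thereby controlled. Since $\vec p^-$ is assumed to be in Shearer's bound of $G_D$, Kolipaka--Szegedy gives a polynomial bound on the number of resamplings for $\A'$, and the coupling transfers this bound to $\A$.

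The key technical device is to exploit the intersection $\Pr(A_i\cap A_{i'})\ge \delta_{i,i'}$ for matched pairs. The idea is that when two dependent events $A_i,A_{i'}$ overlap with probability at least $\delta_{i,i'}$, a resampling step that kills $A_i$ is ``more productive'' than the worst case Shearer analysis assumes, because it also tends to affect $A_{i'}$; in witness-tree / forest language this means certain witness structures that Shearer's bound has to account for simply cannot occur, so one can afford to shrink the probabilities of $A_i$ and $A_{i'}$ by $\tfrac1{17}\delta_{i,i'}^2$ each without breaking convergence. I would make this precise in the standard framework of \emph{stable set sequences} (the ``Tardos--Spencer'' resampling table / witness-forest counting used by Kolipaka--Szegedy): the expected number of times event $A_i$ is resampled is bounded by a sum over independent-set sequences of products of $p_j$'s, and I would show that the contribution of every such sequence is also bounded by the corresponding product with $p_j$ replaced by $p_j^-$, after accounting for a gain of order $\delta_{i,i'}^2$ whenever a matched pair appears consecutively or adjacently in the sequence. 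The constant $\tfrac1{17}$ (and later $\tfrac1{545}$ in Theorem~\ref{thm-dep-graph-beyond}) comes from making this charging argument quantitatively tight; I would not grind through those constants in the sketch.

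For the coupling I would build $\A'$ directly: for each matched pair $(i,i')$, split the ``bad region'' of $A_i$ into the overlap part $A_i\cap A_{i'}$ and the rest, and define $A_i'$ to be a sub-event of $A_i$ missing a slice of measure $\tfrac1{17}\delta_{i,i'}^2$ chosen to lie inside the overlap; do likewise for $A_{i'}$. One then checks that (a) $G_D$ is still a valid dependency graph for $\A'$ (the variable sets $\vbrl(A_i)$ are unchanged, so this is automatic in the variable setting), (b) $\Pr(A_i')\le p_i^-$, and (c) any run of MT on $\A$ in which $A_i$ is the selected violated event can be matched with a run on $\A'$: the subtle point is the positive-measure slice we removed, during which $A_i$ holds but $A_i'$ does not — here the overlap with $A_{i'}$ guarantees the ``missing'' resamplings on $A_i$ are absorbed by resamplings of $A_{i'}$. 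I expect this coupling/charging argument — showing that removing a $\delta^2$-sized slice from each matched event does not create any new long-lived witness structure, and doing so with an explicit small constant — to be the main obstacle; the reduction to Kolipaka--Szegedy and the derivation of Theorem~\ref{thm-dep-graph-beyond} from Theorem~\ref{thm:main} (choose a chordless cycle $C$, exhibit a variable-generated instance realizing large pairwise intersections along $C$, take $\M$ to be a near-perfect matching of $C$, and optimize $\vecdelta$) are comparatively routine once the core criterion is in hand.
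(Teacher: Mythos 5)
Your high-level goal (charge the run of the MT algorithm on $\A$ to a system with probabilities $\vec p^-$ and invoke Kolipaka--Szegedy) matches the paper's destination, but both routes you sketch founder on the same point, which the paper explicitly identifies as the central obstruction. The per-witness-structure claim is false: for any single proper wdag $D$ there are selection rules under which $\Pr[D\text{ occurs}]$ equals exactly $\prod_{v\in D}p_{L(v)}$, so no bound of the form $\prod_{v}p^-_{L(v)}$ can hold wdag-by-wdag, no matter how the matched pair sits inside the sequence. The gain exists only at the level of \emph{groups} of wdags: the paper groups proper wdags into classes $\D(i,r)$ in which at most one member can occur, observes that many members can nevertheless be simultaneously consistent with the resampling table, and then partitions the union $\bigvee_{D}(D\sim\vec X)$ among the members using $\M$-reversible arcs and an auxiliary table of fair coins (Lemmas \ref{compatible} and \ref{lemma-prob-pwdag}). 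Nothing in your sketch supplies this grouping-and-partitioning step, and without it the claim that every stable-set sequence's contribution drops by a factor accounting for $\delta_{i,i'}^2$ cannot be made true.

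The direct coupling you propose --- shrink each matched $A_i$ to $A_i'\subseteq A_i$ by deleting a slice of measure $\tfrac{1}{17}\delta_{i,i'}^2$ inside the overlap and simulate MT on $\A$ by MT on $\A'$ --- also does not close. When the current assignment lies in the deleted slice, $A_i$ is violated but $A_i'$ is not; you say the missing resampling is ``absorbed'' by $A_{i'}$, but resampling $A_{i'}$ consumes entries of the resampling table for $\vbl(A_{i'})$ rather than $\vbl(A_i)$, so the two executions draw different samples from that point on and cannot be matched step by step (nor is the number of steps of one obviously dominated by the other). The paper constructs no such coupled system; it works entirely with sums of weights of wdags, and even after obtaining the improved per-group bound --- which shrinks only the nodes participating in reversible arcs, not every node with a matched label --- it still needs a second nontrivial step, splitting each matched vertex into $i^\uparrow,i^\downarrow$ to form $G^\M$ and exhibiting an injection from pairs $(D,\Rem)$ into $\D(G^\M)$ (Theorems \ref{thm-injection} and \ref{thm-dag-mapping}), to convert that partial gain into the uniform replacement $p_i\mapsto p_i^-$. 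Your sketch treats this conversion as automatic, which it is not.
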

 
The intuition of Theorem~\ref{thm:main} is as follows.
For any matching $\M$ in $G_D$, if the intersection of events on each edge $(i,i')$ \lqc{in} 
$\M$ has a lower bound $\delta_{i,i'}$,
\lqc{then} one can subtract $\frac{1}{17}\cdot \delta^2_{i,i'}$ from the probabilities of endpoints $i$ and $i'$, and 
the MT algorithm is \lqc{guaranteed to be }
efficient \lqc{whenever }
the reduced probability vector is in the Shearer's bound.

\begin{remark}
In many applications of LLL~\cite{mcdiarmid1997hypergraph,gebauer2016local,gebauer2009lovasz,moitra2019approximate,giotis2017acyclic},
the dependent 
bad events naturally intersect with each other.
For instance, \lqc{in a CNF formula, if the common variables in two clauses are both either positive or negative, then} 
the bad events corresponding to these two clauses are dependent and
intersect with each other.
Thus our intersection LLL may be capable of improving bounds for these applications.
However, currently the improvement is weak because only the intersections between the matched events are considered in Theorem~\ref{thm:main}.

Nevertheless, the primary motivation of this work is to explore the power of 
the MT algorithm in the general \lqc{variable} LLL setting. 
This basic problem is very important in itself, besides its potential applications.
\end{remark}

\subsubsection{Application to lattices}
To illustrate the application of Theorem~\ref{thm-dep-graph-beyond}, we estimate the  efficient region of the MT algorithm on some lattices explicitly.
For simplicity, we focus on symmetric probabilities, i.e., $\vec{p}=(p,p,\cdots,p)$. Our lower bounds on the gaps between the efficient region of the MT algorithm and the Shearer's bound
are summarized in Table~\ref{table1}. For example, when the canonical dependency graph is the square lattice, the vector $(0.1193,0.1193,\cdots)$ is on the Shearer's boundary, and the MT algorithm is provably efficient whenever the probability of each event is at most $0.1193+1.858\times 10^{-22}$.  

\vspace{-2ex}
\begin{table}[H]
	\caption{Summary of lower bounds on the gaps}\label{table1}
	\vspace{-2ex}
	\begin{center}
		\begin{tabular}{ ccc }
			\hline
			Lattice & Shearer's bound & lower bound on the gaps\\
			\hline  \hline
			Square & 0.1193 ~\cite{gaunt1965hard,SYNGE1999TRANSFER} & $1.858\times 10^{-22}$\\
			Hexagonal & 0.1547 ~\cite{SYNGE1999TRANSFER}& $2.597\times 10^{-25}$\\
			Simple Cubic & 0.0744 ~\cite{gaunt1967hard} & $7.445\times 10^{-23}$\\
			\hline
		\end{tabular}
	\end{center}
\end{table}

\subsection{Technique overview}\label{sec:technique-overview}

As mentioned before, the Shearer's bound is the tight criterion for MT algorithm on extremal instances. Thus in order to show that MT algorithm goes beyond Shearer's bound, we need to take advantage of the intersection between dependent events. Specifically, Theorem \ref{thm-dep-graph-beyond} immediately follows from two results about non-extremal instances. One is the intersection LLL criterion (Theorem \ref{thm:main}), which goes beyond Shearer's bound whenever there are intersections between dependent events. The other result is a lower bound on the amount of intersection between dependent events for general instances (Theorem \ref{thm:lower-bound-intersection}).

\subsubsection{Proof overview of Theorem \ref{thm:main}}
 Let us first remember Kolipaka and Szegedy's argument \cite{kolipaka}, which shows that the MT algorithm is efficient up to the Shearer's bound. We assume that $\{A_i\}_{i=1}^m$ is a fixed set of events with dependency graph $G_D=([m],E_D)$ and probabilities $\vec{p}=(p_1,\cdots,p_m)$. The notion of a witness DAG\footnote{In the paper \cite{kolipaka}, the role of witness DAGs was played by ``stable set sequences", but the concepts are essentially the same: there is a natural bijection between stable set sequences and wdags.} (abbreviated wdag) 
 is central to their argument. A wdag is a DAG whose each node $v$ has a label $L(v)$ from $[m]$ and in which two nodes $v$ and $v'$ are connected by an arc if and only if $L(v)=L(v')$ or $(L(v),L(v'))\in E_D$.
 With a resampling sequence $\vec{s}=s_1,s_2,\cdots,s_T$ (i.e., MT algorithm picks the events $A_{s_1},A_{s_2},\cdots,A_{s_T}$ for resampling in this order), we associate a wdag $D_{\vec{s}}$ on node set $\{v_1,\cdots,v_T\}$ as follows: (a) $L(v_k)=s_k$ and (b) there is an arc from $v_k$  to $v_{\ell}$ with $k<\ell$ if and only if either $s_k=s_{\ell}$ or $(s_k,s_{\ell})\in E_D$ (see an example in Figure \ref{example_wdags}). We say that a wdag $D$ occurs in the resampling sequence $\vec{s}$ if there is subset $U$ of nodes in $D_{\vec{s}}$ such that $D$ is a subgraph of $D_{\vec{s}}$ induced by the nodes that have a directed path to $U$ (Figure \ref{example_wdags} (d) is an example, where $U=\{v_4\}$). An useful observation is that $\E[T]=\sum_{D\in\D}\Pr_{\vec{s}}[D\mbox{ occurs in }\vec{s}]$. Here, $\D$ denotes the set of all single-sink wdags (a.k.a. proper wdags) of $G_D$.    
 
 We define the weight of a wdag $D$ to be $\Pi_{v\in D}p_{L(v)}$. 
 The crucial lemma in Kolipaka and Szegedy's argument (the idea is from Moser-Tardos analysis) is that the probability of occurrence of a certain wdag $D$ is upper bounded by its weight. 
 The idea is that we can assume (only for
the analysis) that the MT algorithm has a preprocessing step where it prepares an infinite number of independent samples for each variable. These independent samples create a table $\vec{X}$, called \emph{the resampling table}  (see Figure \ref{figure:resample} in Section \ref{sec:31} for an example). When the MT algorithm decides to resample variable $X_j$, it picks a new sample of $X_j$ from the resampling table. Suppose a certain wdag $D$ occurs, then for each of its
events we can determine a particular set of samples in the resampling table that must satisfy the event, where we say that $D$ is consistent with the resampling table $\vec{X}$ and denote it by $D\sim \vec{X}$. Hence, $\Pr_{\vec{s}}[D\mbox{ occurs in }\vec{s}]\leq \Pr_{\vec{X}}[D\sim \vec{X}]=\Pi_{v\in D}p_{L(v)}$.
 
 Finally, they solved beautifully the summation of weights of proper wdags, i.e., $\sum_{D\in\D}\Pi_{v\in D}p_{L(v)}$, which turns out to converge if and only if $\vec{p}$ is in the Shearer's bound of $G_D$.


\begin{figure}[t]
\centering
\includegraphics[scale=0.61]{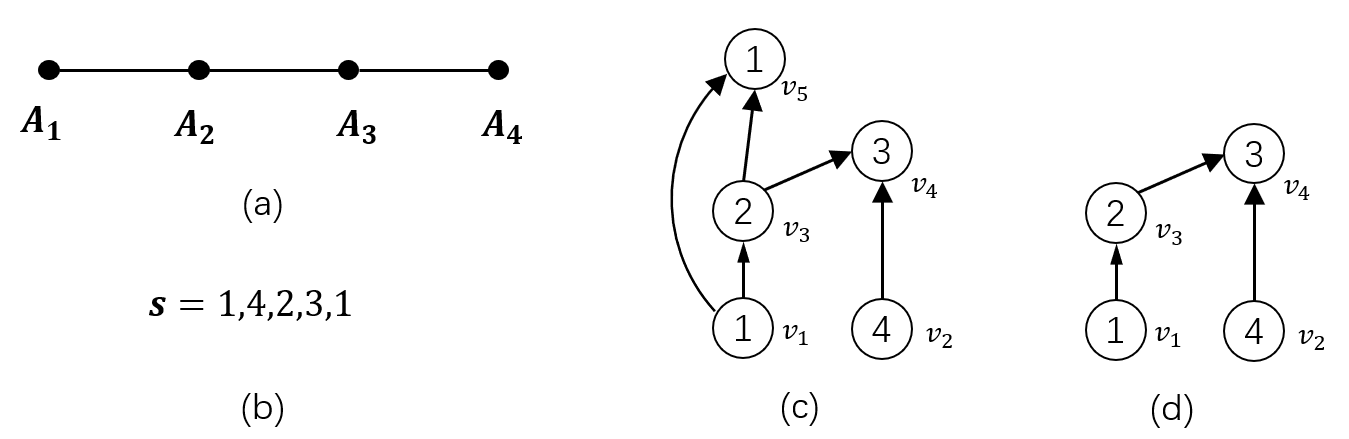}
\caption{(a) a dependency graph $G_D$; (b) a resample sequence; (c) the $D_s$; (d) a wdag occurring in $\vec{s}$.}
\label{example_wdags}
\end{figure}

Viewing Theorem \ref{thm:main} as an improvement of Kolipaka and Szegedy's result, we begin by providing a tighter upper bound on $\sum_{D\in\D}\Pr_{\vec{s}}[D\mbox{ occurs in }\vec{s}]$ when the instance is non-extremal (Theorem \ref{thm-ub-R}). 
First, note that for each wdag $D$, there exist selection rules to make $\Pr_{\vec{s}}[D \mbox{ occurs in }\vec{s}]=\Pi_{v\in D}p_{L(v)}$, so it is \emph{impossible} to give a better upper bound on $\Pr_{\vec{s}}[D \mbox{ occurs in }\vec{s}]$ which holds for all selection rules. Our idea is to group proper wdags, and consider the sum of $\Pr_{\vec{s}}[D \mbox{ occurs in }\vec{s}]$ over a group. 
For example, suppose that $A_1$ and $A_2$ are dependent and $\Pr[A_1\cap A_2]\geq \delta_{1,2}$. Let $D_1$ denote the proper wdag which consists of only one arc $A_1\rightarrow A_2$, and $D_2$ denote the proper wdag consisting of only $A_2\rightarrow A_1$. $D_1$ and $D_2$ cannot both occur, but they may be both consistent with a given resampling table. So the total weights of $D_1$ and $D_2$ is an overestimate of the probability that $D_1$ or $D_2$ occurs. Formally,
\begin{align*}
\Pr_{\vec{s}}[D_1\mbox{ occurs in }\vec{s}]+\Pr_{\vec{s}}[D_2\mbox{ occurs in }\vec{s}]=&\Pr_{\vec{s}}[(D_1\mbox{ occurs in }\vec{s}) \vee (D_2 \mbox{ occurs in }\vec{s})]\\
\leq &\Pr_{\vec{X}}[(D_1\sim \vec{X})\vee(D_2\sim \vec{X})]\\
=&\Pr_{\vec{X}}[D_1\sim \vec{X}]+\Pr_{\vec{X}}[(D_2\sim \vec{X})\wedge(D_1\not\sim \vec{X})]\\
\leq & p_1 p_2+p_1 p_2-\delta_{1,2}^2,
\end{align*}
where the last inequality is according to the Cauchy–Schwarz inequality (see Proposition \ref{prop-prob}). Importantly, the upper bound holds for all selection rules.

It is crucial as well as the difficulty that our improvement over the weight of wdags should be ``exponential": since the quantity $\sum_{D\in\D} \Pi_{v\in D}p_{L(v)}^-$ converges if and only if $\vec{p}^-$ is in the Shearer's bound, constant factor or even sub-exponential improvements over $\sum_{D\in\D} \Pi_{v\in D}p_{L(v)}$ do not help to show the desired convergence criterion. Our exponential improvement relies on a delicate grouping and a tricky random partition of the union of $D\sim \vec{X}$ across wdags.

We first state how we group proper wdags: 
define $\D(i,r)$ to be the set of proper wdags whose unique sink node is labelled with $i$ and in which there are exactly $r$ nodes labelled with $i$. 
Noticing that at most one wdag in $\D(i,r)$ can occur, we have that
\begin{align*}
\sum_{D\in\D(i,r)}\Pr_{\vec{s}}[D\mbox{ occurs in }\vec{s}]=&\Pr_{\vec{X}}\left[\bigvee_{D\in\D(i,r)}(D\mbox{ occurs})\right]\leq \Pr_{\vec{X}}\left[\bigvee_{D\in\D(i,r)}(D\sim \vec{X})\right].
\end{align*}

Now, we partition the space $\bigvee_{D\in\D(i,r)}(D\sim \vec{X})$ across wdags in $\D(i,r)$. The notions of \emph{reversible arcs} (see Definition \ref{def:reverse}) and a $\emph{auxiliary table}$ (see Section~\ref{sec:31}) are two central concepts here. Specifically, an arc $u\rightarrow v$ in a wdag $D$ is said reversible, if the directed graph obtained from $D$ by reversing the direction of $u\rightarrow v$ is also a wdag. 
The auxiliary table is a table $\vec{Y}$ of independent fair coins corresponding to directions of reversible arcs. We say a wdag $D$ is consistent with $(\vec{X},\vec{Y})$, denoted by $D\sim (\vec{X},\vec{Y})$ if (i) $D\sim \vec{X}$; and (ii) for each reversible arc whose direction is \emph{not} consistent with $\vec{Y}$, the wdag obtained by reversing the arc is \emph{not} consistent with $\vec{X}$. The crucial lemma (Lemma \ref{compatible}) 
shows that for any certain assignment $\vec{y}$ of the auxiliary table $\vec{Y}$, 
$\bigvee_{D\in\D(i,r)}(D\sim \vec{X})=\bigvee_{D\in\D(i,r)}(D\sim (\vec{X},\vec{y}))$. 
The point is that $(D\sim (\vec{X},\vec{y}))$'s have much less overlap with each other so that they can be viewed as a ``approximate" partition of the space. By applying union bound, we get  
\begin{align*}
\Pr_{\vec{X}}\left[\bigvee_{D\in\D(i,r)}(D\sim \vec{X})\right]=&\E_{\vec{Y}}\Pr_{\vec{X}}\left[\bigvee_{D\in\D(i,r)}(D\sim\vec{X})\right]= \E_{\vec{Y}}\Pr_{\vec{X}}\left[\bigvee_{D\in\D(i,r)}(D\sim(\vec{X},\vec{Y})\right]\\
\leq&\E_{\vec{Y}}\sum_{D\in\D(i,r)}\Pr_{\vec{X}}\left[D\sim( \vec{X},\vec{Y})\right]\\
=& \sum_{D\in\D(i,r)}\E_{\vec{Y}}\Pr_{\vec{X}}\left[D\sim( \vec{X},\vec{Y})\right].
\end{align*}
Then we are able to provide an upper bound on $\E_{\vec{Y}}\Pr_{\vec{X}}\left[D\sim( \vec{X},\vec{Y})\right]$ which is ``exponentially" smaller than $\Pi_{v\in D}p_{L(v)}$ (Lemma \ref{lemma-prob-pwdag}), and then complete the proof of Theorem \ref{thm-ub-R}.

The next step is to show that the tighter upper bound converges when $\vec{p}^-$ is in the Shearer's bound. 
For each vertex $i$ in the matching $\M$, we ``split"  vertex $i$ into two new connected vertices $i^\uparrow$ and $i^\downarrow$.
Let $G^{\M}$ be the resulted dependency graph (see an example in Figure \ref{example_homograph}).
Define $p^{\M}_{i^\uparrow} = p'_i$ and $p^{\M}_{i^\downarrow} = p^-_i - p'_i$ (see the definition of $p_i'$ in Section \ref{sec:23}).
One can see that $(G_D,\vec{p}^-)$ and $(G^\M,\vec{p}^\M)$ are essentially the same: suppose $\A\sim(G_D,\vec{p}^-)$, then for each $i\in \M$, we view $A_{i}$ as the union of two mutually exclusive events $A_{i^{\uparrow}}$ and $A_{i^{\downarrow}}$ whose probabilities are $p_i'$ and $p^--p_i'$ respectively. Such a representation of $\A$ is of the setting $(G^{\M},\vec{p}^{\M})$.
Thus, the sum of weights of proper wdags in the setting $(G_D,\vec{p}^-)$ is equal to that in the setting $(G^\M,\vec{p}^\M)$ (Proposition~\ref{obs-mapping-dag}). So it suffices to show that our tighter upper bound is upper bounded by the sum of weights of proper wdags in the setting $(G^\M,\vec{p}^\M)$ (Theorem \ref{thm-dag-mapping}). Our idea is to construct a mapping  which maps each $D\in\D(G_D)$ to a subset of $\D(G^\M)$ and satisfies that: 
\begin{itemize}[leftmargin=16pt]
	\item[(a)] distinct proper wdags of $G_D$ are mapped to disjoint subsets of $\D(G^\M)$; and 
	\item[(b)] for each $D\in\D(G_D)$, the bound in Lemma \ref{lemma-prob-pwdag} is upper bounded by the sum of weights of proper wdags over the subset that $D$ is mapped to.
\end{itemize} 
We present such a mapping in Definition \ref{def:h}. Conditions (a) and (b) are verified in Theorem \ref{thm-injection} and Theorem \ref{thm-dag-mapping} respectively. 

The idea of constructing a mapping between wdags of two dependency graphs may be of independent interest, and may be applied elsewhere when we wish to show some properties about Shearer's bound.

\subsubsection{Proof overview of Theorem \ref{thm:lower-bound-intersection}}

The proof of Theorem~\ref{thm:lower-bound-intersection} mainly consists of two parts. First, we show that there is an elementary event set which approximately achieves the minimum amount of the intersection between dependent events (Lemma \ref{lem:general2elementary}). Here, we call an event $A_i\in \A$ elementary, if there is a subset $S_j^i$ of the domain of variable $X_j$ for each variable in $\vbrl(A)$ such that $A$ happens if and only if 
$X_j \in S_j^i$ for all variables in $\vbrl(A)$.
We call a set $\A$ of events elementary if every $A_i\in\A$ is elementary. Then, for elementary event sets, by applying AM-GM inequality, we obtain a lower bound on the total amount of overlap on common variables, which further implies a lower bound on the amount of intersection between dependent events (Lemma \ref{lemma:elementary}).

\subsection{Related works}
Beck proposed the first constructive LLL, 
which provides efficient algorithms for finding the perfect object 
avoiding all ``bad" events~\cite{beck1991algorithmic}.
His methods were refined and improved by a long line of research~\cite{alon1991parallel,molloy1998further,czumaj2000coloring,haeupler2011new}. 
In a groundbreaking work, 
Moser and Tardos proposed a new algorithm, i.e., Algorithm~\ref{alg:mt},
and proved that it finds such a perfect object under the condition in Theorem~\ref{thm:asymmetric}
in the variable setting~\cite{moser2010constructive}.
Pegden \cite{pegden2014extension} proved that the MT algorithm efficiently converges even under the condition of the cluster expansion local lemma~\cite{bissacot2011improvement}.
Kolipaka and Szegedy ~\cite{kolipaka} pushed the efficient region to Shearer's bound. The phenomenon that the MT algorithm can still be efficient beyond Shearer's bound was known to exist \emph{for sporadic and toy examples} \cite{harris2016lopsidependency}.
However, such result employs the special structures in the examples and only applies to 
\textbf{some restricted} variable-generated event systems $\EventSet\sim (G_D,\vec{p})$.
By contrast, the results in this work applies to \textbf{all} variable-generated event systems.

Besides the line of research exploring the efficient region of the MT algorithm, there is a large amount of effort devoted to derandomizing or parallelizing the MT algorithm \cite{moser2010constructive,determ1,determ2,BrandtFHKLRSU16,Ghaffari16,ChungPS17,HaeuplerH17,Harris18} and to extending the Moser-Tardos techniques beyond the variable setting  \cite{HarrisS14a,AchlioptasI16,HarveyV20,AchlioptasIV17,AchlioptasIK19,Molloy19,IliopoulosS20,AchlioptasIS19}.


There is a line of works studying the gap between non-constructive VLLL and Shearer's bound \cite{kolipaka,he2017variable,Andras2017variable,stoc19}. Kolipaka and Szegedy \cite{kolipaka} obtained the first example of gap existence where the canonical dependency graph is a cycle of length 4.  The paper \cite{he2017variable} showed that Shearer's bound is not tight for VLLL. More precisely, Shearer's bound is tight for non-constructive VLLL if and only if the canonical dependency graph is chordal. The first paper to study quantitatively the gaps systematically is \cite{stoc19}, which provides lower bounds on the gap when the canonical dependency graph containing many chordless cycles. 

Erd{\"{o}}s and Spencer \cite{ErdosS91} introduced the lopsided-LLL, which extends the results in \cite{erdos1975problems} to lopsidependency graphs. 
Lopsided LLL has many interesting applications in combinatorics and theoretical computer science,
such as the $k$-SAT~\cite{gebauer2016local}, random permutations~\cite{lu2007using}, Hamiltonian cycles~\cite{albert1995multicoloured}, and matchings on the complete graph~\cite{lu2009new}.
Shearer's bound is also the tight condition for the lopsided LLL~\cite{shearer1985problem}.

LLL has a strong connection to sampling. 
Guo, Jerrum and Liu~\cite{guo2019uniform} proved that the MT algorithm indeed uniformly samples a perfect object if the instance is extremal. 
For extremal instances, they developed an algorithm called ``partial rejection sampling''
which resamples in a parallel fashion, 
since the occurring bad events form an
independent set in the dependency graph.
Actually, a series of sampling algorithms for specific problems
are the parallel resampling algorithm
running in the extremal case~\cite{guo2019uniform,guo2019polynomial,guo2020tight,guo2018approximately}.
In a celebrated work, Moitra~\cite{Moi19}
introduced a novel approach that utilizes LLL to sample $k$-CNF solutions.
This approach was then extended by several works ~\cite{guo2019counting,galanis2019counting,feng2020fast,feng2020sampling,Vishesh20towards,Vishesh21sampling}.

\subsection{Organization of the paper.} In Section \ref{sec:preliminaries}, we recall and introduce some definitions and notations. In Section \ref{sec:proof}, we prove Theorem~\ref{thm:main}. 
Section \ref{sec:lower-bound-intersection} is about the proof of Theorem~\ref{thm:lower-bound-intersection}, which gives a lower bound on the amount of the intersection between dependent events. 
In Section \ref{sec:beyond-shearer}, we prove Theorem \ref{thm-dep-graph-beyond}.  
%
In Section \ref{sec:lattice}, we provide a explicit lower bound for the gaps between the efficient region of MT algorithm and Shearer's bound on periodic Euclidean graphs.
%

\section{Preliminaries}\label{sec:preliminaries}
	
Let $\mathbb{N}=\{0,1,2,\cdots\}$ denote the set of non-negative integers. Let $\mathbb{N}^+=\{1,2,\cdots\}$ denote the set positive integers. For $m\in \mathbb{N}^+$,  we define $[m]=\{1,\cdots,m\}$.
Throughout this section, we fix a canonical dependency graph $G_D=([m],E_D)$.

\subsection{Witness DAG}

If for a given run, MT algorithm picks the events 
	$A_{s_1}, A_{s_2} ,...,A_{s_T}$
	for resampling in this order, we say that $\vec{s} = s_1,s_2 ...,s_T$
	is \emph{a resample sequence}. If the algorithm never finishes, the
	resample sequence is infinite, and in this case we set $T = \infty$.

\begin{definition}[Witness DAG]\label{def:wdag}
	We define a witness DAG (abbreviated wdag) of $G_D$ to be a DAG $D$, in which each node $v$ has a label $L(v)$ from $[m]$, and which satisfies the additional condition that for all distinct nodes $v,v'\in D$ there is an arc between $v$ and
	$v'$ (in either direction) if and only if $L(v)=L(v')$ or $\big(L(v),L(v')\big)\in E_D$.
	
	We say $D$ is a \emph{proper} wdag (abbreviated pwdag) if $D$ has only one sink node. Let $\mathcal{D}(G_D)$ denote the set of pwdags of $G_D$.
\end{definition}

Given a resampling sequence $\vec{s} = s_1, s_2,...,s_T$, we associate a wdag $D_{\vec{s}}$ on the node set $\{v_1,...,v_T\}$ such that (i) $L(v_k) = s_k$ and (ii) $v_k \rightarrow v_\ell$ with $k<\ell$ is as an arc of $D_{\vec{s}}$ if and only if either $s_k = s_\ell$ or $(s_k,s_\ell) \in E_D$. See Figure \ref{example_wdags} for an example of $D_{\vec{s}}$.

Given a wdag $D$ and a set $U$ of nodes of $D$, we define $D(U)$ to be the induced subgraph on all nodes which has a directed
path to some $u\in U$. Note that $D(U)$ is also a wdag. We say that $H$ is a prefix of $D$, denoted by
$H\unlhd D$, if $H = D(U)$ for some node set $U$. 

\begin{definition}
	We say a wdag $D$ occurs in a resampling sequence $\vec{s}$ if $D\unlhd D_{\vec{s}}$.  Let $\chi_D$ be the indicator variable of the event that $D$ occurs in $\vec{s}$.
\end{definition}

Similar to Lemma 12 in \cite{kolipaka}, we have that $T = \sum_{D \in \mathcal{D}(G_D)} \chi_D$. For $i\in[m]$ and $r\in\mathbb{N}^+$, define $\D(i,r)$ to be the set of pwdags whose unique sink node is labelled with $i$ and in which there are exactly $r$ nodes labelled with $i$. Let $\chi_{\D(i,r)}$ be the indicator variable
of the event that there is a $D\in \D(i,r)$ occurring in $\vec{s}$. It is easy to see that only one pwdag in $\D(i,r)$ can occur in $\vec{s}$. Thus $\chi_{\D(i,r)}=\sum_{D\in \D(i,r)}\chi_D$, which further implies that 
\begin{fact}\label{lem:T-sum}
	$T = \sum_{i\in[m]}\sum_{r\in\mathbb{N}^+}\chi_{\D(i,r)}$.
\end{fact}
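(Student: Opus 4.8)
\textbf{Proof proposal for Fact~\ref{lem:T-sum}.}

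The plan is to read off Fact~\ref{lem:T-sum} from the two ingredients already assembled above: the identity $T=\sum_{D\in\mathcal{D}(G_D)}\chi_D$ (the analogue of Lemma~12 of \cite{kolipaka} recalled in the text) together with the observation that at most one pwdag of each class $\D(i,r)$ can occur in a fixed resample sequence. The only genuine content is that the classes $\D(i,r)$ form a partition of $\mathcal{D}(G_D)$, so that regrouping $\sum_{D}\chi_D$ class by class is legitimate.

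First I would check the partition claim. Let $D\in\mathcal{D}(G_D)$; since $D$ is a pwdag it has a unique sink node, so we may set $i:=L(\text{sink of }D)$ and let $r$ be the number of nodes of $D$ labelled $i$. The sink is one such node, hence $r\geq 1$, and the pair $(i,r)$ is determined by $D$, so $D$ lies in exactly one class $\D(i,r)$. Thus $\mathcal{D}(G_D)=\bigsqcup_{i\in[m]}\bigsqcup_{r\in\mathbb{N}^+}\D(i,r)$.

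Next I would justify $\chi_{\D(i,r)}=\sum_{D\in\D(i,r)}\chi_D$, that is, that no two distinct members of $\D(i,r)$ occur in the same $\vec{s}$. If $D\unlhd D_{\vec{s}}$, then since any prefix $H\unlhd D_{\vec{s}}$ with a unique sink $v$ equals $D_{\vec{s}}(\{v\})$, we have $D=D_{\vec{s}}(\{v_k\})$ with $v_k$ its sink. The nodes of $D_{\vec{s}}$ labelled $i$ are pairwise joined by arcs and hence form a chain $v_{k_1},v_{k_2},\dots$ in resample order, and $D_{\vec{s}}(\{v_{k_t}\})$ contains exactly the first $t$ of them; so requiring a pwdag in $\D(i,r)$ to occur forces its sink to be $v_{k_r}$, pinning it down uniquely. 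Hence the indicator that some $D\in\D(i,r)$ occurs is the sum of the individual indicators.

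Assembling the pieces yields
\[
T=\sum_{D\in\mathcal{D}(G_D)}\chi_D=\sum_{i\in[m]}\sum_{r\in\mathbb{N}^+}\sum_{D\in\D(i,r)}\chi_D=\sum_{i\in[m]}\sum_{r\in\mathbb{N}^+}\chi_{\D(i,r)},
\]
where the middle equality rearranges the sum along the partition and the last uses the previous paragraph. There is no real obstacle here; the one point meriting a sentence of care is the uniqueness in the third paragraph — that a given run cannot realise two different pwdags sharing both the sink label $i$ and the count $r$ — which is exactly where the chain structure of the $i$-labelled nodes of $D_{\vec{s}}$ is used.
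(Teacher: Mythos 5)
Your proposal is correct and follows exactly the paper's route: start from $T=\sum_{D\in\mathcal{D}(G_D)}\chi_D$, note that the classes $\D(i,r)$ partition $\mathcal{D}(G_D)$ and that at most one pwdag per class occurs (which the paper dismisses as "easy to see"), and regroup. Your third paragraph merely fills in the uniqueness detail — the chain structure of the $i$-labelled nodes forcing the sink to be the $r$-th such node — which is a welcome but equivalent elaboration of the paper's one-line justification.
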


\subsection{Reversible arc}
In the rest of this section, we fix a matching $\M\subseteq E_D$ of $G_D$. Given $i\in [m]$, with a slight abuse of notation, we sometimes say $i\in \M$ if there is some $i'\in [m]$ such that $(i,i')\in \M$.
\begin{definition}[Reversibility]\label{def:reverse}
We say that an arc $u\rightarrow v$ is reversible in a wdag $D$ if the directed graph obtained from $D$ by reversing the direction of the arc is still a DAG.

Furthermore, we say that $u\rightarrow v$ is $\M$-reversible in $D$ if $u\rightarrow v$ is reversible in $D$ and $(L(u),L(v)) \in \M$.
\end{definition}
By definition, we have the following two observations. 
\begin{fact}\label{claim-exchangeable}
$u\rightarrow v$ is reversible in $D$ if and only if it is the unique path from $u$ to $v$ in $D$.
\end{fact}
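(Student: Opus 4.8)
The plan is to prove both directions of the equivalence by characterizing when reversing an arc $u\rightarrow v$ keeps the digraph acyclic. Recall that in a wdag $D$, whenever two nodes carry equal or adjacent labels there is necessarily an arc between them in one direction or the other; this is what forces $u$ and $v$ to be comparable. The key observation is: reversing $u\rightarrow v$ creates a directed cycle if and only if there is, in $D$, a directed path from $u$ to $v$ that does not use the arc $u\rightarrow v$ itself (together with the reversed arc $v\rightarrow u$, such a path closes a cycle). So the digraph obtained by reversal is a DAG precisely when the arc $u\rightarrow v$ is the \emph{only} path from $u$ to $v$ in $D$.

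First I would prove the ``only if'' direction by contraposition: suppose $u\rightarrow v$ is not the unique $u$-to-$v$ path in $D$. Since $u\rightarrow v$ is itself a path, there must be another directed path $P$ from $u$ to $v$; because $P \neq (u\rightarrow v)$ and both start at $u$ and end at $v$, $P$ must pass through at least one intermediate node, so $P$ does not traverse the arc $u\rightarrow v$. Then $P$ followed by the reversed arc $v\rightarrow u$ is a directed closed walk, hence contains a directed cycle, so the reversal is not a DAG and $u\rightarrow v$ is not reversible. Conversely, for the ``if'' direction, suppose $u\rightarrow v$ is the unique $u$-to-$v$ path. If reversing the arc produced a cycle $C$, that cycle must use the new arc $v\rightarrow u$ (since $D$ minus $u\rightarrow v$ is acyclic, being a subgraph of the DAG $D$), so $C$ minus $v\rightarrow u$ is a directed path from $u$ to $v$ in $D$ not using $u\rightarrow v$; this contradicts uniqueness. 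Hence no cycle is created and the arc is reversible.

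The main subtlety — and the only place I would be careful — is the claim that $D\setminus\{u\rightarrow v\}$ is acyclic, which is immediate since it is a subdigraph of the DAG $D$, and the claim that any second $u$-to-$v$ path is genuinely distinct from the arc (i.e., has length $\geq 2$), which follows because a path is a simple directed walk, so a path of length $1$ from $u$ to $v$ can only be the arc $u\rightarrow v$ itself. I do not expect any real obstacle here; the statement is essentially the standard fact that reversing an edge in a DAG preserves acyclicity iff that edge is a ``transitive-reduction'' edge (the only path between its endpoints), specialized to the wdag setting where the wdag structure is not even needed for this particular fact — only the DAG property is used.
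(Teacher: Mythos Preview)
Your proposal is correct and matches the paper's own argument essentially line for line: both directions hinge on the observation that any cycle in the reversed digraph must contain the new arc $v\rightarrow u$, so the remainder of the cycle furnishes a second $u$-to-$v$ path in $D$, and conversely any second $u$-to-$v$ path together with $v\rightarrow u$ closes a cycle. The only cosmetic difference is which direction you phrase as contraposition versus contradiction; the content is identical, and as you note, only the DAG property (not the wdag labelling structure) is used.
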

\begin{fact}\label{claim-exchangeable2}
If $u\rightarrow v$ is reversible in a wdag $D$ of $G_D$, then the directed graph obtained from $D$ by reversing the direction of $u\rightarrow v$ is also a wdag of $G_D$.
\end{fact}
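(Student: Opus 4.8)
The plan is to check directly that the directed graph $D'$ obtained from $D$ by reversing the arc $u\to v$ satisfies the two conditions in Definition~\ref{def:wdag} for being a wdag of $G_D$: namely, that $D'$ is a DAG, and that for every pair of distinct nodes $w,w'$ there is an arc between them (in either direction) if and only if $L(w)=L(w')$ or $(L(w),L(w'))\in E_D$.

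The first condition is immediate: by Definition~\ref{def:reverse}, saying that $u\to v$ is reversible in $D$ means precisely that the graph obtained by reversing it is still acyclic, i.e. $D'$ is a DAG. (Alternatively one could route this through Fact~\ref{claim-exchangeable}, noting that $u\to v$ being reversible is equivalent to it being the unique directed path from $u$ to $v$, which is exactly what prevents the reversed arc from closing a cycle; but this detour is unnecessary.)

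The second condition follows from the observation that reversing a single arc changes neither the node set, nor the labelling $L$, nor the underlying undirected graph of $D$. Since $D$ is a DAG, the pair $\{u,v\}$ carries exactly one arc in $D$, so after replacing $u\to v$ by $v\to u$ it still carries exactly one arc in $D'$, and every other pair of nodes is left untouched. Hence the collection of unordered pairs of nodes joined by an arc is the same in $D$ and in $D'$, and the labels coincide. Because the adjacency condition in Definition~\ref{def:wdag} is phrased purely in terms of this undirected adjacency and the labels, and it holds for $D$ by hypothesis, it holds verbatim for $D'$. Combining the two points, $D'$ is a wdag of $G_D$.

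I do not expect any genuine obstacle here: the entire content of the fact is the recognition that the property ``wdag of $G_D$'' is insensitive to the orientation of arcs beyond acyclicity, and that reversibility was defined exactly so as to preserve acyclicity of the reversed graph.
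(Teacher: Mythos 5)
Your proof is correct and follows essentially the same route as the paper's own (brief) argument: acyclicity of the reversed graph is immediate from the definition of reversibility, and the wdag adjacency condition is preserved because reversing one arc changes neither the labels nor the underlying undirected adjacency. No issues.
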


Given a pwdag $D = (V,E,L)$, define 
\[
\mathcal{V}(D) \triangleq \{v: \exists u\in V 
\text{ such that $u\rightarrow v$ or $v\rightarrow u$ is $\M$-reversible in $D$}\}
\]
to be the set of nodes participating in reversible arcs, and $\overline{\mathcal{V}} (D) \triangleq V \setminus \mathcal{V}(D)$. For $i\in[m]$, define $\mathcal{V}(D,i)\triangleq \mathcal{V}(D)\cap\{v:L(v)=i\}$.

\subsection{Other notations}\label{sec:23}
Let $\vec{p}=(p_1,\cdots,p_m)\in (0,1]^m$ and $\vecdelta\in(0,1)^{\M}$ be two probability vectors. Recall that $\vec{p}^-=(p_1^-,\cdots,p_m^-)$ is defined as
\begin{align}
\forall i\in [m]:\quad
p_{i}^-=\begin{cases}
p_i-\frac{\delta_{i,i'}^2}{17} & \text{if } (i,i')\in\M \text{ for some } i',\\
p_i & \text{otherwise}.
\end{cases}
\label{eq-pminus}
\end{align}
For each $i\in [m]$ where $(i,i')\in\M$ for some $i'\in [m]$, define
\[
c_i\triangleq \frac{\delta_{i,i'}^2}{8p_ip_{i'}}\quad \quad \quad \text{ and } \quad \quad  \quad p_i'\triangleq p_i(1-c_i)=p_i -\frac{\delta_{i,i'}^2}{8p_{i'}}.
\]
\begin{fact}\label{fact:210}
$p^{-}_{i} + p^{-}_{i'}(p^{-}_{i} - p'_{i})\geq p_i$ for each $(i,i')\in\M$.
\end{fact}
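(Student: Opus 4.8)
The plan is to reduce Fact~\ref{fact:210} to a single elementary inequality in the two real parameters $p_{i'}\in(0,1]$ and $\delta_{i,i'}\in(0,1)$, and then to close that inequality using the one structural input $\delta_{i,i'}^2\le p_ip_{i'}$. This input is available because in any event set $\A\sim(G_D,\vec p,\M,\vecdelta)$ one has $\delta_{i,i'}\le\Pr(A_i\cap A_{i'})\le\min\{p_i,p_{i'}\}$ (and if no such $\A$ exists there is nothing to prove); since moreover $p_i\le1$, this gives $\delta_{i,i'}^2\le p_ip_{i'}\le p_{i'}$, which is the only bound on $\delta_{i,i'}$ the argument will use.

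First I would make $p_i^--p_i'$ explicit. From $p_i^-=p_i-\delta_{i,i'}^2/17$ and $p_i'=p_i-\delta_{i,i'}^2/(8p_{i'})$ one gets
\[
p_i^--p_i'=\delta_{i,i'}^2\Bigl(\frac{1}{8p_{i'}}-\frac{1}{17}\Bigr)=\delta_{i,i'}^2\cdot\frac{17-8p_{i'}}{136\,p_{i'}},
\]
and this is positive since $p_{i'}\le1<17/8$. Because $p_i^-=p_i-\delta_{i,i'}^2/17$, the target inequality $p_i^-+p_{i'}^-(p_i^--p_i')\ge p_i$ is equivalent to $p_{i'}^-\,(p_i^--p_i')\ge\delta_{i,i'}^2/17$. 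Plugging in the displayed expression together with $p_{i'}^-=p_{i'}-\delta_{i,i'}^2/17$, and then multiplying through by the positive quantity $136\,p_{i'}/\delta_{i,i'}^2$ (note $136/17=8$), the goal becomes the algebraic inequality
\[
\Bigl(p_{i'}-\frac{\delta_{i,i'}^2}{17}\Bigr)\bigl(17-8p_{i'}\bigr)\ \ge\ 8p_{i'},
\]
which, after expanding and cancelling $8p_{i'}$ from both sides, is exactly $9p_{i'}-8p_{i'}^2-\delta_{i,i'}^2\bigl(1-\tfrac{8p_{i'}}{17}\bigr)\ge0$.

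Finally I would close this using $\delta_{i,i'}^2\le p_{i'}$ and $1-\tfrac{8p_{i'}}{17}>0$ (the latter because $p_{i'}\le1$): then $\delta_{i,i'}^2\bigl(1-\tfrac{8p_{i'}}{17}\bigr)\le p_{i'}\bigl(1-\tfrac{8p_{i'}}{17}\bigr)$, so the left-hand side above is at least
\[
9p_{i'}-8p_{i'}^2-p_{i'}\Bigl(1-\tfrac{8p_{i'}}{17}\Bigr)=8p_{i'}\Bigl(1-\tfrac{16p_{i'}}{17}\Bigr)\ \ge\ \tfrac{8}{17}\,p_{i'}\ >\ 0,
\]
where the middle inequality again uses $p_{i'}\le1$. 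This establishes Fact~\ref{fact:210} (in fact with a strict margin).

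I do not foresee a genuine obstacle: this is a self-contained numerical fact whose only content is keeping track of the constants $8$, $17$ and $136$. The one step that needs care is which bound on $\delta_{i,i'}$ to invoke — the trivial $\delta_{i,i'}<1$ does not suffice (for $p_{i'}$ small and $\delta_{i,i'}$ near $1$ the inequality $9p_{i'}-8p_{i'}^2-\delta_{i,i'}^2(1-\tfrac{8p_{i'}}{17})\ge0$ genuinely fails), so one must use $\delta_{i,i'}^2\le p_ip_{i'}$ and then $p_i\le1$ to reach $\delta_{i,i'}^2\le p_{i'}$. Beyond that, every multiplier used ($136\,p_{i'}$, $17-8p_{i'}$, $1-\tfrac{8p_{i'}}{17}$, $\delta_{i,i'}^2$) is manifestly positive on the relevant range, so the chain of equivalences needs no case split.
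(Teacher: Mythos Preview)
Your proof is correct. The paper states Fact~\ref{fact:210} without proof, so there is nothing to compare against; your direct computation is exactly the kind of verification the authors left to the reader. You also correctly flag the one subtle point: as literally stated in Section~\ref{sec:23} with arbitrary $\vecdelta\in(0,1)^{\M}$, the inequality can fail, and one needs the constraint $\delta_{i,i'}\le\min\{p_i,p_{i'}\}$ (hence $\delta_{i,i'}^2\le p_{i'}$) that is implicit whenever an event set $\A\sim(G_D,\vec p,\M,\vecdelta)$ actually exists --- which is the only context in which the fact is ever invoked.
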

\newpage
\section{Proof of Theorem \ref{thm:main}}\label{sec:proof}
The proof of Theorem \ref{thm:main} consists of two parts. First, we provide a tighter upper bound on the complexity of MT algorithm (Section \ref{sec:31}). Then, we show that the tighter upper bound converges if $\vec{p}^-$ is in the Shearer's bound of $G_D$ (Section \ref{sec-mapping}).

\subsection{A tighter upper bound on the complexity of MT algorithm}\label{sec:31}
In this subsection, we prove Theorem \ref{thm-ub-R}, which follows from Lemma \ref{compatible} and Lemma \ref{lemma-prob-pwdag} immediately. We first recall and introduce some concepts and notations.

\vspace{1ex}
\noindent\textit{Resampling Table}.  One key analytical technique of Moser and Tardos \cite{moser2010constructive} is to precompute the randomness in a resampling table $\vec{X}$.
Specifically, we can assume (only for
the analysis) that MT algorithm has a preprocessing step where
it draws an infinite number of independent samples $X_{j}^{1}, X_{j}^{2},\cdots$ for each variable $X_j$. These independent samples create a table $\vec{X}=(X_{j}^k)_{j\in[m],k\in\mathbb{N}^+}$, called the resampling table 
(see Figure \ref{figure:resample}). 
MT algorithm takes that first column as the initial assignments of $X_1,\cdots,X_n$. Then, when $X_j$ is to be resampled, MT algorithm goes right in the row corresponding to $X_j$ and picks the sample. 

\vspace{1ex}
\noindent\textit{Consistency with the resampling table}. 
 For a wdag $D$, a node $v$, and a variable $X_j\in \vbrl(A_{L(v)})$, we define
\[
\Location(D,v,j) \triangleq |\{u: \text{ there is a directed path from $u$ to $v$ in $D$ and } X_j \in \vbl(A_{L(u)})\}| + 1.
\]
Moreover, let $\vec{X}_{D,v} \triangleq \{X_{j}^{\Location(D,v,j)}:X_j\in A_{L(v)}\}$. We say that $D$ is \emph{consistent} with $\vec{X}$, denoted by $D\sim \vec{X}$, if for each node $v$ in $D$, the event $A_{L(v)}$ holds on $\vec{X}_{D,v}$. 
Intuitively,
suppose $D$ occurs, then $\vec{X}_{D,v}$ are the assignments of $\vbl(A_{L(v)})$ just before the time that the MT algorithm picks the event corresponding to $v$ to resample, hence $A_{L(v)}$ must hold on $\vec{X}_{D,v}$. 
We sometimes use $\Location(v,j)$ and $\vec{X}_v$ instead of $\Location(D,v,j)$ and $\vec{X}_{D,v}$ respectively if $D$ is clear from the context. 
Besides, we use $\D(i,r)\sim \vec{X}$ to denote that there is some $D\in \D(i,r)$ such that $D\sim \vec{X}$.

\begin{figure}[!ht]
  \centering
        \includegraphics[scale=0.8]{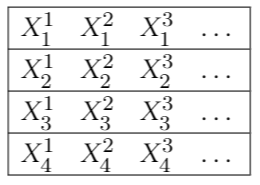}
        \hspace{1in}
        \includegraphics[scale=0.9]{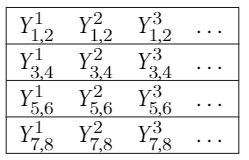}
 \caption{The left is a resampling table where there are four variables $X_1,\cdots,X_4$. The right is an auxiliary table where $\M=\{(1,2),(3,4),(5,6),(7,8)\}.$}
 \label{figure:resample}
\end{figure}

\vspace{1ex}
\noindent\textit{Auxiliary Table}. We introduce
another central concept in the proof of Theorem \ref{thm-ub-R}, called the auxiliary table, which is a table of independent fair coins.  Specifically, for each pair $(i,i')\in \M$, we draw an infinite number of independent fair coins $Y_{i,i'}^{1}, Y_{i,i'}^{2},\cdots$, where $\Pr(Y_{i,i'}^k=i)=\Pr(Y_{i,i'}^k=i')=1/2$. These independent coins form the auxiliary table $\vec{Y}=(Y_{i,i'}^k)_{(i,i')\in\M,k\in\mathbb{N}^+}$ (see Figure \ref{figure:resample}). 
The auxiliary table is used to encode directions of $\M$-reversible arcs, according to which  
we partition the space $\bigvee_{D\in\D(i,r)}(D\sim \vec{X})$.

\vspace{1ex}
\noindent\textit{Consistency with the resampling table and the auxiliary table}. We need some notations about reversible arcs. 
 Suppose $D$ has a unique sink node $w$ and $u\rightarrow v$ is reversible in $D$. Let $D'$ be the DAG obtained from $D$ by reversing the direction of $u\rightarrow v$. We define $\varphi(D,u,v)\triangleq D'(\{w\})$.
 In other words, $\varphi(D,u,v)$ is the prefix of $D'$ with a unique sink node $w$. 
Given $(i,i')\in\M$ and a pwdag $D$, let  $\mathrm{List}(D,i,i')$ denote the sequence listing all nodes in $D$ with labels $i$ or $i'$ in a topological order of $G_D$\footnote{It is easy to see that $\mathrm{List}(D,i,i')$ is well defined. That is, all topological orderings of $D$ induce the same $\mathrm{List}(D,i,i')$.}. 
Given a node $v$ in $D$, if $(L(v),i)\in \M$\footnote{Because $\M$ is a matching, there is at most one such $i$.}, we define
\[
\lambda(v,D)\triangleq |\{u:(u\rightarrow v \text{ is in } D) \land (L(u)\in \{i,L(v)\})\}|+1
\]
to be the order of $v$ in $\mathrm{List}(D,L(v),i)$.
For simplicity of notations, we will use $\lambda(v)$ instead of $\lambda(v,D)$ if $D$ is clear from the context.


	Given a wdag $D$, we say an $\M$-reversible arc $u\rightarrow v$ is {inconsistent} with the auxiliary table $\vec{Y}$ if
	 $y_{L(u),L(v)}^{\lambda(u)} = L(v)$.	
	We say $D$ is consistent with $(\vec{X},\vec{Y})$, denoted by $D\sim(\vec{X},\vec{Y})$, if (i) $D\sim \vec{X}$ and (ii) for any $\M$-reversible arc $u\rightarrow v$ inconsistent with $\vec{y}$, $\varphi(D,u,v)\not\sim \vec{X}$.  
	We say $\D(i,r)\sim (\vec{X},\vec{Y})$ if there is some $D\in\D(i,r)$ such that $D\sim (\vec{X},\vec{Y})$.

The intuition of the notion ``consistency" is as follows. Suppose  $u\rightarrow v$ in a $\M$-reversible arc in $D$, and both $D$ and $\varphi(D,u,v)$ are consistent with the resampling table. But $D$ and $\varphi(D,u,v)$ cannot both occur. It is according to the auxiliary table to which one of $D$ and $\varphi(D,u,v)$ we assign $(D\sim \vec{X})\wedge (\varphi(D,u,v)\sim \vec{X})$.


	
\vspace{2ex}
\begin{lemma}\label{compatible}
	For each $i\in [m]$ and $r\in \mathbb{N}^+$, $\Pr_{\vec{X}}[ \D(i,r)\sim\vec{X}]
	= \Pr_{\vec{X},\vec{Y}}[\D(i,r)\sim (\vec{X},\vec{Y})]$.
\end{lemma}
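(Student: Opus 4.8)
The plan is to show the two events $\bigvee_{D\in\D(i,r)}(D\sim\vec{X})$ and $\bigvee_{D\in\D(i,r)}(D\sim(\vec{X},\vec{Y}))$ coincide \emph{for every fixed assignment $\vec{y}$ of the auxiliary table}; averaging over $\vec{Y}$ then gives the claimed equality of probabilities. Since $D\sim(\vec{X},\vec{y})$ implies $D\sim\vec{X}$, the inclusion $\bigvee_{D}(D\sim(\vec{X},\vec{y}))\subseteq\bigvee_D(D\sim\vec{X})$ is immediate. The real content is the reverse inclusion: fix $\vec{X}$ and $\vec{y}$, assume some $D\in\D(i,r)$ satisfies $D\sim\vec{X}$, and produce some $D^\star\in\D(i,r)$ with $D^\star\sim(\vec{X},\vec{y})$.

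First I would make precise how reversing an $\M$-reversible arc interacts with all the relevant quantities. Using Facts~\ref{claim-exchangeable} and \ref{claim-exchangeable2}, reversing $u\to v$ in a pwdag $D$ and taking the prefix at the sink $w$ yields $\varphi(D,u,v)\in\D(i,r)$ (the set of nodes with label $i$ is unchanged, and the sink is still $w$ with label $i$ — note if $(L(u),L(v))\in\M$ with $i\in\{L(u),L(v)\}$ the count of label-$i$ nodes is preserved because neither endpoint is deleted when we prune to $w$; I'd need to check that pruning cannot drop an $i$-labeled node, which holds since $w$ is reachable from every node via the arcs among $\{i,i'\}$ being a path by the topological-order structure). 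The key combinatorial invariant I would isolate is that reversing $u\to v$ does not change $\lambda(x)$ for the endpoints in a controlled way — more precisely, among nodes labeled $L(u)$ or $L(v)$, reversal of a reversible (hence \emph{unique-path}) arc between consecutive elements of $\mathrm{List}(D,L(u),L(v))$ just transposes $u$ and $v$ in that list, so the ``slot'' $\lambda(u)$ is exactly the slot being contested and the coin $y^{\lambda(u)}_{L(u),L(v)}$ is the right arbiter.

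The core of the argument is then a \emph{descent/exchange} procedure. Define a potential on $\D(i,r)$-members consistent with $\vec{X}$: for a pwdag $D$ with $D\sim\vec{X}$, let $\Phi(D)$ count the $\M$-reversible arcs $u\to v$ of $D$ that are inconsistent with $\vec{y}$ (i.e.\ $y^{\lambda(u)}_{L(u),L(v)}=L(v)$) \emph{and} for which $\varphi(D,u,v)\sim\vec{X}$. If $\Phi(D)=0$ then $D\sim(\vec{X},\vec{y})$ and we are done. Otherwise pick such an arc $u\to v$ and move to $D':=\varphi(D,u,v)$, which is still in $\D(i,r)$, still consistent with $\vec{X}$, and in which the reversed arc $v\to u$ is now \emph{consistent} with $\vec{y}$ (since the coin at slot $\lambda(u)$ says $L(v)$, and after reversal the head is $u=L(u)\ne L(v)$, or one re-reads the definition with roles swapped). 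The main obstacle — and the step I'd budget the most care for — is showing this process terminates, i.e.\ that $\Phi$ (or a refined lexicographic version of it) strictly decreases: one must argue that reversing one bad arc does not create new bad arcs elsewhere, which requires understanding how $\varphi$ (the reverse-then-prune operation) affects the set of reversible arcs and their $\lambda$-slots. I would handle this by ordering the contested slots: process arcs in increasing order of $\lambda(u)$ (equivalently, fix the arbitration at slot $1$ of $\mathrm{List}(D,i,i')$ first, then slot $2$, etc.), show that fixing lower slots is never undone by fixing higher slots because a reversal at slot $k$ only permutes nodes at slots $\ge k$, and conclude by induction on the number of contested slots. Once termination is established, the terminal pwdag $D^\star$ satisfies $\Phi(D^\star)=0$, hence $D^\star\sim(\vec{X},\vec{y})$, completing the reverse inclusion; taking expectations over $\vec{Y}$ and using that the identity holds pointwise in $\vec{y}$ finishes the proof.
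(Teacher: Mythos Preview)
Your overall architecture matches the paper exactly: fix $\vec{y}$, show the two unions coincide pointwise, and for the nontrivial inclusion run a ``keep reversing bad arcs'' procedure until none remain. The paper does precisely this. Where you diverge is in the termination argument, and that is where your proposal has a genuine gap.

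You propose to show that a potential $\Phi$ (number of bad arcs) or a slot-ordered lexicographic refinement strictly decreases, arguing that ``a reversal at slot $k$ only permutes nodes at slots $\ge k$'' so lower slots stay fixed. This is not correct. Reversing the arc between positions $s$ and $s+1$ of $\mathrm{List}(D,i,i')$ swaps those two nodes, and therefore the arc \emph{at slot $s-1$} changes: before it was $v_{s-1}\to v_s$, after it is $v_{s-1}\to v_{s+1}$. This is a different arc with a possibly different label pair; in particular if $L(v_{s-1})=L(v_s)\ne L(v_{s+1})$ the slot-$(s-1)$ arc was not $\M$-reversible before but may become $\M$-reversible, inconsistent with $y^{s-1}$, and have $\varphi\sim\vec X$---a brand new bad arc at a \emph{lower} slot than the one you just processed. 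So neither ``no new bad arcs'' nor ``lower slots are never undone'' holds. (There are also cross-pair effects: reversing an $(i,i')$-arc can create or destroy reversibility of $(j,j')$-arcs by altering the path structure, even though the $(j,j')$-list itself is unchanged.)

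The paper does not attempt a decreasing potential. Instead it observes that every $D_k$ lies in the finite set of pwdags in $\D(i,r)$ with at most $|D_0|$ nodes, so termination follows once one shows \emph{no repetition}: if $D_j=D_k$ with $j<k$, one tracks the position $\lambda(v_j,D_\ell)$ of the specific node $v_j$ through $\ell=j,\dots,k$, locates the last step $\ell$ at which this position is strictly below its value at step $j$, and reads off a contradiction between the coin value $y^{\lambda(u_j,D_j)}_{L(u_j),L(v_j)}$ required at step $j$ and the coin value required at step $\ell$. This is the idea you are missing; your descent scheme, as stated, does not terminate.
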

\begin{proof}
	Fix an arbitrary assignment $\vec{x}$ of $\vec{X}$ and an arbitrary assignment $\vec{y}$ of $\vec{Y}$. 
	Suppose $\D(i,r)\sim\vec{x}$, i.e, $\exists D_0\in \D(i,r)$ such that $D\sim \vec{x}$. We will show that there must exist some $D\in\D(i,r)$ such that $D\sim(\vec{x},\vec{y})$. This will imply the conclusion immediately.
	
	We apply the following procedure to find such a pwdag $D\in \D(i,r)$.
\begin{algorithm}[h]
	Initially, $k=0$\;
	\While{$\exists$ an $\M$-reversible arc $u_k\rightarrow v_k$ in $D_k$ inconsistent with $\vec{y}$ such that $\varphi(D_k,u_k,v_k)\sim \vec{x}$}{
		let $D_{k+1}:=\varphi(D_k,u_k,v_k)$ and $k:=k+1$\;
	}
	Return $D_k$;
\end{algorithm}

By induction on $k$, it is easy to check that $D_k\sim \vec{x}$ and $D_k\in \D(i,r)$ for each $k$. Furthermore, if the procedure terminates, then in the final wdag $D$, for every $\M$-reversible arc $u\rightarrow v$ inconsistent with $\vec{y}$, we have that $\varphi(D,u,v)\not\sim \vec{x}$. So  $D\sim(\vec{x},\vec{y})$. In the following, we will show that the procedure always terminates, which finishes the proof.

Note that each $D_k$ has no more nodes than $D_0$ and that there are finite number of wdags in $\D(i,r)$ with no more nodes than $D_0$, so it suffices to prove that each wdag appears at most once in the procedure.

By contradiction, assume $D_{j}=D_{k}$ for some $j\leq k$. Recall that $u_{j}\rightarrow v_j$ is reversible in $D_j$ and inconsistent with $\vec{y}$. So $y_{L(u_j),L(v_j)}^{\lambda(v_j,D_j)-1}=y_{L(u_j),L(v_j)}^{\lambda(u_j,D_j)}=L(v_j)$. 

	Let $D_{\ell}$ be the last wdag in $D_{j+1},\cdots,D_k$
		such that 
		$\lambda(v_j,D_\ell) < \lambda(v_j,D_j)$.
		Observing that $\lambda(v_j,D_{j+1}) = \lambda(v_j,D_{j}) -1 $,
		we have such $D_{\ell}$ must exist.
		By $\lambda(v_j,D_k) = \lambda(v_j,D_j)$,
		we have $\lambda(v_j,D_\ell) = \lambda(v_j,D_{j}) -1$, 
		$\lambda(v_j,D_{\ell + 1}) = \lambda(v_j,D_{j})$.
		Therefore, 
		$\lambda(v_j,D_{\ell + 1}) = \lambda(v_j,D_{\ell}) + 1$.
		Combining with that
		$u_{\ell} \rightarrow v_{\ell}$
		is the inconsistent arc in $D_{\ell}$ which is reversed in $D_{\ell + 1}$, 
		we have $u_{\ell} = v_{j}$, 
		$(L(u_j),L(v_j)) = (L(u_\ell),L(v_\ell))\in \M$
		and $y_{L(u_{\ell}),L(v_{\ell})}^{\lambda(u_\ell,D_\ell)} = L(v_{\ell})$.
		Thus we have $L(v_{\ell}) = L(u_j)$ and 
		$y_{L(u_{\ell}),L(v_{\ell})}^{\lambda(u_\ell,D_\ell)} = L(u_j)$.
		Note that $\lambda(v_\ell,D_\ell) = 1 + \lambda(u_\ell,D_\ell) = 1 + \lambda(v_j,D_\ell)$.
		Combining with $\lambda(u_j,D_j) = \lambda(v_j,D_{j}) -1$,
		we have 
		$\lambda(u_\ell,D_\ell) = \lambda(u_j,D_{j})$.
		Combining with $y_{L(u_{\ell}),L(v_{\ell})}^{\lambda(u_\ell,D_\ell)} = L(u_j)$,
		we have 
		$y_{L(u_{\ell}),L(v_{\ell})}^{\lambda(u_j,D_{j})} = L(u_j)$. This is contradicted with $y_{L(u_j),L(v_j)}^{\lambda(u_j,D_j)}=L(v_j)$.

\end{proof}
\vspace{2ex}
The following two propositions will be used in the proof of Lemma \ref{lemma-prob-pwdag}. The first proposition is an easy observation, and the second one is a direct application of the Cauchy-Schwarz inequality. For the sake of completeness, we present their proof in the appendix.
\begin{proposition}\label{prop:27}
	Given any wdag $D$, there exists a set $\mathcal{P}$ of disjoint $\M$-reversible arcs\footnote{We say two arc $u\rightarrow v$ and $u'\rightarrow v'$ are disjoint if their node sets are disjoint, i.e. $\{u,v\}\cap\{u',v'\}=\emptyset$.} such that: for each $i\in \M$,
	\[
	|\{v:\exists u \text{ such that }u\rightarrow v \text{ or }v\rightarrow u \text{ is in }\mathcal{P}\}\cap \{v:L(v)=i\}|\geq \frac{1}{2}\cdot \mathcal{V}(D,i).
	\]
\end{proposition}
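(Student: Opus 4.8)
## Proof plan for Proposition \ref{prop:27}

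The goal is to extract, from all the $\M$-reversible arcs in $D$, a sub-collection $\mathcal{P}$ that is vertex-disjoint yet still ``covers'' at least half of $\mathcal{V}(D,i)$ for every $i\in\M$. The plan is to phrase this as a matching problem in an auxiliary graph and invoke a greedy / maximal-matching argument. First I would fix a matched pair $(i,i')\in\M$ and look at the set $\mathcal{V}(D,i)\cup\mathcal{V}(D,i')$ together with all the $\M$-reversible arcs of $D$ whose endpoints carry labels $i$ or $i'$. By Fact \ref{claim-exchangeable}, each such arc $u\to v$ is the unique directed path from $u$ to $v$, and by definition of $\mathcal{V}(D)$ every node in $\mathcal{V}(D,i)$ participates in at least one such arc. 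Since $\M$ is a matching, the $\M$-reversible arcs with label set $\{i,i'\}$ form a graph $H_{i,i'}$ on the vertex set $\mathcal{V}(D,i)\cup\mathcal{V}(D,i')$ with no isolated vertices, and these graphs $H_{i,i'}$ for distinct pairs of $\M$ are vertex-disjoint (a node has a single label, and its label lies in at most one pair of $\M$).

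The core step is to choose $\mathcal{P}\cap H_{i,i'}$ to be a \emph{maximal} matching of $H_{i,i'}$. A maximal matching $M$ in any graph with no isolated vertices is a dominating set of edges: every vertex is either matched or adjacent to a matched vertex. I would then argue a counting bound per label. Let $S_i=\mathcal{V}(D,i)$. Because every arc in $H_{i,i'}$ goes between a label-$i$ node and a label-$i'$ node (it cannot be between two label-$i$ nodes — an arc between equal labels is reversible only if it is the unique path, which is fine, but we may still include it; to be safe I would first note the subtlety that $i=i'$ is impossible since $(i,i')\in E_D$ forces $i\neq i'$, and same-label arcs with labels in $\{i,i'\}$ are also $\M$-reversible candidates). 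Handling this cleanly: restrict attention to arcs whose two endpoint labels are exactly $\{i,i'\}$ as an unordered pair — these are the only ones contributing a label-$i$ vertex that is matched to a label-$i'$ vertex; same-label arcs can be dropped from $\mathcal{P}$ without loss since a vertex of $\mathcal{V}(D,i)$ that only participates in same-label reversible arcs still participates in some arc to a label-$i'$ node by the matching structure — actually this needs care, so the honest approach is: include in $H_{i,i'}$ \emph{all} $\M$-reversible arcs with both labels in $\{i,i'\}$, take a maximal matching $M$, and count covered label-$i$ vertices.

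For the counting: each edge of the maximal matching $M$ covers at most two label-$i$ vertices (if it is a same-label arc) or exactly one (if it is an $(i,i')$-arc). Conversely every vertex of $S_i$ is dominated by $M$, i.e.\ is an endpoint of some edge meeting $M$. I would bound $|S_i \cap V(M)| \ge \tfrac12 |S_i|$ by the standard argument: the vertices of $S_i$ not in $V(M)$ each send an edge to $V(M)$, and — this is where I expect the main obstacle — I need that the map from uncovered $S_i$-vertices to their $M$-neighbors is not too many-to-one. In a general graph a maximal matching only gives that the \emph{matched} vertices number at least half of the vertices in any \emph{connected component with an edge}; to get the clean ``$\ge \tfrac12\mathcal{V}(D,i)$'' per label I would instead invoke: a maximal matching is a vertex cover of size $2|M|$ in disguise via König-type reasoning is too strong, so the right tool is simply that $V(M)$ dominates every vertex and hence every vertex of $S_i$ lies in $V(M)$ or has a neighbor in $V(M)$; then partition $S_i$ into the $M$-matched ones and pair each unmatched one injectively with the edge of $M$ covering its unique-path partner. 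The injectivity of this last pairing — which uses Fact \ref{claim-exchangeable} (uniqueness of the reversing arc at a vertex once the matching is fixed) — is the delicate point, and is exactly what forces the ``disjoint arcs'' and the factor $1/2$ rather than something better. I would finally take $\mathcal{P}=\bigcup_{(i,i')\in\M} M_{i,i'}$; disjointness across pairs is automatic from the vertex-disjointness of the $H_{i,i'}$, and disjointness within a pair is the defining property of a matching, completing the proof.
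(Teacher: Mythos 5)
Your overall strategy (restrict to one matched pair $(i,i')$, form the graph $H_{i,i'}$ of $\M$-reversible arcs on $\mathcal{V}(D,i)\cup\mathcal{V}(D,i')$, and take a maximal/greedy matching in it) is the same as the paper's, but the step you yourself flag as ``the delicate point'' is a genuine gap, and the general facts you lean on to get around it are false. A maximal matching in a connected bipartite graph with no isolated vertices need \emph{not} cover half the vertices (in a star $K_{1,n}$ it covers $2$ of $n+1$), so neither of your fallback claims --- ``matched vertices number at least half of any component with an edge'' or the injectivity of pairing uncovered vertices with covering edges --- holds without additional structure. The hint you give for injectivity, ``uniqueness of the reversing arc at a vertex,'' is also not true: a node can participate in two $\M$-reversible arcs (one to a predecessor and one to a successor).

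The missing idea is a structural fact about $H_{i,i'}$. Since $(i,i')\in\M\subseteq E_D$, \emph{all} nodes of $D$ with label in $\{i,i'\}$ are pairwise joined by arcs, so they are totally ordered; this is exactly the sequence $\mathrm{List}(D,i,i')$. By Fact \ref{claim-exchangeable}, an arc $u\rightarrow v$ between two such nodes is reversible only if it is the unique $u$--$v$ path, which forces $u$ and $v$ to be \emph{consecutive} in that order (any intermediate node of the list would give a second path). Hence $H_{i,i'}$ is a disjoint union of subpaths of a single path, every vertex has degree at most $2$, and a greedy scan along the list (which is what the paper does) or a maximal matching leaves at most one vertex uncovered per path component; combined with the fact that labels alternate along each component, this yields the per-label factor $1/2$. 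Also note that your worry about same-label arcs is moot: $\M$-reversibility requires $(L(u),L(v))\in\M$, so such arcs always join a label-$i$ node to a label-$i'$ node. With the consecutiveness lemma added and the counting redone on paths, your argument closes; without it, it does not.
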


\begin{proposition}\label{prop-prob}
Suppose $X,Y$ and $Z$ are three independent random variables, $A$ is an event determined by $\{X,Y\}$, and $A'$ is an event determined by $\{Y,Z\}$. Let $X_1,Y_1,Y_2,Z_1$ be four independent samples of $X,Y,Y,Z$, respectively. 
Then the following holds with probability at most $\Pr(A)\Pr(A')-\Pr(A\cap A')^2$:
	\begin{itemize}
		\item $A$ is true on $(X_1,Y_1)$, $A'$ is true on $(Y_2,Z_1)$, and
		\item either $A$ is false on $(X_1,Y_2)$ or $A'$ is false on $(Y_1,Z_1)$.
	\end{itemize}
\end{proposition}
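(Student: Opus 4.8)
\textbf{Proof proposal for Proposition~\ref{prop-prob}.}

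The plan is to reduce the statement to a clean application of the Cauchy--Schwarz inequality after conditioning on the shared variable $Y$. First I would introduce, for each value $y$ in the range of $Y$, the conditional probabilities $a(y)\triangleq\Pr\big(A\text{ holds on }(X,y)\big)$ and $a'(y)\triangleq\Pr\big(A'\text{ holds on }(y,Z)\big)$, where the probability is over $X$ and $Z$ respectively. Since $A$ depends only on $\{X,Y\}$ and $A'$ only on $\{Y,Z\}$, and $X,Y,Z$ are independent, we have $\Pr(A)=\E_Y[a(Y)]$, $\Pr(A')=\E_Y[a'(Y)]$, and $\Pr(A\cap A')=\E_Y[a(Y)a'(Y)]$.

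Next I would compute the probability of the event described in the proposition by conditioning on $Y_1$ and $Y_2$ (which are independent copies of $Y$). Given $Y_1=y_1$ and $Y_2=y_2$, the samples $X_1$ and $Z_1$ are independent, so: the probability that $A$ holds on $(X_1,y_1)$ and $A'$ holds on $(y_2,Z_1)$ is $a(y_1)a'(y_2)$; and among those, the probability that moreover $A$ holds on $(X_1,y_2)$ \emph{and} $A'$ holds on $(y_1,Z_1)$ is (using independence of $X_1$ and $Z_1$ again, and that the first pair of conditions already pins down nothing about the truth of $A$ on $(X_1,y_2)$ beyond what $X_1$ governs) equal to $\Pr\big(A(X_1,y_1)\wedge A(X_1,y_2)\big)\cdot\Pr\big(A'(y_2,Z_1)\wedge A'(y_1,Z_1)\big)$. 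Hence the event in question --- ``$A(X_1,y_1)\wedge A'(y_2,Z_1)$ but ($\neg A(X_1,y_2)$ or $\neg A'(y_1,Z_1)$)'' --- has conditional probability
\[
a(y_1)a'(y_2)-\Pr\big(A(X_1,y_1)\wedge A(X_1,y_2)\big)\cdot\Pr\big(A'(y_2,Z_1)\wedge A'(y_1,Z_1)\big).
\]
The small subtlety to check carefully is exactly this independence bookkeeping: the ``bad'' sub-event factorizes over $X_1$ and $Z_1$ once $y_1,y_2$ are fixed, and the $X_1$-factor is at most $\min(a(y_1),a(y_2))$ while the $Z_1$-factor is at most $\min(a'(y_1),a'(y_2))$. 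I expect this conditioning-and-factorization step to be the main (though modest) obstacle, since one has to be precise about which variables the two copies $Y_1,Y_2$ feed into.

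Finally I would take expectation over $Y_1,Y_2$ and bound. Using the factorization bound above, the unconditional probability is at most
\[
\E_{Y_1,Y_2}\Big[a(Y_1)a'(Y_2)-\min\big(a(Y_1),a(Y_2)\big)\min\big(a'(Y_1),a'(Y_2)\big)\Big]=\Pr(A)\Pr(A')-\E_{Y_1,Y_2}\big[\min(a(Y_1),a(Y_2))\min(a'(Y_1),a'(Y_2))\big].
\]
It remains to show $\E_{Y_1,Y_2}\big[\min(a(Y_1),a(Y_2))\min(a'(Y_1),a'(Y_2))\big]\geq \big(\E_Y[a(Y)a'(Y)]\big)^2=\Pr(A\cap A')^2$. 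This is where Cauchy--Schwarz enters: writing $f(y)=\sqrt{a(y)a'(y)}$, one has $\min(a(y_1),a(y_2))\min(a'(y_1),a'(y_2))\geq a(y_1)a'(y_1)\cdot\frac{\min(a(y_2),a(y_1))\min(a'(y_2),a'(y_1))}{a(y_1)a'(y_1)}$, but more directly $\min(a(y_1),a(y_2))\min(a'(y_1),a'(y_2))\geq \sqrt{a(y_1)a'(y_1)}\sqrt{a(y_2)a'(y_2)}=f(y_1)f(y_2)$ would finish it --- indeed $\min(a(y_1),a(y_2))\geq\sqrt{a(y_1)a(y_2)}$ fails in general, so instead I would use that $\min(u_1,u_2)\min(v_1,v_2)\geq \min(u_1v_1,u_2v_2)\geq$ is not quite it either; the robust route is: for any nonnegative reals, $\E[\min(U_1,U_2)\min(V_1,V_2)]$ where $(U_1,V_1),(U_2,V_2)$ are i.i.d.\ is $\geq (\E[\sqrt{UV}])^2$ by noting $\min(U_1,U_2)\min(V_1,V_2)\geq$ the product $\sqrt{U_1V_1}\sqrt{U_2V_2}$ whenever $U_1V_1$ and $U_2V_2$ are comparable in the same order as $U$'s and $V$'s --- since $a$ and $a'$ need not be comonotone I would instead simply invoke $\E_{Y_1,Y_2}[\min(a(Y_1),a(Y_2))\min(a'(Y_1),a'(Y_2))]\geq \E_{Y_1,Y_2}[g(Y_1)g(Y_2)\mathbb{1}] $ with $g=\sqrt{aa'}$, which holds pointwise because $\min(a(y_1),a(y_2))\cdot\min(a'(y_1),a'(y_2))\geq \sqrt{a(y_1)a'(y_1)a(y_2)a'(y_2)}$ is equivalent to $\min(a_1,a_2)\min(a'_1,a'_2)\geq \sqrt{a_1a_2}\sqrt{a'_1a'_2}$, which is false in general --- so the correct and cleanest finish is to bound the $X_1$-factor by $a(\hat y)$ and the $Z_1$-factor by $a'(\hat y)$ for a single common $\hat y\in\{y_1,y_2\}$ only when convenient, and otherwise observe directly that $\Pr(A\cap A')=\E[a a']\le \big(\E[a]\E[a']\big)^{1/2}\cdot(\text{correction})$; given the reference to Cauchy--Schwarz in the paper, the intended argument is surely $\Pr(A\cap A')^2=\big(\E_Y[a(Y)a'(Y)]\big)^2\le \E_Y[a(Y)^2]\cdot$ is not it, rather $\big(\E_Y[\sqrt{a(Y)}\cdot\sqrt{a(Y)}a'(Y)]\big)^2$, and I would reconstruct the precise pairing so that the cross term $\E_{Y_1,Y_2}[\min\min]$ dominates $(\E[aa'])^2$; this last algebraic identification is the step I would spend the most care on.
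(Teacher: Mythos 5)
Your setup through the conditional-probability identity is correct: conditioned on $Y_1=y_1,\,Y_2=y_2$, the event has probability
\[
a(y_1)a'(y_2)-b(y_1,y_2)\,b'(y_1,y_2),\qquad\text{where } b(y_1,y_2)=\Pr_X\bigl(A(X,y_1)\wedge A(X,y_2)\bigr),\ b'(y_1,y_2)=\Pr_Z\bigl(A'(y_1,Z)\wedge A'(y_2,Z)\bigr),
\]
and taking expectation over $Y_1,Y_2$ reduces the claim to showing
$\E_{Y_1,Y_2}\bigl[b(Y_1,Y_2)b'(Y_1,Y_2)\bigr]\geq \Pr(A\cap A')^2$. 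However, at that point the argument goes wrong in two compounding ways. First, the inequality direction: since $b(y_1,y_2)\le\min(a(y_1),a(y_2))$ and similarly for $b'$, replacing $bb'$ by $\min\cdot\min$ \emph{increases} the subtracted term, so you obtain a \emph{lower} bound on the target probability, not an upper bound; the claimed ``the unconditional probability is at most $\ldots$'' is reversed. Second, even setting aside direction, the desired pointwise inequality $\min(a_1,a_2)\min(a'_1,a'_2)\ge\sqrt{a_1a'_1\,a_2a'_2}$ is false whenever $a$ and $a'$ order the two $y$'s the same way (you correctly notice this midway and the proposal trails off without recovering).

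The missing idea is to \emph{not} bound $b,b'$ pointwise but instead swap the order of integration. Writing $b(Y_1,Y_2)b'(Y_1,Y_2)=\E_X[\mathbbm{1}_A(X,Y_1)\mathbbm{1}_A(X,Y_2)]\cdot\E_Z[\mathbbm{1}_{A'}(Y_1,Z)\mathbbm{1}_{A'}(Y_2,Z)]$ and combining all four expectations into a single one over $(X,Z,Y_1,Y_2)$, one can group by $(X,Z)$:
\[
\E_{Y_1,Y_2}\bigl[b\,b'\bigr]=\E_{X,Z}\Bigl[\,\E_{Y_1}\bigl[\mathbbm{1}_A(X,Y_1)\mathbbm{1}_{A'}(Y_1,Z)\bigr]\cdot\E_{Y_2}\bigl[\mathbbm{1}_A(X,Y_2)\mathbbm{1}_{A'}(Y_2,Z)\bigr]\Bigr]=\E_{X,Z}\bigl[h(X,Z)^2\bigr],
\]
where $h(x,z)=\Pr_Y\bigl(A(x,Y)\wedge A'(Y,z)\bigr)$. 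Then Jensen (equivalently, Cauchy--Schwarz) gives $\E_{X,Z}[h^2]\ge\bigl(\E_{X,Z}[h]\bigr)^2=\Pr(A\cap A')^2$, which is exactly what is needed. So the ``direct application of Cauchy--Schwarz'' alluded to in the paper acts on the random function $h(X,Z)$ after integrating out $Y$, not on the $\min$-bounds of $b$ and $b'$.
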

Now, we are ready to show Lemma \ref{lemma-prob-pwdag}.
\begin{lemma}\label{lemma-prob-pwdag}
	 For each pwdag $D$, 
\[
\Pr[D\sim (\vec{X},\vec{Y})] \leq \left(\prod_{v\in \overline{\mathcal{V}} (D)}p_{L(v)}\right)\left(\prod_{v\in \mathcal{V}(D)}p'_{L(v)}\right).
\]
\end{lemma}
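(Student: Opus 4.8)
The plan is to bound $\Pr[D\sim(\vec X,\vec Y)]$ by absorbing one multiplicative saving per $\M$-reversible arc in a well-chosen family and then showing this saving suffices. Fix a pwdag $D$ with unique sink $w$, and let $\mathcal P=\{a_1,\dots,a_k\}$, $a_t=u_t\to v_t$, be the pairwise node-disjoint $\M$-reversible arcs supplied by Proposition~\ref{prop:27}, with $a_t$ on the matching edge $\{i_t,i_t'\}\in\M$. The first (and crucial) step is a bookkeeping lemma on arc reversals: flipping a reversible arc $u\to v$ changes the resampling-table cells read by the nodes of $\varphi(D,u,v)$ in the most local way possible --- on variables private to one of $A_{L(u)},A_{L(v)}$ nothing moves, on each common variable $X_j\in\vbl(A_{L(u)})\cap\vbl(A_{L(v)})$ the two cells $X_j^{\Location(D,u,j)}$ and $X_j^{\Location(D,v,j)}=X_j^{\Location(D,u,j)+1}$ read by $u$ and $v$ are \emph{transposed}, and no other node's cell changes. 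The reason is that, $u\to v$ being the unique $u$--$v$ path (Fact~\ref{claim-exchangeable}), among the $X_j$-using nodes --- which form a chain in any wdag since they are pairwise adjacent --- $u$ immediately precedes $v$, so reversal merely swaps them. I also need that $v\in\varphi(D,u,v)$ always, that $u\in\varphi(D,u,v)$ unless $v$ is the unique out-neighbour of $u$ (in which case $\varphi(D,u,v)$ also drops $u$ and the ancestors that reach $w$ only through $u\to v$), and that $\Location$ computed inside $\varphi(D,u,v)$ agrees with $\Location$ computed in the reversed DAG on the retained nodes.

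Given this, the second step is a per-arc estimate. Since the arcs of $\mathcal P$ are node-disjoint and distinct nodes of a wdag read disjoint cells, the blocks ``private/common part of $u_t$'' and ``private/common part of $v_t$'' $(t=1,\dots,k)$ together with the cells of the remaining nodes are pairwise disjoint, hence independent; likewise the auxiliary-table coins deciding inconsistency of the $a_t$ are distinct, so putting $S:=\{t: a_t\text{ is inconsistent with }\vec Y\}$, the membership events are independent of probability $1/2$ and independent of $\vec X$. Conditioning on $\vec Y$, note $D\sim(\vec X,\vec Y)$ forces $D\sim\vec X$ and $\varphi(D,u_t,v_t)\not\sim\vec X$ for every $t\in S$. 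If $D\sim\vec X$, every node other than $u_t,v_t$ stays satisfied in $\varphi(D,u_t,v_t)$ (its cells are unchanged), so for $t\in S$ the failure of $\varphi(D,u_t,v_t)$ must occur at $u_t$ or $v_t$, on their transposed cells; by Step~1 the conjunction ``$u_t,v_t$ satisfied in $D$ and $\varphi(D,u_t,v_t)\not\sim\vec X$'' then lies in the bad event of Proposition~\ref{prop-prob} with $A=A_{L(u_t)}$, $A'=A_{L(v_t)}$ and $Y$ the common variables, so it has probability at most $p_{L(u_t)}p_{L(v_t)}-\delta_{i_t,i_t'}^2$ (using $\Pr(A_{L(u_t)}\cap A_{L(v_t)})\ge\delta_{i_t,i_t'}$). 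Keeping the trivial bound $p_{L(u_t)}p_{L(v_t)}$ for $t\notin S$ and $\prod p_{L(v)}$ for the other nodes, multiplying these independent factors, using $\delta_{i,i'}^2=8c_ip_ip_{i'}$, and averaging over $\vec Y$ (each $t\in S$ with probability $1/2$ independently) yields
\[
\Pr[D\sim(\vec X,\vec Y)]\ \le\ \Bigl(\prod_{v\in D}p_{L(v)}\Bigr)\,\prod_{t=1}^{k}\bigl(1-4c_{i_t}\bigr).
\]

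The third step converts this into the stated bound. Group the arcs of $\mathcal P$ by their matching edge; if $n_{\{i,i'\}}$ of them lie on $\{i,i'\}$, Proposition~\ref{prop:27} gives $n_{\{i,i'\}}\ge\tfrac12|\mathcal V(D,i)|$ and $n_{\{i,i'\}}\ge\tfrac12|\mathcal V(D,i')|$, hence $4n_{\{i,i'\}}\ge|\mathcal V(D,i)|+|\mathcal V(D,i')|$. Since $c_i=c_{i'}$ for $(i,i')\in\M$, $1-c_i\in(0,1)$, and $1-4c\le(1-c)^4$ by Bernoulli's inequality,
\[
\prod_{t=1}^{k}\bigl(1-4c_{i_t}\bigr)\ \le\ \prod_{\{i,i'\}\in\M}(1-c_i)^{4n_{\{i,i'\}}}\ \le\ \prod_{\{i,i'\}\in\M}(1-c_i)^{|\mathcal V(D,i)|+|\mathcal V(D,i')|}\ =\ \prod_{v\in\mathcal V(D)}\bigl(1-c_{L(v)}\bigr).
\]
As $p'_{L(v)}=p_{L(v)}(1-c_{L(v)})$, substituting into the Step~2 inequality gives $\Pr[D\sim(\vec X,\vec Y)]\le\bigl(\prod_{v\in\overline{\mathcal V}(D)}p_{L(v)}\bigr)\bigl(\prod_{v\in\mathcal V(D)}p'_{L(v)}\bigr)$, which is the claim.

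The main obstacle will be Step~1: the reachability analysis showing an arc reversal perturbs the cell index of only its two endpoints, and only on their shared variables, together with correctly handling the case where $\varphi(D,u,v)$ drops $u$. Everything afterwards is independence plus the convexity inequality $1-4c\le(1-c)^4$; what is essential --- and what forces the whole ``per-arc saving'' design --- is that the improvement over $\prod_{v\in D}p_{L(v)}$ be genuinely exponential in $|\mathcal V(D)|$, since a constant- or subexponential-factor gain would be useless for the Shearer-bound convergence argument in Section~\ref{sec-mapping}; node-disjointness of $\mathcal P$ together with the per-arc factor $(1-c)^4$ is exactly what delivers this.
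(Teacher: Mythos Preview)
Your proof is correct and follows essentially the same approach as the paper's. Both arguments select the disjoint $\M$-reversible arcs $\mathcal{P}$ from Proposition~\ref{prop:27}, use Proposition~\ref{prop-prob} to get a per-arc saving on the cells of $u_t,v_t$, invoke independence from the disjointness of the associated resampling-table entries, and then convert the per-arc factor into the $\prod_{v\in\mathcal V(D)}(1-c_{L(v)})$ bound via Proposition~\ref{prop:27}'s counting guarantee; the only cosmetic differences are that you condition on $\vec Y$ first and use $1-4c\le(1-c)^4$, whereas the paper packages the same computation into events $B_w,B_{u,v}$ with bound $p_ip_{i'}(1-2c_i)(1-2c_{i'})$ and uses $(1-2c)\le(1-c)^2$. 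Your Step~1 bookkeeping (that reversal only transposes the shared-variable cells at $u,v$ and leaves every other node's cells intact) is exactly what the paper hides behind ``one can verify'' when it writes down $\vec X_{D',u},\vec X_{D',v}$ and then asserts $D'\sim\vec X$.
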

\begin{proof}
Let $\mathcal{P}$ be the set of disjoint $\M$-reversible arcs defined in Proposition \ref{prop:27}. Let $V(\mathcal{P})$ denote the set of nodes which appears in $\mathcal{P}$, and $\overline{V(\mathcal{P})}$ consists of the other nodes. Proposition \ref{prop:27} says that for each $i\in \M$,
\[
V(\mathcal{P})\cap \{v:L(v)=i\}|\geq \frac{1}{2}\cdot \mathcal{V}(D,i).
\]

For each $v\in \overline{V(\mathcal{P})}$, let $B_v$ denote the event that $A_{L(v)}$ holds on $\vec{X}_{v}$. It is easy to see that $\Pr[B_v]=p_{L(v)}$. Besides,
\begin{claim}\label{claim:321}
If $D\sim(\vec{X},\vec{Y})$, then $B_v$ holds for each $v\in\overline{V(\mathcal{P})}$. 
\end{claim}
\begin{proof}
Note that $\vec{X}_{v}$ are the assignments of $\vbl(A_{L(v)})$ just before the time that the MT algorithm picks the event corresponding to $v$ to resample. MT algorithm decides to pick $A_{L(v)}$ only if $A_{L(v)}$ holds. Hence $A_{L(v)}$ must hold on $\vec{X}_{v}$.
\end{proof}

Let $u\rightarrow v$ be an arc in $\mathcal{P}$, where $L(u)=i$ and $L(v)=i'$. 
Then by the definition of $\mathcal{P}$, we have $u\rightarrow v$ is reversible in $D$.
Let $D'$ be the wdag obtained by reversing the direction of $u\rightarrow v$ in $D$.
Recalling the definition of $\vec{X}_{D',v}$,
one can verify that
	$$\vec{X}_{D',u} :=  \left\{X_{j,\Location(v,j)}: X_j \in \vbl\big(A_{i}\big)\cap \vbl\big(A_{i'}\big)\right\}\cup \left\{X_{j,\Location(u,j)}: X_j \in \vbl\big(A_{i}\big)\setminus \vbl\big(A_{i'}\big)\right\}$$
	and 
	$$\vec{X}_{D',v} :=  \left\{X_{j,\Location(u,j)}: X_j \in \vbl\big(A_{i}\big)\cap \vbl\big(A_{i'}\big)\right\}\cup \left\{X_{j,\Location(v,j)}: X_j \in \vbl\big(A_{i'}\big)\setminus \vbl\big(A_{i}\big)\right\}.$$
	For simplicity, let $\lambda:=\lambda(u,D)$. We define $B_{u,v}$ to be the event that the following hold:
	\begin{itemize}
	\item[(a)] $A_{i}$ holds on $X_{u}$, and  $A_{i'}$ holds on $X_{v}$;
	\item[(b)] If $Y_{i,i'}^{\lambda}=i'$, then either $A_i$ is false on $\vec{X}_{D',u}$  or $A_i'$ is false on $\vec{X}_{D',v}$.
	\end{itemize}
Conditioned on that $Y_{i,i'}^{\lambda}=i$, $B_{u,v}$ happens with 
probability $p_{i}p_{i'}$.
Condition on that $Y_{i,i'}^{\lambda}=i'$, by using Proposition~\ref{prop-prob}, $B_{u,v}$ happens with probability at most $p_{i}p_{i'} - \delta^2_{i,i'}$.
Thus,
\begin{align*}\label{eq-buv}
\Pr[B_{u,v}] &\leq 
\Pr[Y_{i,i'}^{\lambda}=i]p_{i}p_{i'} + \Pr		[Y_{i,i'}^{\lambda}=i']\left(p_{i}p_{i'}- \delta^2_{i,i'}\right)
 \leq \frac{1}{2}\cdot p_{i}p_{i'} + \frac{1}{2}\cdot\left(p_{i}p_{i'}- \delta^2_{i,i'}\right)\\
 &\leq p_ip_{i'}(1-2c_i)(1-2c_{i'}).
\end{align*}

\begin{claim}
	If $D\sim(\vec{X},\vec{Y})$, then $B_{u,v}$ holds for each $u\rightarrow v$ in $\mathcal{P}$.
\end{claim}
\begin{proof}
Suppose $D\sim(\vec{X},\vec{Y})$. Similar to the argument in Claim \ref{claim:321}, we can see that Item (a) holds. In the following, we show Item (b) holds.

By contradiction, assume $Y_{i,i'}^{\lambda}=i'$, $A_{i}$ holds on $\vec{X}_{D',u}$, and $A_{i'}$ holds on $\vec{X}_{D',v}$. Then, we have $u\rightarrow v$ in $D$ is inconsistent with $\vec{Y}$ and $D'\sim\vec{X}$.
Thus, $\varphi(D,u,v)\sim\vec{X}$ since $\varphi(D,u,v)$ is a prefix of $D'$. By definition, we have $D\not\sim(\vec{X},\vec{Y})$, a contradiction.
\end{proof}

Since the events $\{B_v:v\in \overline{V(\mathcal{P})}\}$ and $\{B_{u,v}:u\rightarrow v \text{ is in } \mathcal{P}\}$ depend on distinct entries of $\vec{X}$ and $\vec{Y}$, they are mutually independent. 
Therefore,
\begin{align*}
\Pr\left[D\sim(\vec{X},\vec{Y})\right] &\leq \Pr
\left[\left(\bigcap_{w\in \overline{V(\mathcal{P}})}B_w\right)\bigcap \left(\bigcap_{u\rightarrow v \text{ is in }\mathcal{P}}B_{u,v}\right)\right]
=\left(\prod_{w\in \overline{V(\mathcal{P})}}\Pr(B_w)\right)\left(\prod_{u\rightarrow v\text{ is in }\mathcal{P}}\Pr(B_{u,v})\right)\\
&\leq\left(\prod_{w\in \overline{V(\mathcal{P})}}p_{L(w)}\right)\left(\prod_{u\rightarrow v\text{ is in }\mathcal{P}}p_{L(u)}p_{L(v)}\left(1-2c_{L(u)}\right)\left(1-2c_{L(u)}\right)\right)\\
&=\left(\prod_{v\text{ in } D}p_{L(v)}\right)\cdot \left(\prod_{i\in[m]}\left(1-2c_i\right)^{|\mathcal{P}\cap \{v:L(v)=i\}|}\right)\\
&\leq\left(\prod_{v\text{ in } D}p_{L(v)}\right)\cdot \left(\prod_{i\in[m]}\left(1-2c_i\right)^{|\mathcal{V}(i)|/2}\right)\leq\left(\prod_{v\text{ in } D}p_{L(v)}\right)\cdot \left(\prod_{i\in[m]}\left(1-c_i\right)^{|\mathcal{V}(i)|}\right)\\
&=\left(\prod_{v\in \overline{\mathcal{V}} (D)}p_{L(v)}\right)\left(\prod_{v\in \mathcal{V}(D)}p'_{L(v)}\right).
\end{align*}
\end{proof}

Now we are ready to prove the main theorem of this subsection.
\begin{theorem}\label{thm-ub-R}
	$\E[T] \leq \sum_{D\in \mathcal{D}(G_D)}\big(\prod_{v \in \overline{\mathcal{V}} (D)}p_{L(v)}\big)\big(\prod_{v\in \mathcal{V}(D)}p'_{L(v)}\big)$.
\end{theorem}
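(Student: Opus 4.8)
\textbf{Proof plan for Theorem~\ref{thm-ub-R}.}

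The plan is to combine the two ingredients already in hand: the exact identity
$T=\sum_{i\in[m]}\sum_{r\in\mathbb{N}^+}\chi_{\D(i,r)}$ from Fact~\ref{lem:T-sum}, and the per-pwdag bound of Lemma~\ref{lemma-prob-pwdag}, glued together by the ``partition via the auxiliary table'' identity of Lemma~\ref{compatible}. The key observation that makes this work is that although many pwdags in a fixed group $\D(i,r)$ can be simultaneously \emph{consistent} with a resampling table $\vec{X}$, at most one of them can actually \emph{occur}, so $\E[\chi_{\D(i,r)}]=\Pr_{\vec s}[\D(i,r)\text{ occurs}]\le \Pr_{\vec X}[\D(i,r)\sim\vec X]$ (occurrence implies consistency, as sketched in the technique overview). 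After that I want to rewrite $\Pr_{\vec X}[\D(i,r)\sim\vec X]$ using the auxiliary table and then break the ``or over $D\in\D(i,r)$'' by a union bound.

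Concretely, first I would take expectations in Fact~\ref{lem:T-sum}:
\[
\E[T]=\sum_{i\in[m]}\sum_{r\in\mathbb{N}^+}\E[\chi_{\D(i,r)}]
=\sum_{i\in[m]}\sum_{r\in\mathbb{N}^+}\Pr_{\vec s}\big[\exists D\in\D(i,r)\text{ occurring in }\vec s\big].
\]
Next, since $D$ occurring in $\vec s$ implies $D\sim\vec X$ for the associated resampling table (the standard Moser--Tardos coupling), the last probability is at most $\Pr_{\vec X}\big[\bigvee_{D\in\D(i,r)}(D\sim\vec X)\big]$. By Lemma~\ref{compatible} this equals $\Pr_{\vec X,\vec Y}\big[\bigvee_{D\in\D(i,r)}(D\sim(\vec X,\vec Y))\big]$, and now a plain union bound over $D\in\D(i,r)$ gives
\[
\E[T]\le\sum_{i\in[m]}\sum_{r\in\mathbb{N}^+}\sum_{D\in\D(i,r)}\Pr_{\vec X,\vec Y}\big[D\sim(\vec X,\vec Y)\big].
\]
Finally I would apply Lemma~\ref{lemma-prob-pwdag} termwise to bound each $\Pr[D\sim(\vec X,\vec Y)]$ by $\big(\prod_{v\in\overline{\mathcal V}(D)}p_{L(v)}\big)\big(\prod_{v\in\mathcal V(D)}p'_{L(v)}\big)$, and observe that as $(i,r)$ ranges over $[m]\times\mathbb{N}^+$ the sets $\D(i,r)$ partition $\mathcal D(G_D)$ (each pwdag has a unique sink, hence a well-defined label $i$ and multiplicity $r$ of that label), so the triple sum collapses to $\sum_{D\in\mathcal D(G_D)}$, yielding exactly the claimed inequality.

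I do not expect a serious obstacle here, since the heavy lifting is already done in Lemmas~\ref{compatible} and~\ref{lemma-prob-pwdag}; the only points requiring a little care are (i) justifying the interchange of expectation and the infinite sum over $r$ (monotone convergence, since all $\chi_{\D(i,r)}\ge 0$), and (ii) spelling out that ``$D$ occurs in $\vec s$'' $\Rightarrow$ ``$D\sim\vec X$'' for the resampling-table coupling, which is the classical Moser--Tardos witness argument and is implicit in the definition of $\Location$ and $\vec X_{D,v}$. If anything is delicate it is making sure the union bound is applied to the \emph{auxiliary-table} version $D\sim(\vec X,\vec Y)$ rather than the raw $D\sim\vec X$ version — the whole point of Lemma~\ref{compatible} is that this substitution is \emph{free} (the two ``or'' events have equal probability for every fixed $\vec y$), and only after the substitution does the union bound become lossless enough to give an exponential improvement downstream.
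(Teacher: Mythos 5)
Your proposal matches the paper's proof essentially step for step: take expectations in Fact~\ref{lem:T-sum}, bound $\Pr[\chi_{\D(i,r)}]$ by $\Pr[\D(i,r)\sim\vec X]$, convert to the auxiliary-table version via Lemma~\ref{compatible}, apply a union bound over $\D(i,r)$, invoke Lemma~\ref{lemma-prob-pwdag} termwise, and collapse the sum over $(i,r)$ into a sum over $\mathcal D(G_D)$. The two points you flag as needing care (monotone convergence and occurrence $\Rightarrow$ consistency) are handled implicitly in the paper, so your write-up is, if anything, slightly more explicit.
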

\begin{proof}
First, according to Lemmas~\ref{compatible} and ~\ref{lemma-prob-pwdag},
\begin{align*}
\Pr[\chi_{\D(i,r)}]&\leq \Pr[\D(i,r)\sim\vec{X}]= \Pr[\D(i,r)\sim(\vec{X},\vec{Y})]\leq \sum_{D\in \D(i,r)}\Pr[D\sim(\vec{X},\vec{Y})]\\
&\leq \sum_{D\in \D(i,r)}\left(\prod_{v \in \overline{\mathcal{V}} (D)}p_{L(v)}\right)\left(\prod_{v\in \mathcal{V}(D)}p'_{L(v)}\right).
\end{align*}
Then, by Fact~\ref{lem:T-sum} and the above inequality, we have
	\begin{equation*}
		\begin{aligned}
	\E[T] &= \sum_{i\in [m]}\sum_{r\in \mathbb{N}^+}\Pr[\chi_{\D(i,r)}] \leq \sum_{i\in [m]}\sum_{r\in \mathbb{N}^+} \sum_{D\in \D(i,r)}\left(\prod_{v \in \overline{\mathcal{V}} (D)}p_{L(v)}\right)\left(\prod_{v\in \mathcal{V}(D)}p'_{L(v)}\right)
	\\& \leq \sum_{D\in \D(G_D)} \left(\prod_{v \in \overline{\mathcal{V}} (D)}p_{L(v)}\right)\left(\prod_{v\in \mathcal{V}(D)}p'_{L(v)}\right).
		\end{aligned}
    \end{equation*}
	
\end{proof}

\subsection{Mapping between wdags}\label{sec-mapping}

In this section, we will prove Theorem~\ref{thm-dag-mapping}, which provides a upper bound of $\E[T]$ in terms of $\vec{p}^-$. 
\begin{definition}[Homomorphic dependency graph]
	Given a dependency graph $G_D=([m],E_D)$ and a matching $\M$ of $G_D$, we define a  graph $G^{\M}=(V^{\M},E^{\M})$ homomorphic to $G_D$ respected to $\M$ as follows. 
	\begin{itemize}
		\item $V^{\M} = [m]\setminus\{i_0,i_1:(i_0,i_1)\in \M\}\cup\{i_0^{\uparrow},i_0^{\downarrow},i_1^{\uparrow},i_1^{\downarrow}:(i_0,i_1)\in \M\}$;
		\item $\forall (i_0,i_1)\in E_D$, each pair of vertices in $\{i_0,i_1,i_0^{\uparrow},i_0^{\downarrow},i_1^{\uparrow},i_1^{\downarrow}\}\cap V^{\M}$ are connected in $G^{\M}$.
	\end{itemize}
Besides, we associate a probability vector $\vec{p}^{\M}$ with $G^{\M}$ as follows:
\begin{equation*}
	\begin{aligned}
		\forall v\in V^{\M}:\quad
		p^{\M}_{v} =\begin{cases}
			p_i' & \text{if } v = i^\uparrow \text{ for some } i\in[m],\\
			p_i^{-} - p_i' & \text{if } v = i^\downarrow \text{ for some } i\in[m],\\
			p^-_i & \text{otherwise, } v = i \text{ for some } i\in[m].
		\end{cases}
	\end{aligned}
\end{equation*}
\end{definition}
\begin{figure}[t]
\centering
\includegraphics[scale=0.7]{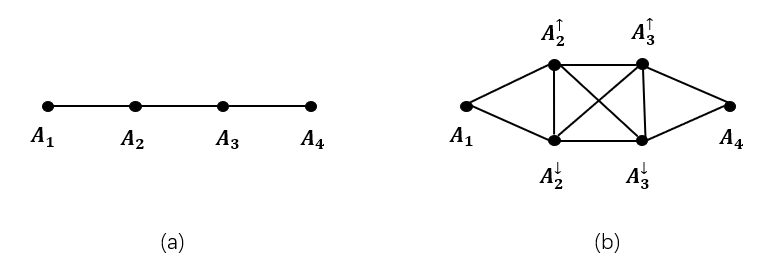}
\caption{(a) a dependency graph $G_D$; (b) the $G^{\M}$ when $\M=\{(2,3)\}$.}
\label{example_homograph}
\end{figure}
In fact, $(G_D,\vec{p}^-)$ and $(G^\M,\vec{p}^\M)$ are essentially the same: suppose $\A\sim(G_D,\vec{p}^-)$, then for each $i\in \M$, we view $A_{i}$ as the union of two mutually exclusive events $A_{i^{\uparrow}}\cup A_{i^{\downarrow}}$ whose probabilities are $p_i'$ and $p^--p_i'$ respectively. Such a representation of $\A$ is of the setting $(G^{\M},\vec{p}^{\M})$.

We have the following proposition, whose proof can be found in the appendix.
\begin{proposition}\label{obs-mapping-dag}
$		\sum_{D'\in \D(G^{\M})}\prod_{v' \text{ in } D'}p^{\M}_{L'(v')} = \sum_{D\in \D(G_D)}\prod_{v \text{ in } D}p^{-}_{L(v)}.$
\end{proposition}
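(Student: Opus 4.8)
The plan is to exhibit an explicit bijection between the pwdags of $G^{\M}$ and certain "decorated" pwdags of $G_D$, and to check that it multiplies weights correctly. Concretely, I would define a map $\pi$ that, given a pwdag $D'$ of $G^{\M}$, first collapses every node labelled $i^{\uparrow}$ or $i^{\downarrow}$ back to a node labelled $i$; since in $G^{\M}$ the two copies $i^{\uparrow}, i^{\downarrow}$ of a matched vertex are adjacent to each other and to exactly the same set of other vertices as $i$ was in $G_D$, the collapsed digraph is again a legal wdag of $G_D$ with the same node set, arc set and sink, so it lies in $\D(G_D)$. Conversely, starting from a pwdag $D \in \D(G_D)$ and a choice, for each node $v$ with $L(v)$ a matched vertex $i$, of a label in $\{i^{\uparrow}, i^{\downarrow}\}$, one recovers a pwdag of $G^{\M}$: the adjacency condition of Definition~\ref{def:wdag} is preserved because $i^{\uparrow} \sim i^{\downarrow}$ in $G^{\M}$, so no matter which copies are chosen the induced-arc structure is identical, and acyclicity and the single-sink property are inherited verbatim. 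Thus $\pi$ is surjective, and the fibre over $D \in \D(G_D)$ is exactly the set of all ways to relabel each matched-vertex node of $D$ by one of its two copies.

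Next I would compute the weight sum fibrewise. Fix $D \in \D(G_D)$. For a node $v$ with $L(v) = i$ an \emph{unmatched} vertex, every preimage carries the factor $p^{\M}_i = p^{-}_i$, matching the factor $p^{-}_{L(v)}$ on the right-hand side. For a node $v$ with $L(v) = i$ a \emph{matched} vertex, summing over the two relabelling choices for that single node contributes a factor $p^{\M}_{i^{\uparrow}} + p^{\M}_{i^{\downarrow}} = p'_i + (p^{-}_i - p'_i) = p^{-}_i$, again exactly the factor $p^{-}_{L(v)}$. Since the relabelling choices at distinct nodes are independent, the sum over the whole fibre of $\prod_{v' \text{ in } D'} p^{\M}_{L'(v')}$ factors as a product over the nodes of $D$, each node contributing $p^{-}_{L(v)}$; that is,
\begin{equation*}
\sum_{D' \in \pi^{-1}(D)} \prod_{v' \text{ in } D'} p^{\M}_{L'(v')} = \prod_{v \text{ in } D} p^{-}_{L(v)}.
\end{equation*}
Summing this identity over all $D \in \D(G_D)$ and using that the fibres $\pi^{-1}(D)$ partition $\D(G^{\M})$ gives the claimed equality.

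The only genuinely delicate point — and the place I would spend the most care — is verifying that $\pi$ is \emph{well-defined and bijective onto $\D(G^{\M})$}: one must check that collapsing copies never creates a forbidden adjacency or destroys a required one (this is where the specific construction of $G^{\M}$, in which $i^{\uparrow}$ and $i^{\downarrow}$ inherit all of $i$'s edges \emph{and} are joined to each other, is used), that the sink is preserved (the sink of $D'$ is a single node, whose collapsed image is still a single sink because collapsing only relabels, never merges, nodes), and that no two distinct pwdags of $G^{\M}$ collapse to the same $D$ with the same relabelling pattern. Everything after that is the routine weight bookkeeping sketched above. Since the proposition is stated with its proof deferred to the appendix, I would present the bijection and the fibre computation in full there and simply cite it here.
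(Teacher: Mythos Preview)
Your proposal is correct and follows essentially the same approach as the paper's proof. The paper formalizes the bijection in the opposite direction---defining $h(D,\vec{R})$ for $D\in\D(G_D)$ and $\vec{R}\in\{0,1\}^{\Mat(D)}$ by keeping $V'=V$, $E'=E$ and relabelling each matched node according to $\vec{R}$---but this is precisely the inverse of your collapse map $\pi$, and the fibrewise weight computation (factoring out unmatched nodes and summing $p^{\M}_{i^{\uparrow}}+p^{\M}_{i^{\downarrow}}=p^{-}_i$ over each matched node) is identical.
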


Given a pwdag $D= (V,E,L)$,
recall that
$\mathcal{V}(D)$ is the set of nodes of $\M$-reversible arcs in $D$.  
Define $\Mat(D)\triangleq \{v: L(v) \in \M\}$ to be the set of nodes $v$ in $D$ where $L(v)$ is contained in an edge in $\M$. Obviously, $\mathcal{V}(D)\subseteq\Mat(D)$.
For simplicity of notations, we will omit $D$ from the notations if $D$ is clear from the context.

Given a pwdag $D=(V,E,L)$, we use $\Rem=\{\Rem_{1},\Rem_{2},\Rem_{3},\Rem_{4}\}$ to represent a partition of $\Mat(D)$ where $\mathcal{V} \subseteq  \Rem_{1}$ (some of these four sets are possibly empty). Let $\psi(D)$ denote the set consisting of all such partitions. 
The formal definition is as follows.

\begin{definition}[Partition] Given a pwdag $D= (V,E,L)$ of $G_D$,
define 
$$\psi(D) \triangleq \{ \{\Rem_{1},\Rem_{2},\Rem_{3},\Rem_{4}\}:\mathcal{V} \subseteq  \Rem_{1} \text{ and } \Mat=\Rem_{1}\sqcup\Rem_{2}\sqcup \Rem_{3}\sqcup\Rem_{4}\}. $$
\end{definition}

Given a wdag $D$, there may be two or more topological ordering of $D$. We fix an arbitrary topological ordering, and denote it by $\pi_D$.
 In the following, we define an injection $h$ from $\{(D,\Rem):D\in \mathcal{D}(G_D),\Rem\in \psi(D)\} $ to $\mathcal{D}(G^\M)$.

\begin{definition}\label{def:h}
Given a pwdag $D$ and $\Rem\in \psi(D)$, define $h(D,\Rem)$ to be a directed graph $D'=(V',E',L')$ constructed as follows.	
\vspace{1ex}

\noindent\underline{Constructing $V'$.} $V'=V_1'\sqcup V_2'$ where $|V_1'|=|V|$ and $|V_2'|=|\Rem_3\cup\Rem_4|$. For convenience of presentation, we fix two bijections $f:V\rightarrow V_1'$ and $f^{\ast}:\Rem_3\cup\Rem_4\rightarrow V_2'$ to name nodes in $V'$. In order to distinguish between nodes in $D$ and those in $D'$, we will always use $u,v,w$ to represent the nodes of $D$ and $u',v',w'$ to present the nodes of $D'$. Given $v'\in V'$, we use $g(v')$ to denote the unique node $v\in V$ such that $f(v)=v'$ (if $v'\in V_1'$) or $f^{\ast}(v)=v'$ (if $v'\in V_2'$).

\vspace{1ex}

\noindent\underline{Description of $L'$.}  For each node $v'\in V_1'$, where $v'=f(v)$,
\begin{equation}
	\begin{aligned}
		\label{eq-f_v}
		L'(v') =
		\begin{cases}
			(L(v))^{\uparrow}, & \text{if } v\in \Rem_1,\\
			(L(v))^{\downarrow},& \text{if } v\in \Rem_2\cup \Rem_3 \cup \Rem_4,\\
			L(v),& \text{otherwise, } v\not \in \Mat.
		\end{cases}	
	\end{aligned}
\end{equation}
For each node $v'\in V_2'$, assuming $v\in\Rem_3\cup\Rem_4$ is the node such that $v'=f^\ast(v)$ and $i\in[m]$ is the node such that $((L(v),i) \in \M$,
\begin{equation}
	\begin{aligned}
		\label{eq-f_ast_v}
		\quad L'(v') =
		\begin{cases}
			i^{\uparrow}, & \text{if }v \in \Rem_3,\\
			i^{\downarrow},& \text{otherwise, } v\in \Rem_4.
		\end{cases}	
	\end{aligned}	
\end{equation}

\noindent\underline{Constructing $E'$.} $E'=E_1'\sqcup E_2'$ where $E'_1 = \{f^\ast(v)\rightarrow f(v):v\in \Rem_3\cup \Rem_4\}$ and
$$E'_2 = \{u' \rightarrow v':\big((L'(u')=L'(v')\big) \lor \big((L'(u'),L'(v')) \in E^{\M})\big)\land (g(u') \prec g(v') \text{ in } \pi_D)
\}.
$$
\end{definition}
\begin{theorem}\label{thm-injection}
$h(\cdot,\cdot)$ is an  injection from $\{(D,\Rem):D\in \mathcal{D}(G_D),\Rem\in \psi(D)\} $ to $\mathcal{D}(G^\M)$.
\end{theorem}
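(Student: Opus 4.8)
The plan is to establish two things: (i) that $h(D,\Rem)$ is always a well-defined pwdag of $G^\M$, i.e.\ that $h$ really maps into $\mathcal{D}(G^\M)$, and (ii) that $h$ is injective, i.e.\ that $(D,\Rem)$ can be recovered from $D'=h(D,\Rem)$.

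For the first part, I would proceed in three steps. First check that $D'$ is a DAG: the arcs in $E'_2$ respect the fixed topological order $\pi_D$ (via the condition $g(u')\prec g(v')$), and the arcs in $E'_1$ go from each copy $f^\ast(v)\in V_2'$ to $f(v)\in V_1'$, which is consistent with placing each $f^\ast(v)$ immediately before $f(v)$ in a refinement of $\pi_D$; so no directed cycle can arise. Second, verify the defining adjacency property of a wdag of $G^\M$: for every pair of distinct nodes $u',v'$, there is an arc (in one direction) iff $L'(u')=L'(v')$ or $(L'(u'),L'(v'))\in E^\M$. The ``only if'' direction is immediate from the definition of $E'_2$ and from the fact that an $E'_1$-arc $f^\ast(v)\to f(v)$ connects $i^{\uparrow}$ or $i^{\downarrow}$ with $(L(v))^{\uparrow}$ or $(L(v))^{\downarrow}$, which are adjacent in $G^\M$ since $(L(v),i)\in\M\subseteq E_D$. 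For the ``if'' direction, one uses that whenever $L'(u'),L'(v')$ are equal or $G^\M$-adjacent, the underlying labels $g$-images are equal or $G_D$-adjacent, hence comparable in $\pi_D$, so an arc is included — with the one subtlety that when $\{u',v'\}=\{f^\ast(v),f(v)\}$ for the same $v$, the arc is supplied by $E'_1$ rather than $E'_2$; this case needs to be handled explicitly. Third, check $D'$ has a unique sink: the sink of $D$ maps to $f(\text{sink})$, every $V_2'$ node has out-degree $\ge 1$ into $V_1'$, and every non-sink node of $D$ has an out-arc in $E'_2$ to the image of one of its $D$-successors, so $f(\text{sink})$ is the only candidate sink.

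For injectivity, the strategy is to describe the inverse explicitly. Given $D'\in\mathcal{D}(G^\M)$ in the image, the set $V_2'$ is recognizable as the set of nodes $v'$ whose label is some $i^{\uparrow}$ or $i^{\downarrow}$ and whose unique out-neighbor has label $(L(v))^{\uparrow}$ or $(L(v))^{\downarrow}$ with $(L(v),i)\in\M$ — more cleanly, $V_2'$-nodes are exactly the sources among those incident to the special $E'_1$ arcs; one recovers $V = \{g(v'):v'\in V_1'\}$ by collapsing $i^{\uparrow},i^{\downarrow}\mapsto i$, recovers the arcs of $D$ as the $E'_2$-arcs between $V_1'$ nodes projected down, and recovers the partition $\Rem$ from the labels: $\Rem_1$ from nodes labeled $(\cdot)^{\uparrow}$ in $V_1'$, and $\Rem_2,\Rem_3,\Rem_4$ by distinguishing nodes labeled $(\cdot)^{\downarrow}$ in $V_1'$ according to whether they carry an attached $V_2'$-partner and, if so, whether that partner's label is $i^{\uparrow}$ (giving $\Rem_3$) or $i^{\downarrow}$ (giving $\Rem_4$), versus no partner (giving $\Rem_2$). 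One must check this reconstruction is consistent — in particular that the recovered arc set on $V$ really is a wdag of $G_D$ (it is, since $L'$ restricted to $V_1'$ projects $G^\M$-adjacency to $G_D$-adjacency) and that $\mathcal{V}(D)\subseteq\Rem_1$ is forced.

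I expect the main obstacle to be the careful bookkeeping in verifying the wdag adjacency condition for $D'$ — specifically, showing that $E'_2$ together with $E'_1$ produces \emph{exactly} an arc between every $G^\M$-adjacent or equal-labeled pair and no spurious arcs, since the labels in $V^\M$ involve the $\uparrow/\downarrow$ split and one has to track which split-copies are mutually adjacent in $G^\M$ and confirm this matches what $\pi_D$-comparability of the projected nodes gives. The potential trap is double arcs or missing arcs precisely at the $\{f^\ast(v),f(v)\}$ pairs and at pairs of split-copies of the same original label $i$ (e.g.\ $i^{\uparrow}$ and $i^{\downarrow}$ are adjacent in $G^\M$, so two nodes of $D'$ descending from nodes with the same label $i$ in $D$ must be connected — which they are, since those $D$-nodes are joined by an arc in $D$ and hence $\pi_D$-comparable). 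Once the adjacency condition is nailed down, the DAG property, unique-sink property, and injectivity are comparatively routine.
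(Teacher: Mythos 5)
Your plan for showing $h(D,\Rem)\in\mathcal{D}(G^\M)$ is essentially the paper's argument: the refined topological order placing $f^\ast(v)$ just before $f(v)$, the two-directional check of the wdag adjacency condition, and the unique-sink observation. That part is fine.

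The injectivity part, however, has a real gap at the critical step of recovering the partition $(V_1',V_2')$ from $D'$. You propose to recognize $V_2'$ as ``the sources among those incident to the special $E'_1$ arcs,'' but this is circular: $E'_1$ is defined in terms of $V_2'$, and from $D'$ alone you only see an undifferentiated arc set $E'=E'_1\sqcup E'_2$. Worse, the label-based characterization you offer first does not work either: a node $v'\in V_2'$ is a $V_2'$ node labeled, say, $i^\uparrow$ with an out-arc to a node labeled $(L(v))^\downarrow$ where $(L(v),i)\in\M$ — but a $V_1'$ node $u'$ with $L'(u')=i^\uparrow$ can also have an $E'_2$ out-arc to a $V_1'$ node labeled $(L(v))^\downarrow$, since the four split-labels of any $\M$-edge are pairwise $G^\M$-adjacent and the two underlying $D$-nodes are $\pi_D$-comparable. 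So labels and the existence of such an out-arc cannot separate $V_2'$ from $V_1'$. (Also, a $V_2'$ node generally has many out-neighbors via $E'_2$, so ``unique out-neighbor'' is not available as a crutch.)

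The missing idea is \emph{reversibility}. What actually distinguishes the $E'_1$ arc $f^\ast(v)\to f(v)$ from its impostors is that $f^\ast(v)\to f(v)$ is the \emph{unique} path from $f^\ast(v)$ to $f(v)$ in $D'$, i.e.\ it is a reversible arc; whereas an $E'_2$ arc $u'\to v'$ with $u',v'\in V_1'$, $L'(u')\in\{i^\uparrow,i^\downarrow\}$, $L'(v')=j^\downarrow$, $(i,j)\in\M$ is \emph{not} reversible, precisely because $g(v')\notin\Rem_1\supseteq\mathcal{V}(D)$ forces the underlying arc $g(u')\to g(v')$ to be non-reversible in $D$, and one can lift the alternative path from $D$ to $D'$. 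This is the content of the paper's Claim~\ref{claim-hdr-ex}, stated via the notion of a $(*,\downarrow)$-reversible arc, and its proof is where the constraint $\mathcal{V}(D)\subseteq\Rem_1$ in the definition of $\psi(D)$ is actually used. Once $V_2'$ is identified by the reversibility test (applied by a backward sweep over $\mathrm{List}'(D',i,j)$ as in the paper's Lemma~\ref{lemma-unique-D}), the rest of your reconstruction of $D$ and of $\Rem_1,\dots,\Rem_4$ from the labels goes through. Without the reversibility criterion, though, the reconstruction — and hence the injectivity argument — does not stand.
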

The proof of Theorem~\ref{thm-injection} is in the appendix.
Now we can prove the main theorem of this subsection.
\begin{theorem}\label{thm-dag-mapping}
	$\sum_{D\in \D(G_D)} \left(\prod_{v \in \overline{\mathcal{V}} (D)}p_{L(v)}\right)\left(\prod_{v\in \mathcal{V}(D)}p'_{L(v)}\right) \leq \sum_{D\in \D(G_D) }\prod_{v \text{ in } D}p^{-}_{L(v)}.$
\end{theorem}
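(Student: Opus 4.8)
The plan is to exploit the injection $h$ of Theorem~\ref{thm-injection} to charge each summand on the left-hand side to a bundle of pwdags of $G^{\M}$, and then use Proposition~\ref{obs-mapping-dag} to translate the resulting weight sum over $\D(G^{\M})$ back to the desired sum over $\D(G_D)$ with the vector $\vec{p}^-$. Concretely, I would fix $D=(V,E,L)\in\D(G_D)$ and estimate $\sum_{\Rem\in\psi(D)}\prod_{v'\text{ in }h(D,\Rem)}p^{\M}_{L'(v')}$ from below, showing it is at least $\big(\prod_{v\in\overline{\mathcal{V}}(D)}p_{L(v)}\big)\big(\prod_{v\in\mathcal{V}(D)}p'_{L(v)}\big)$. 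Summing this inequality over all $D\in\D(G_D)$, and using that $h$ is an injection into $\D(G^{\M})$ together with positivity of all weights (so that the images of distinct pairs $(D,\Rem)$ are distinct pwdags and the charging never double-counts), gives
\[
\sum_{D\in\D(G_D)}\Big(\prod_{v\in\overline{\mathcal{V}}(D)}p_{L(v)}\Big)\Big(\prod_{v\in\mathcal{V}(D)}p'_{L(v)}\Big)\ \le\ \sum_{D'\in\D(G^{\M})}\prod_{v'\text{ in }D'}p^{\M}_{L'(v')},
\]
and Proposition~\ref{obs-mapping-dag} rewrites the right-hand side as $\sum_{D\in\D(G_D)}\prod_{v\text{ in }D}p^-_{L(v)}$, which is exactly the claim.

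For the per-$D$ estimate I would read the weight of $h(D,\Rem)$ directly off Definition~\ref{def:h}. A node $v\notin\Mat$ always contributes a factor $p_{L(v)}=p^-_{L(v)}$; a node $v\in\mathcal{V}\subseteq\Rem_1$ always contributes $p'_{L(v)}$; and a node $v\in\Mat\setminus\mathcal{V}$ contributes, according to whether $\Rem$ places it in $\Rem_1,\Rem_2,\Rem_3$, or $\Rem_4$, the factor $p'_{L(v)}$, $p^-_{L(v)}-p'_{L(v)}$, $(p^-_{L(v)}-p'_{L(v)})\,p'_i$, or $(p^-_{L(v)}-p'_{L(v)})(p^-_i-p'_i)$ respectively, where $i$ denotes the $\M$-partner of $L(v)$ and the extra node $f^{\ast}(v)\in V_2'$ (created precisely when $v\in\Rem_3\cup\Rem_4$) carries the $p'_i$, resp.\ $p^-_i-p'_i$, factor. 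Since the part that $\Rem$ assigns to each node of $\Mat\setminus\mathcal{V}$ is chosen independently, summing over $\psi(D)$ factorizes over these nodes, and for each such $v$ the four choices sum to
\[
p'_{L(v)}+(p^-_{L(v)}-p'_{L(v)})\big(1+p'_i+(p^-_i-p'_i)\big)\ =\ p^-_{L(v)}+(p^-_{L(v)}-p'_{L(v)})\,p^-_i\ \ge\ p_{L(v)},
\]
the final inequality being exactly Fact~\ref{fact:210}. Hence $\sum_{\Rem\in\psi(D)}\prod_{v'\text{ in }h(D,\Rem)}p^{\M}_{L'(v')}\ \ge\ \big(\prod_{v\notin\Mat}p_{L(v)}\big)\big(\prod_{v\in\Mat\setminus\mathcal{V}}p_{L(v)}\big)\big(\prod_{v\in\mathcal{V}}p'_{L(v)}\big)$, and since $\overline{\mathcal{V}}(D)=(V\setminus\Mat)\sqcup(\Mat\setminus\mathcal{V})$ this is exactly $\big(\prod_{v\in\overline{\mathcal{V}}(D)}p_{L(v)}\big)\big(\prod_{v\in\mathcal{V}(D)}p'_{L(v)}\big)$, as needed.

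The only delicate points are bookkeeping: checking that the weight of $h(D,\Rem)$ factors as described (tracking the label assignment in Definition~\ref{def:h} and the identity $|V_2'|=|\Rem_3\cup\Rem_4|$), and invoking the injectivity of $h$ from Theorem~\ref{thm-injection} so the bundles attached to different $D$ do not overlap. I do not expect an analytic obstacle beyond a termwise application of Fact~\ref{fact:210}; the real work is organizing the sum over $\psi(D)$ so that it decomposes cleanly into an independent product over the nodes of $\Mat\setminus\mathcal{V}$.
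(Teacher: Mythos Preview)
Your proposal is correct and follows essentially the same approach as the paper's proof: both compute $\sum_{\Rem\in\psi(D)}\prod_{v'\text{ in }h(D,\Rem)}p^{\M}_{L'(v')}$ by reading off the contribution of each node from Definition~\ref{def:h}, factorize over $\Mat\setminus\mathcal{V}$ using the independence of the four-way choice in $\psi(D)$, apply Fact~\ref{fact:210} termwise, and finish with Theorem~\ref{thm-injection} and Proposition~\ref{obs-mapping-dag}. The paper merely introduces the shorthand $q^1_i,q^2_i,q^3_i,q^4_i$ for the four contributions you write out explicitly.
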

\begin{proof}	
For each $i\in [m]$ where $(i,j)\in \M$, let
 	\begin{equation*}
 		\begin{array}{llll}
 			q^1_i\triangleq p'_{i},\quad q^2_i\triangleq p^{-}_{i} - p'_{i},\quad q^3_i\triangleq (p^{-}_{i} - p'_{i})p'_{j},\quad \text{and }\quad q^4_i\triangleq (p^{-}_{i} - p'_{i})(p^{-}_{j} - p'_{j}).
 		\end{array} 
	\end{equation*}
According to Fact \ref{fact:210}, $q^1_i + q^2_i + q^3_i + q^4_i = p^{-}_{i} + p^{-}_{j}(p^{-}_{i} - p'_{i})\geq p_i$.

Given $D=(V,E,L)\in\D(G_D)$ and $\Rem\in\psi(D)$, let $D'=h(D,\Rem)$.
For each $v$ in $D$ where $(L(v),j)\in \M$ for some $j\in[m]$, according to the definition of $\vec{p}^{\M}$, (\ref{eq-f_v}), and (\ref{eq-f_ast_v}), we have that 
\begin{itemize}
\item if $v\in \Rem_1$, then $p^{\M}_{L'(f(v))} =p'_{L(v)}=q^1_{L(v)}$;
\item if $v\in \Rem_2$, then $p^{\M}_{L'(f(v))} =p^{-}_{L(v)} - p'_{L(v)}=q^2_{L(v)}$;
\item if $v\in \Rem_3$, then $p^{\M}_{L'(f(v))} \cdot p^{\M}_{L'(f^\ast(v))}=(p^{-}_{L(v)} - p'_{L(v)})p'_{j}= q^3_{L(v)}$;
\item if $v\in \Rem_4$, then $p^{\M}_{L'(f(v))}\cdot p^{\M}_{L'(f^\ast(v))} =(p^{-}_{L(v)} - p'_{L(v)})(p^{-}_{j} - p'_{j})= q^4_{L(v)}$.
\end{itemize}
Moreover, for each $u\in V\setminus \Mat(D) = \overline{\mathcal{V}(D)} \setminus \Mat(D)$, we have
$p^{\M}_{L'(f(v))} = p_{L(v)}$.
Thus, for each $\Rem \in \psi(D)$,
\begin{equation*}
	\begin{aligned}
		\prod_{v' \text{ in } h(D,\Rem)}p^{\M}_{L'(v')}
		&= \prod_{v \in \overline{\mathcal{V}} \setminus \Mat }p_{L(v)}\prod_{v \in \Rem_1}q^1_{L(v)}
		 \prod_{v \in \Rem_2}q^2_{L(v)}	
		 \prod_{v \in \Rem_3}q^3_{L(v)}		
		 \prod_{v \in \Rem_4}q^4_{L(v)}
		\\&= \prod_{v \in \overline{\mathcal{V}} \setminus \Mat }p_{L(v)}\prod_{v \in \mathcal{V}}p'_{L(v)}
		\prod_{v \in \Rem_1\setminus \mathcal{V}}q^1_{L(v)}	
		 \prod_{v \in \Rem_2}q^2_{L(v)}	
		 \prod_{v \in \Rem_3}q^3_{L(v)}		
		 \prod_{v \in \Rem_4}q^4_{L(v)}.
	\end{aligned}
\end{equation*}
So
\begin{equation*}
	\begin{aligned}
	\sum_{\Rem\in \psi(D) }\prod_{v' \text{ in } h(D,\Rem)}p^{\M}_{L'(v')}
	&= \sum_{\Rem\in \psi(D) }\prod_{v \in \overline{\mathcal{V}} \setminus \Mat }p_{L(v)}\prod_{v \in \mathcal{V}}p'_{L(v)}
	\prod_{v \in \Rem_1\setminus \mathcal{V}}q^1_{L(v)}	
	\prod_{v \in \Rem_2}q^2_{L(v)}	
	\prod_{v \in \Rem_3}q^3_{L(v)}		
	\prod_{v \in \Rem_4}q^4_{L(v)}
	\\&=\prod_{v \in \overline{\mathcal{V}} \setminus \Mat }p_{L(v)}\prod_{v \in \mathcal{V}}p'_{L(v)}\sum_{\Rem\in \psi(D) }\prod_{v \in \Rem_1\setminus \mathcal{V}}q^1_{L(v)}	
	\prod_{v \in \Rem_2}q^2_{L(v)}	
	\prod_{v \in \Rem_3}q^3_{L(v)}		
	\prod_{v \in \Rem_4}q^4_{L(v)}
	\\&=\prod_{v \in \overline{\mathcal{V}} \setminus \Mat }p_{L(v)}\prod_{v \in \mathcal{V}}p'_{L(v)}\prod_{v \in \Mat\setminus \mathcal{V} }\left(q^1_{L(v)} + q^2_{L(v)}
	+q^3_{L(v)}  +q^4_{L(v)}\right)
	\\&\geq \prod_{v \in \overline{\mathcal{V}} \setminus \Mat }p_{L(v)}\prod_{v\in \mathcal{V}}p'_{L(v)}\prod_{v \in \Mat\setminus \mathcal{V} }p_{L(v)} 
	\\&= \prod_{v\in  \overline{\mathcal{V}}}p_{L(v)}\prod_{v\in \mathcal{V}}p'_{L(v)},
	\end{aligned}
\end{equation*}
where the third equality is according to the definition of $\psi(D)$. Finally,
	\begin{equation*}
		\begin{aligned}
			\sum_{D\in \D(G_D) }\left(\prod_{v \in \overline{\mathcal{V}} (D)}p_{L(v)}\right)\left(\prod_{v\in \mathcal{V}(D)}p'_{L(v)}\right)
			&\leq \sum_{D\in \D(G_D) }\sum_{\Rem\in \psi(D) }\prod_{v' \text{ in } h(D,\Rem)}p^{\M}_{L'(v')} \leq \sum_{D'\in \D(G^{\M})} \prod_{v \text{ in } D'}p^{\M}_{L'(v')}\\
           &= \sum_{D\in \D(G_D) }\prod_{v \text{ in } D}p^{-}_{L(v)},
		\end{aligned}
	\end{equation*}
where the second inequality is due to  Theorem~\ref{thm-injection} and the equality is by Proposition~\ref{obs-mapping-dag}.
\end{proof}

\subsection{Putting all things together}
The following lemma is implicitly proved in \cite{kolipaka}.
\begin{lemma}[\cite{kolipaka}]\label{lem:szegedy2}
For any undirected graph $G_D=([m],E_D)$ and probability vector $\vec{p}\in \Interior_a(G_D)/(1+\eps)$, $\sum_{i\in [m]}\frac{q_{\{i\}}(G_D,\vec{p})}{q_{\emptyset}(G_D,\vec{p})}\leq m/\eps$.
\end{lemma}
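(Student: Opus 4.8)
The plan is to reduce the lemma to the one-vertex estimate $q_{\{i\}}(G_D,\vec p)<\tfrac1\eps\,q_\emptyset(G_D,\vec p)$ for every $i\in[m]$, and then sum over the $m$ vertices. Two facts about Shearer's bound drive the argument. First, $\Interior_a(G_D)$ is downward closed: if $\vec q\le\vec q'$ entrywise and $\vec q'\in\Interior_a(G_D)$, then $\vec q\in\Interior_a(G_D)$. (This is part of Shearer's analysis; a self-contained argument is to take any $\EventSet'\sim(G_D,\vec q)$, enlarge each $A'_i$ to $A_i:=A'_i\cup B_i$ with $B_i$ a fresh event independent of everything else and $\Pr(B_i)$ chosen so that $\Pr(A_i)=q'_i$; then $G_D$ is still a dependency graph of $\{A_i\}$ and $\Pr(\bigcap_i\Neg{A'_i})\ge\Pr(\bigcap_i\Neg{A_i})>0$.) Since $(1+\eps)\vec p\in\Interior_a(G_D)$, every vector dominated by $(1+\eps)\vec p$ — in particular $\vec p$ itself — lies in $\Interior_a(G_D)$, so by Theorem~\ref{thm:shearer1985problem} all of $q_\emptyset(G_D,\vec p)$ and the $q_{\{i\}}(G_D,\vec p)$ are strictly positive; this is what makes the ratios well behaved.

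The second ingredient is that $q_\emptyset(G_D,\cdot)$ is affine-linear in each coordinate, because every independent set uses a given vertex at most once. Fixing $i$, write $q_\emptyset(G_D,\vec p)=\alpha+p_i\beta$ with $\alpha,\beta$ independent of $p_i$. Comparing with the definitions, $p_i\beta=\sum_{J\in\mathrm{Ind}(G_D),\,i\in J}(-1)^{|J|}\prod_{j\in J}p_j=-q_{\{i\}}(G_D,\vec p)$, hence $\alpha=q_\emptyset(G_D,\vec p)+q_{\{i\}}(G_D,\vec p)$. Let $\vec p^{(i)}$ be $\vec p$ with its $i$-th entry replaced by $(1+\eps)p_i$. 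Then $\vec p^{(i)}\le(1+\eps)\vec p$, so $\vec p^{(i)}\in\Interior_a(G_D)$ and $q_\emptyset(G_D,\vec p^{(i)})>0$ by the previous paragraph; on the other hand affine-linearity gives $q_\emptyset(G_D,\vec p^{(i)})=\alpha+(1+\eps)p_i\beta=q_\emptyset(G_D,\vec p)-\eps\,q_{\{i\}}(G_D,\vec p)$. Hence $q_{\{i\}}(G_D,\vec p)<\tfrac1\eps\,q_\emptyset(G_D,\vec p)$, and dividing by $q_\emptyset(G_D,\vec p)>0$ and summing over $i\in[m]$ gives $\sum_{i\in[m]}\frac{q_{\{i\}}(G_D,\vec p)}{q_\emptyset(G_D,\vec p)}<\frac m\eps$, which is the claim (even with a strict inequality).

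I do not anticipate a real obstacle. The two points that need a bit of care are justifying the downward closure of $\Interior_a(G_D)$ (which is exactly where Shearer's characterization, Theorem~\ref{thm:shearer1985problem}, is used) and checking that the one-coordinate perturbation $\vec p^{(i)}$ stays below $(1+\eps)\vec p$ so that it remains in Shearer's bound; everything else is the elementary identity that the coefficient of $p_i$ in $q_\emptyset$ equals $-q_{\{i\}}/p_i$.
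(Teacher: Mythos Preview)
The paper does not give its own proof of this lemma; it simply states that it is ``implicitly proved in \cite{kolipaka}'' and immediately uses it. Your argument is correct and is in fact the standard one from that reference: use affine-linearity of $q_\emptyset$ in each coordinate to obtain the identity $q_\emptyset(G_D,\vec p^{(i)})=q_\emptyset(G_D,\vec p)-\eps\,q_{\{i\}}(G_D,\vec p)$, invoke downward-closure of Shearer's bound to ensure $q_\emptyset(G_D,\vec p^{(i)})>0$, and sum the resulting per-vertex inequalities. So your proposal matches the intended (cited) proof.
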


\begin{thmbis}{thm:main}[restated]
For any $\A\sim(G_D,\vec{p},\M,\vec{\delta})$, if $(1+\eps)\cdot \vec{p}^-\in \Interior_a(G_D)$, then the expected number of resampling steps performed by MT algorithm is most $m/\eps$, where $m$ is the number of events in $\A$.
\end{thmbis}
\begin{proof}
Fix any such $\A$. We have that 
\[
\E[T]\leq \sum_{D\in \D(G_D) }\prod_{v \text{ in } D}p^{-}_{L(v)} \leq \frac{\sum_{i\in [m]}q_{\{i\}}(G_D,\vec{p}^{-})}{q_{\emptyset}(G_D,\vec{p}^{-})}\leq \frac{m}{\eps},
\]
where the first inequality is by Theorems~\ref{thm-ub-R} and~\ref{thm-dag-mapping}, the second inequality is due to Theorem 4 in~\cite{kolipaka}, and the last inequality is according to Lemma \ref{lem:szegedy2}.
\end{proof}
\section{Lower bound on the amount of intersection}\label{sec:lower-bound-intersection}

In order to explore how far beyond Shearer's bound MT algorithm is still efficient in general, we provide a lower bound on the amount of intersection between dependent events for general instances  (Theorem \ref{thm:lower-bound-intersection}).

We first introduce some notations. Given a bipartite graph $G_B=([m],[n],E_B)$, we call the vertex $i\in[m]$ left vertex and the vertex $j\in [n]$ right vertex. 
We call $G_B$ \emph{linear}\footnote{The notion is not arbitrary. The bipartite graph $G_B$ can be represented by a hypergraph in a natural way: each right vertex $j$ is represented by a node $v_j$ in the hypergraph, each left vertex $i$ is represented by a hyperedge $e_i$, and $v_j$ is in $e_i$ if and only if $(i,j)\in E_B$. A hypergraph is called linear if any two hyperedges share at most one node.}
if any two left vertices in $[m]$ share at most one common neighbor in $[n]$. Let $\Delta_{D}(G_B)$ denote the maximum degree of $G_D(G_B)$, and $\Delta_{B}(G_B)$ denote the maximum degree of the left vertices in $G_B$. If $G_B$ is clear from the context, we may omit $G_B$ from these notations. In addition, for a bipartite graph $G=(L\subset[m],R,E)$ and a probability vector $\vec{p}\in (0,1)^m$, we define\footnote{It is possible that $\lowerBB(G,\vec{p})<0$.}
\[
\lowerBB(G,\vec{p}) \triangleq \frac{\big(\min_{i\in L} p_i\big)^2\cdot\left(-|\cup_{i\in L}\Neighbor_{G}(i)|+\sum_{i\in L} |\mathcal{N}_G(i)|\cdot p_i^{1/|\mathcal{N}_G(i)|} \right)}{\sqrt{|L|}\cdot\Delta_D(G)\cdot\Delta_B(G)^2}.
\]
and $\lowerBB^+(G,\vec{p}) \triangleq \max\{\lowerBB(G,\vec{p}),0\}$.

We use $\EventSet\sim(G_B,\vec{p})$ to denote that (i) $G_B$ is an event-variable graph of $\EventSet$ and (ii) the probability vector of $\EventSet$ is $\vec{p}$. 
Let $\M=\{(i_1,i_1'),(i_2,i_2'),\cdots\}$ be a matching of $G_D(G_B)$, and  $\vecdelta=(\delta_{i_1,i_1'},\delta_{i_2,i_2'},\cdots)\in(0,1)^{|\M|}$ be another probability vector.
We say that an event set $\A$ is of the setting $(G_B,\vec{p},\M,\vecdelta)$, and write $\A\sim(G_B,\vec{p},\M,\vecdelta)$, if $\A\sim(G_B,\vec{p})$
and $\Pr(A_i\cap A_{i'})\geq \delta_{i,i'}$ for each pair $(i,i')\in\M$.

We call an event $A$ \emph{elementary}, if $A$ can be written as 
$(X_{i_1} \in S_{i_1})\land (X_{i_2} \in S_{i_2}) \land\cdots \land (X_{i_k} \in S_{i_k})$
where $S_{i_1},\cdots,S_{i_k}$ are subsets of the domains of variables.
We call an event set $\A$ \emph{elementary} if all events in $\A$ are elementary. 
\begin{theorem}\label{thm:lower-bound-intersection}
	Let $G_B=([m],[n],E_B)$ be a bipartite graph, $\vec{p}\in(0,1]^m$ be a probability vector, and $L_1,L_2,\cdots,L_t$ be a collection of disjoint subsets of $[m]$. For each $k\in [t]$,
	let $G_k$ denote the induced subgraph on $L_k\cup \left(\cup_{i\in L_k}\Neighbor_{G_B}(i)\right)$ and
	$E_k$ denote the edge set of $G_D(G_k)$.
	If all $G_k$'s are linear, then the following holds.
	
	If $\A\sim(G_B,\vec{p})$, then there is a matching $\M$ of $G_D(G_B)$ 
	satisfying that $\sum_{(i,i')\in \M\cap E_k}\Pr(A_i\cap A_{i'})^2 \geq \left(\lowerBB^+(G_k,\vec{p})\right)^2$ for any $k$.
\end{theorem}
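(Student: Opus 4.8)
\emph{Plan.} I would follow the two–step strategy announced in the technique overview: reduce an arbitrary variable–generated system to an \emph{elementary} one without increasing the relevant pairwise intersections (Lemma~\ref{lem:general2elementary}), and then handle elementary systems by an AM--GM/double–counting argument (Lemma~\ref{lemma:elementary}). Two easy preliminary reductions: first, since the $L_k$ are pairwise disjoint, an edge of $G_D(G_B)$ lies in at most one $E_k$, and a matching chosen inside $G_D(G_k)$ only touches vertices of $L_k$; hence one may pick such a matching independently for each $k$ and take the union, which is again a matching of $G_D(G_B)$. So it suffices to fix one $k$, write $G=G_k$, $L=L_k$, $R=\cup_{i\in L}\mathcal{N}_{G_B}(i)$, and produce a matching $\M$ of $G_D(G)$ with $\sum_{(i,i')\in\M}\Pr(A_i\cap A_{i'})^2\ge(\lowerBB^+(G,\vec p))^2$. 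Second, we may assume $\lowerBB(G,\vec p)>0$, i.e.\ $\Sigma:=-|R|+\sum_{i\in L}|\mathcal{N}_G(i)|\,p_i^{1/|\mathcal{N}_G(i)|}>0$, since otherwise there is nothing to prove.

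\emph{Reduction to elementary systems.} For any $\A\sim(G_B,\vec p)$ I would construct an elementary $\A'\sim(G_B,\vec p)$ (same probability vector) such that $\Pr(A'_i\cap A'_{i'})\le\Pr(A_i\cap A_{i'})$, up to an error that can be made arbitrarily small, for every edge $(i,i')$ of $G_D(G_B)$. Since $G_D(G_B)$ and the target bound depend only on $G_B$ and $\vec p$, proving the inequality for $\A'$ and using the same matching then gives it for $\A$ (finiteness of the relevant matchings lets one pass to the limit in the error). The construction discretizes each $X_j$ and then replaces each event by a ``product'' (combinatorial–rectangle) event of the same probability, arguing by an exchange/rearrangement argument that elementarizing does not increase the pairwise intersections one cares about. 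I expect \emph{this} to be the main obstacle: one cannot modify a single event in isolation, because that simultaneously changes its intersections with \emph{all} its neighbours, so the argument has to be carried out globally (or as a limit over finer discretizations) while keeping the probabilities exactly $\vec p$.

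\emph{The elementary case.} Given an elementary $\A'$, write $A_i=\bigwedge_{j\in\mathcal{N}_G(i)}(X_j\in S^i_j)$ and $\mu^i_j:=\Pr(X_j\in S^i_j)$, so that $p_i=\prod_{j\in\mathcal{N}_G(i)}\mu^i_j$ by independence of the $X_j$. Applying AM--GM to each event gives $\sum_{j\in\mathcal{N}_G(i)}\mu^i_j\ge|\mathcal{N}_G(i)|\,p_i^{1/|\mathcal{N}_G(i)|}$; summing over $i\in L$ and exchanging the order of summation yields $\sum_{j\in R}(\E[N_j]-1)\ge\Sigma$, where $N_j:=|\{i\in L:X_j\in S^i_j\}|$ and $\E[N_j]=\sum_{i\in L,\,j\in\mathcal{N}_G(i)}\mu^i_j$. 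In particular the total ``overlap'' $\sum_{j}(\E[N_j]-1)_+$ is at least $\Sigma$, so it is enough to turn this overlap into intersections of matched events. For each $j$ with $\E[N_j]>1$, convexity of $x\mapsto\binom{x}{2}$ and Jensen give $\sum_{\{i,i'\}\ \mathrm{through}\ j}\Pr(X_j\in S^i_j\cap S^{i'}_j)=\E\!\left[\binom{N_j}{2}\right]\ge\binom{\E[N_j]}{2}$, which together with $\Pr(A_i\cap A_{i'})=\Pr(X_j\in S^i_j\cap S^{i'}_j)\cdot\frac{p_i}{\mu^i_j}\cdot\frac{p_{i'}}{\mu^{i'}_j}\ge(\min_{i\in L}p_i)^2\,\Pr(X_j\in S^i_j\cap S^{i'}_j)$ (using $\mu\le1$) converts the overlap at $j$ into a lower bound on $\sum_{(i,i')\ \mathrm{through}\ j}\Pr(A_i\cap A_{i'})$.

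\emph{Assembling the matching.} Summing the previous estimate over all $j$ with positive overlap — each pair being counted once because $G$ is linear — produces a set of edges of $G_D(G)$ of total $\Pr(\cdot\cap\cdot)$–weight $\gtrsim(\min_{i\in L}p_i)^2\,\Sigma$. This edge set is a subgraph of $G_D(G)$, hence has maximum degree $\le\Delta_D(G)$, and each $i\in L$ meets at most $\Delta_B(G)$ variables; an edge–colouring (or greedy) decomposition therefore extracts a single matching $\M$ with $\sum_{(i,i')\in\M}\Pr(A_i\cap A_{i'})$ losing only factors of $\Delta_D(G)$ and $\Delta_B(G)$ (the $(\min p)^2$ loss at an endpoint is genuinely incurred only when that endpoint has degree $\ge2$ in $G$, which forces $\Delta_B(G)\ge2$, which is what leaves room for the $\Delta_B(G)^2$ in the denominator). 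Finally, Cauchy--Schwarz, $\sum_{(i,i')\in\M}\Pr(A_i\cap A_{i'})^2\ge|\M|^{-1}\big(\sum_{(i,i')\in\M}\Pr(A_i\cap A_{i'})\big)^2$ with $|\M|\le|L|/2$, accounts for the $\sqrt{|L|}$ in $\lowerBB$ and yields $\sum_{(i,i')\in\M}\Pr(A_i\cap A_{i'})^2\ge(\lowerBB^+(G,\vec p))^2$. Matching the \emph{exact} denominator $\sqrt{|L|}\cdot\Delta_D(G)\cdot\Delta_B(G)^2$ is the main technical (as opposed to conceptual) point in this last step; it should come out of the bookkeeping above with room to spare.
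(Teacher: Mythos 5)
Your high-level plan (reduce to one $L_k$, elementarize, AM--GM/double counting, then extract a matching via Cauchy--Schwarz and greedy deletion) is exactly the paper's, and your elementary-case argument via Jensen on $\binom{N_j}{2}$ is a perfectly reasonable variant of the paper's Bonferroni step in Lemma~\ref{lemma:elementary}. However, the elementarization step is not a minor obstacle to be waved at — it is where your proposal breaks. You want an elementary $\A'$ with $\Pr(A'_i\cap A'_{i'})\le\Pr(A_i\cap A_{i'})$ for every edge (``up to an error that can be made arbitrarily small''). That pairwise inequality is too strong and is not what the paper proves: Lemma~\ref{lem:general2elementary} only shows that some elementary $\A'$ has $\sum_{(i_0,i_1)\in E_D}\Pr[A'_{i_0}\cap A'_{i_1}]\le\Delta_B(G_B)^2\cdot\gamma$, where $\gamma$ is the minimum of the sum over \emph{all} instances — a one-sided \emph{sum} comparison with a genuine $\Delta_B^2$ multiplicative loss, obtained by replacing events one at a time with ``rectangles'' whose side lengths are carefully balanced so that the overshoot $\int_{x_j\ge q_j}\mu_j$ is equalized across the coordinates $j\in\mathcal N(i)$, and then bounding what the replacement does to neighbouring intersections (Claim~\ref{claim:sec3_2}). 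Your stronger pairwise claim would, if true, yield a cleaner bound with no $\Delta_B^2$ in the denominator; nothing in the proposal indicates how to achieve it, and it is unlikely to hold, since modifying $A_i$ perturbs \emph{all} of its pairwise intersections simultaneously and rectangular sets need not minimize any single one.

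A consequential side effect: you attribute the $\sqrt{|L|}\cdot\Delta_D\cdot\Delta_B^2$ denominator of $\lowerBB$ to the matching-extraction step. In the paper's accounting, $\Delta_B^2$ is entirely consumed by the elementarization (Lemma~\ref{lem:general2elementary}), $\Delta_D$ by the greedy extraction of a matching from $E_D$ (each chosen edge kills at most $2\Delta_D$ others), and $\sqrt{|L|}$ by Cauchy--Schwarz together with $2|E_D|\le m\,\Delta_D$. So ``it should come out with room to spare'' is optimistic: if you do not pay $\Delta_B^2$ at the elementarization stage, the bookkeeping in your last step will not produce it either, and you would be claiming a strictly stronger theorem than the one stated. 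To complete the argument along your lines you either need to carry out the paper's balanced one-variable-at-a-time replacement (and accept the $\Delta_B^2$ loss there), or supply a genuinely new elementarization achieving the pairwise inequality.
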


The proof of Theorem~\ref{thm:lower-bound-intersection} mainly consists of two parts. First, we show that there is an elementary event set which approximately achieves the minimum amount of intersection between dependent events (Lemma \ref{lem:general2elementary}). Then, for elementary event sets, by applying AM-GM inequality, we obtain a lower bound on the total amount of overlap on common variables, which further implies a lower bound on the amount of intersection between dependent events (Lemma \ref{lemma:elementary}).

\begin{lemma}\label{lem:general2elementary}
Let $G_B=([m],[n],E_B)$ be a linear bipartite graph, 
$E_D$ be the edge set of $G_D(G_B)$, and $\vec{p}\in(0,1]^m$ is a probability vector. Let $\gamma$ denote the minimum $\sum_{(i_0,i_1)\in E_D}\Pr[A_{i_0}\cap A_{i_1}]$ among all event sets $\A=(A_1,\cdots,A_m)\sim(G_B,\vec{p})$. Then there is an elementary event set $\A'$ such that
 $\sum_{(i_0,i_1)\in E_D}\Pr[A_{i_0}'\cap A_{i_1}']\leq (\mdl_B(G_B))^2\cdot\gamma$.
\end{lemma}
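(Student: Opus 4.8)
The plan is to start from a general event set $\A = (A_1,\dots,A_m) \sim (G_B,\vec p)$ that achieves (or nearly achieves) the minimum total pairwise-intersection mass $\gamma$, and then \emph{round} each event $A_i$ to an elementary event $A_i'$ of the form $\bigwedge_{j \in \N_{G_B}(i)} (X_j \in S_j^i)$ with the same probability $p_i$, while controlling how much the pairwise intersections can grow. First I would fix one coordinate at a time: for each right vertex $j$ (variable $X_j$) and each left vertex $i$ adjacent to it, I want to replace the ``slice'' of $A_i$ in the $X_j$-direction by a product set. Concretely, for each $i$ consider the conditional probability $f_i^j(x) := \Pr[A_i \mid X_j = x]$ as a function of $x$ in the domain of $X_j$; the natural elementary surrogate for the $X_j$-marginal behavior is the threshold set $S_j^i := \{x : f_i^j(x) \text{ is among the largest, chosen so that } \Pr[X_j \in S_j^i] \text{ matches the appropriate factor}\}$, i.e.\ a ``bang-bang'' set. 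Because $G_B$ is linear, each pair $(i_0,i_1) \in E_D$ shares exactly one common variable $X_j$, so the intersection $\Pr[A_{i_0}' \cap A_{i_1}']$ after rounding depends only on the interaction along that single shared coordinate; this is what makes the linear hypothesis essential and the bookkeeping tractable.

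The key quantitative step is to show that rounding inflates the total intersection mass by a factor of at most $(\Delta_B(G_B))^2$. Here is the mechanism I would use. For the shared variable $X_j$ of a pair $(i_0,i_1)$, write $a_0(x) = \Pr[A_{i_0}\mid X_j=x]$ and $a_1(x) = \Pr[A_{i_1}\mid X_j = x]$; since $A_{i_0}$ and $A_{i_1}$ are conditionally independent given $X_j$ (their other variables are disjoint by linearity), $\Pr[A_{i_0}\cap A_{i_1}] = \E_x[a_0(x) a_1(x)]$. After replacing each by an elementary event, $a_t(x)$ becomes an indicator of a threshold set scaled by the remaining product $\prod_{j'\neq j} \Pr[X_{j'}\in S_{j'}^{i_t}]$. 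The point is that among all functions $0 \le a_t(x) \le 1$ with a fixed mean, the one maximizing correlation with another such function is the threshold indicator, but the scaling factors (the ``mass'' distributed across the up-to-$\Delta_B$ variables of each event) can concentrate all of $p_{i_t}$ onto the shared coordinate, costing a factor $1/p_{i_t}^{(\deg-1)/\deg}$ type loss per event --- two events, hence the square. I would make this precise by an exchange/smoothing argument: process right vertices one at a time, and at each step show the total intersection mass increases by a factor of at most $\Delta_B(G_B)$ (one factor per endpoint touched), using that each variable participates in at most $\Delta_B$ events and Jensen/AM-GM to bound the redistribution loss, landing at the claimed $(\Delta_B(G_B))^2 \gamma$.

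The main obstacle I anticipate is the second step --- proving the $(\Delta_B)^2$ multiplicative bound on how much the intersection mass can grow under rounding --- and in particular making the coordinate-by-coordinate induction clean. The subtlety is that changing $A_i$ along coordinate $X_j$ affects its intersections with \emph{all} neighbors through $X_j$ simultaneously, and one must verify that the cumulative inflation telescopes to exactly two factors of $\Delta_B$ rather than one per coordinate of $A_i$ (which would be far too lossy). I expect the right framing is: fix all events' ``shapes'' except along one right vertex $X_j$, and argue that the optimal (minimum-total-intersection) configuration given that the marginals are prescribed is already elementary in the $X_j$ direction, so no loss is incurred there; the only genuine loss comes from passing from ``prescribed conditional marginals'' (the true $\A$) to ``prescribed unconditional probabilities $p_i$'' (what an elementary event is forced to have), and that one-shot loss is where the $\prod p^{1/\deg}$ terms --- hence eventually $\lowerBB$ and the factor $\Delta_B^2$ --- enter. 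Handling the conditional independence bookkeeping carefully under the linearity assumption, and ruling out pathological slice functions, is the delicate part; everything else is AM-GM and union-bound-style estimates.
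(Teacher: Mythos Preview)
Your high-level strategy is right---start from an optimal $\A$, round each $A_i$ to a product event, bound the inflation by $\Delta^2$ where $\Delta=\Delta_B(G_B)$---and you correctly use linearity to write $\Pr[A_{i_0}\cap A_{i_1}]=\E_{X_j}[a_0(X_j)a_1(X_j)]$ for the unique shared coordinate. But the plan has a real gap at the step you yourself flag. The paper iterates over \emph{events} (left vertices), not variables: it replaces one whole $A_i$ at a time by an elementary box $[0,q_1]\times\cdots\times[0,q_k]$ in its $k\le\Delta$ coordinates. Your variable-by-variable scheme would force you to maintain hybrid events that are elementary in some coordinates and general in others, making the constraint $\Pr[A_i]=p_i$ and the intersection bookkeeping hard to control; and ``at each right vertex the mass grows by a factor $\Delta$'' would compound to $\Delta^n$, not $\Delta^2$. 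Your alternative hope that the optimal configuration is already elementary along each coordinate (so no loss) is a separate, unproven claim and not the route the paper takes.

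The missing technical device is twofold. First, the box for $A_i'$ is not an arbitrary threshold set: after rearranging each coordinate so that the weighted neighbor-density $P_j(x_j)$ is monotone, one chooses $(q_1,\dots,q_k)$ with $\prod_j q_j=p_i$ \emph{and} with the tail masses $\int_{x_j\ge q_j}\Pr[A_i\mid X_j=x_j]\,dx_j$ equal across all $j$. This balance is precisely what lets one bound, uniformly in $j$, the part of $A_i'\setminus A_i$ landing in the expensive region of $P_j$ by at most $\tfrac{1}{k}\le\tfrac{1}{\Delta}$ of the mass of $A_i\setminus A_i'$, producing one factor of $\Delta$ per event-replacement. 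Second, your ``one factor per endpoint'' heuristic is implemented via a weighted telescoping: in $P_j$ the already-replaced neighbors are given weight $1/\Delta$ and the not-yet-replaced ones weight $1$, yielding a per-step inequality which, summed over all $i$, collapses to $\tfrac{1}{\Delta}\sum_{(i_0,i_1)\in E_D}\Pr[A'_{i_0}\cap A'_{i_1}]\le\Delta\sum_{(i_0,i_1)\in E_D}\Pr[A_{i_0}\cap A_{i_1}]$. Without the balanced-tail box and this asymmetric weighting there is no mechanism preventing the loss from compounding beyond $\Delta^2$.
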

\begin{proof}
For simplicity, we let $\Delta\triangleq\Delta_B(G_B)$.
Without loss of generality, we assume that each random variable $X_i$ is uniformly distributed over $[0,1]$.
Let $\A\sim(G_B,\vec{p})$ be an event set where $\sum_{(i_0,i_1)\in E_D}\Pr[A_i\cap A_j]=\gamma$.  We will replace $A_i$ with an elementary $A_i'$ one by one for each $i=1,2,\cdots,m$, so that the resulted event set $\A'$ satisfies $\sum_{(i_0,i_1)\in E_D}\Pr[A_{i_0}'\cap A_{i_1}']\leq \mdl^2 \cdot \sum_{(i_0,i_1)\in E_D}\Pr[A_{i_0}\cap A_{i_1}]=\mdl^2 \cdot \gamma$. 

More precisely, fix $i\in[m]$ and suppose $A_1,\cdots,A_{i-1}$ have been replaced with elementary events $A_1',\cdots,A_{i-1}'$ respectively. For simplicity of notations, for any pair $i_0<i_1$, we abbreviate $\Pr[A_{i_0}\cap A_{i_1}]$, $\Pr[A_{i_0}'\cap A_{i_1}]$ and $\Pr[A_{i_0}'\cap A_{i_1}']$ to $p_{i_0,i_1}$, $p_{i_0,i_1}'$ and $p_{i_0,i_1}''$ respectively. Without loss of generality, we assume $A_i$ depends on variables $X_1,X_2,\cdots,X_k$. For every $j\in[k]$, we define
\[
P_j(x_j):=\sum_{i_0<i, i_0\in \Neighbor_{G_B}(j)}\frac{1}{\Delta}\cdot \Pr[A_{i_0}'\mid X_j=x_j]+\sum_{i_0>i, i_0\in \Neighbor_{G_B}(j)}\Pr[A_{i_0}\mid X_j=x_j].
\]
for $x_j\in[0,1]$. Without loss of generality, we assume $P_j(\cdot)$ is non-decreasing. Let $\mu:[0,1]^k\rightarrow\{0,1\}$ be the indicator of $A_i$, then
\begin{align*}
\int_{x_1,\cdots,x_k}\mu(x_1,\cdots,x_k)\d x_1\cdots \d x_k=\Pr[A_i],
\end{align*}
For each $j\in[k]$, let 
\[
\mu_j(x_j):=\Pr[A_i\mid X_j=x_j]=\int_{x_1,\cdots,x_{j-1},x_{j+1},\cdots,x_k}\mu(x_1,\cdots,x_k)\d x_1\cdots\d x_{j-1}\d x_{j+1}\cdots\d x_k.
\]
Noticing that $G_B$ is linear (i.e., any two events share at most one common variable), we have
\begin{align}\label{eq:sec3_1}
\int_{x_j} P_j(x_j)\mu_j(x_j)\d x_j
=\sum_{i_0<i, i_0\in \Neighbor_{G_B}(j)}\frac{p'_{i_0,i}}{\mdl}+\sum_{i_0>i, i_0\in \Neighbor_{G_B}(j)}p_{i_0,i}.
\end{align}

Let $A_i'$ be an elementary event such that it happens if and only if $(x_1,\cdots,x_k)\in [0,q_1]\times\cdots\times [0,q_k]$. Here $q_1,\cdots,q_k$ is a set of positive real numbers satisfying that
 \begin{itemize}
 \item[(i)] $\Pi_{j=1}^kq_j=\Pr[A_i]$. That is, $\Pr[A_i']=\Pr[A_i]$;
 \item[(ii)] $\int_{x_1\geq q_1}\mu_1(x_1)\d x_1=\int_{x_2\geq q_2}\mu_2(x_2)\d x_2\cdots=\int_{x_k\geq q_k}\mu_k(x_k)\d x_k$.
 \end{itemize}
 \begin{claim}
Such $\{q_1,\cdots,q_k\}$ exists. Thus so does $A_i'$.
 \end{claim}
 \begin{proof}
We prove a generalized statement in which $\Pi_{j=1}^kq_j$ can be required to be an arbitrary number in $[0,1]$. Our proof is by induction on $k$. The base case when $k=1$ is trivial. Now we assume that for any preset $q'\in(0,1]$, there exist $\{q_1,\cdots,q_{k-1}\}$ satisfying that 
\begin{itemize}
\item[(i)] $\Pi_{j=1}^{k-1}q_j=q'$ and
\item[(ii)] $\int_{x_1\geq q_1}\mu_1(x_1)\d x_1=\cdots=\int_{x_{k-1}\geq q_{k-1}}\mu_{k-1}(x_{k-1})\d x_{k-1}$.
\end{itemize}
Let $f(q')$ denote the minimum $\int_{x_1\geq q_1}\mu_1(x_1)\d x_1$ among all such $\{q_1,\cdots,q_{k-1}\}$'s. It is easy to see that $f(1)=0$ and $f$ is continuous and non-increasing.

Fix an arbitrary $q\in[0,1]$. We define $g(q''):=\int_{x_k\geq q/q'}\mu_k(x_k)\d x_k$ for $q''\in[q,1]$. Obviously, $g(q)=0$ and $g$ is continuous and non-decreasing. So there must exist a $q^\ast\in[q,1]$ such that $g(q^\ast)=f(q^\ast)$. Then let $\{q_1^\ast,\cdots,q_{k-1}^\ast\}$ be a set of positive real numbers where 
\begin{itemize}
\item[(i)] $\Pi_{j=1}^{k-1}q_j^\ast=q^\ast$ and
\item[(ii)] $f(q^\ast)=\int_{x_1\geq q_1^\ast}\mu_1(x_1)\d x_1=\cdots=\int_{x_{k-1}\geq q_{k-1}^\ast}\mu_{k-1}(x_{k-1})\d x_{k-1}$.
\end{itemize}
Let $q_k^\ast=q/q^\ast$. It is obvious that $\Pi_{j=1}^kq_j^\ast=q$ and  $f(q^\ast)=g(q^\ast)=\int_{x_k\geq q_k^\ast}\mu_k(x_k)\d x_k$. This completes the induction step.
\end{proof}

\begin{claim}\label{claim:sec3_2}
For every $j\in[k]$, we have
\[
\sum_{i_0<i, i_0\in \Neighbor_{G_B}(j)}\frac{p_{i_0,i}''}{\mdl}+\sum_{i_0>i, i_0\in \Neighbor_{G_B}(j)}p_{i_0,i}'\leq \sum_{i_0<i, i_0\in \Neighbor_{G_B}(j)}p'_{i_0,i}+\mdl\cdot\sum_{i_0>i, i_0\in \Neighbor_{G_B}(j)}p_{i_0,i}.
\]
\end{claim}
\begin{proof}
Let $\mu_{i'},\mu_{i\cap i'}$, and $\mu_{i'\setminus i}$ denote the indicator functions of the events $A_i'$, $A_i'\cap A_i$, and $A_i'\setminus A_i$ respectively. Since $\Pr[A_i']=\Pr[A_i]$,
\[
\int_{x_1\geq q_1}\mu_1(x_1)\d x_1+\cdots+\int_{x_k\geq q_k}\mu_k(x_k)\d x_k\geq \Pr[A_i\setminus A_i']=\Pr[A_i'\setminus A_i]=\int_{x_1,\cdots,x_k}\mu_{i'\setminus i}(x_1,\cdots,x_k)\d x_1\cdots\d x_k.
\]
Fix $j\in[k]$, then
\[
\int_{x_j\geq q_j}\mu_j(x_j)\d x_j\geq \frac{1}{k}\cdot \int_{x_1,x_2,\cdots,x_k}\mu_{i'\setminus i}(x_1,\cdots,x_k)\d x_1\cdots\d x_k.
\]
Since $P_j(x_j)$ is non-decreasing and $k\leq \mdl$, we have 
\[
\int_{x_j\geq q_j}P_j(x_j)\mu_j(x_j)\d x_j\geq \frac{1}{\mdl}\cdot\int_{x_1,x_2,\cdots,x_k}P_j(x_j)\mu_{i'\setminus i}(x_1,\cdots,x_k)\d x_1\cdots\d x_k.
\]
According to Equation \ref{eq:sec3_1},
\begin{align*}
&\sum_{i_0<i, i_0\in \Neighbor_{G_B}(j)}p'_{i_0,i}+\mdl\cdot\sum_{i_0>i, i_0\in \Neighbor_{G_B}(j)}p_{i_0,i}
=\mdl\cdot \int_{x_j}P_j(x_j)\mu_j(x_j)\d x_j\\
=& \mdl\cdot\int_{x_j\geq q_j}P_j(x_j)\mu_j(x_j)\d x_j+\mdl\cdot\int_{x_j< q_j}P_j(x_j)\mu_j(x_j)\d x_j\\
\geq& \mdl\cdot\int_{x_j\geq q_j}P_j(x_j)\mu_j(x_j)\d x_j+\mdl\cdot\int_{x_1,\cdots,x_k}P_j(x_j)\mu_{i\cap i'}(x_1,\cdots,x_k)\d x_1\cdots\d x_k\\
\geq& \int_{x_1,\cdots,x_k}P_j(x_j)\mu_{i'\setminus i}(x_1,\cdots,x_k)\d x_1\cdots\d x_k+\int_{x_1,\cdots,x_k}P_j(x_j)\mu_{i\cap i'}(x_1,\cdots,x_k)\d x_1\cdots\d x_k\\
=& \int_{x_1,\cdots,x_k}P_j(x_j)\mu_{i'}(x_1,\cdots,x_k)\d x_1\cdots\d x_k\\
=& \sum_{i_0<i, i_0\in \Neighbor_{G_B}(j)}\frac{\Pr[A_{i_0}'\cap A_i']}{\mdl}+\sum_{i_0>i, i_0\in \Neighbor_{G_B}(j)}\Pr[A_{i_0}\cap A_i'].
\end{align*}
This completes the proof.
\end{proof}

From Claim \ref{claim:sec3_2}, we have 
\begin{align}\label{eq:sec3_2}
\sum_{i_0<i, i_0\in \Neighbor_{G_D}(i)}\frac{p_{i_0,i}''}{\Delta}+\sum_{i_0>i, i_0\in \Neighbor_{G_D}(i)}p_{i_0,i}'\leq \sum_{i_0<i, i_0\in \Neighbor_{G_D}(i)}p'_{i_0,i}+\Delta\cdot\sum_{i_0>i, i_0\in \Neighbor_{G_D}(i)}p_{i_0,i},
\end{align}
By summation over all $i\in[m]$, we finish the proof:
\[
\sum_{(i_0,i)\in E_D}\frac{p_{i_0,i}''}{\Delta}\leq \Delta\cdot \sum_{(i_0,i)\in E_D}p_{i_0,i}.
\]
\end{proof}


\begin{lemma}\label{lemma:elementary}
Let $G_B=([m],[n],E_B)$ be a linear bipartite graph and $\vec{p}$ be a probability vector. Then for any elementary $\A=(A_1,\cdots,A_m)\sim(G_B,\vec{p})$,
\[
\sum_{(i_0, i_1)\in E_D }\Pr\left(A_{i_0}\cap A_{i_1}\right) \geq \sqrt{m}\cdot\Delta_D(G_B)\cdot\Delta_B(G_B)^2\cdot \lowerBB(G_B,\vec{p}),
\]
 where $E_D$ is the edge set of $G_D(G_B)$ 
\end{lemma}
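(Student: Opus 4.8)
The plan is to reduce the stated inequality to a short chain of elementary estimates --- AM--GM applied event-by-event, together with a Bonferroni-type lower bound on pairwise intersections variable-by-variable --- after first observing that the normalising factor $\sqrt{m}\cdot\Delta_D(G_B)\cdot\Delta_B(G_B)^2$ in the statement precisely cancels the denominator in the definition of $\lowerBB$. Concretely, unpacking $\lowerBB(G_B,\vec p)$ with $L=[m]$ gives
\[
\sqrt{m}\cdot\Delta_D(G_B)\cdot\Delta_B(G_B)^2\cdot\lowerBB(G_B,\vec p)=\big(\min_{i\in[m]}p_i\big)^2\Big(\sum_{i\in[m]}|\Neighbor_{G_B}(i)|\cdot p_i^{1/|\Neighbor_{G_B}(i)|}-N\Big),
\]
where $N\triangleq|\cup_{i\in[m]}\Neighbor_{G_B}(i)|$ is the number of variables actually used, so it suffices to prove the right-hand quantity is a lower bound for $\sum_{(i_0,i_1)\in E_D}\Pr(A_{i_0}\cap A_{i_1})$.

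First I would fix notation: write each elementary event as $A_i=\bigwedge_{j\in\Neighbor_{G_B}(i)}(X_j\in S_j^i)$, set $d_i\triangleq|\Neighbor_{G_B}(i)|$, and let $a_j^i\triangleq\Pr(X_j\in S_j^i)$; by independence of the variables, $p_i=\prod_{j\in\Neighbor_{G_B}(i)}a_j^i$. Next, because $G_B$ is linear, every edge $(i_0,i_1)\in E_D$ has a \emph{unique} common variable, and conversely any two events sharing a variable $j$ are adjacent in $G_D(G_B)$; hence $E_D$ is partitioned according to this common variable, and
\[
\sum_{(i_0,i_1)\in E_D}\Pr(A_{i_0}\cap A_{i_1})=\sum_{j}\ \sum_{\{i_0,i_1\}\subseteq\Neighbor_{G_B}(j)}\Pr(A_{i_0}\cap A_{i_1}),
\]
the outer sum being over the $N$ used variables. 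For a pair $i_0,i_1$ adjacent via $j$, linearity forces $\Neighbor_{G_B}(i_0)\cap\Neighbor_{G_B}(i_1)=\{j\}$, so the remaining variables are disjoint and independent, and with $\mu_j$ denoting the law of $X_j$ we get $\Pr(A_{i_0}\cap A_{i_1})=\mu_j(S_j^{i_0}\cap S_j^{i_1})\cdot\frac{p_{i_0}}{a_j^{i_0}}\cdot\frac{p_{i_1}}{a_j^{i_1}}\ge\big(\min_i p_i\big)^2\mu_j(S_j^{i_0}\cap S_j^{i_1})$, using $p_i/a_j^i=\prod_{j'\ne j}a_{j'}^i\ge p_i$.

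Finally I would apply, for each used variable $j$, the Bonferroni inequality $\sum_{\{i_0,i_1\}\subseteq\Neighbor_{G_B}(j)}\mu_j(S_j^{i_0}\cap S_j^{i_1})\ge\sum_{i\in\Neighbor_{G_B}(j)}a_j^i-\mu_j\big(\cup_i S_j^i\big)\ge\sum_{i\in\Neighbor_{G_B}(j)}a_j^i-1$, sum over the $N$ used variables, swap the order of summation ($\sum_j\sum_{i\in\Neighbor_{G_B}(j)}a_j^i=\sum_i\sum_{j\in\Neighbor_{G_B}(i)}a_j^i$), and bound each inner sum by AM--GM: $\sum_{j\in\Neighbor_{G_B}(i)}a_j^i\ge d_i\big(\prod_{j}a_j^i\big)^{1/d_i}=d_i p_i^{1/d_i}$. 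Assembling these yields $\sum_{(i_0,i_1)\in E_D}\Pr(A_{i_0}\cap A_{i_1})\ge\big(\min_i p_i\big)^2\big(\sum_i d_i p_i^{1/d_i}-N\big)$, which is exactly the claimed bound. The calculations are otherwise routine; the one place that genuinely needs care is the partition of $E_D$ by common variable --- this is precisely where linearity of $G_B$ is used --- together with keeping the independence bookkeeping correct when expanding $\Pr(A_{i_0}\cap A_{i_1})$ as a product.
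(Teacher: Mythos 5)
Your proposal is correct and follows essentially the same route as the paper's proof: the same AM--GM estimate per event, the same Bonferroni-type bound per shared variable, the same use of linearity to identify edges of $E_D$ with unique common variables, and the same product lower bound $\Pr(A_{i_0}\cap A_{i_1})\ge \mu_j(S_j^{i_0}\cap S_j^{i_1})\,p_{i_0}p_{i_1}$. The only (harmless) difference is that you count only the $N=|\cup_i\Neighbor_{G_B}(i)|$ used variables in the Bonferroni step, which matches the definition of $\lowerBB$ exactly, whereas the paper writes $n$ there.
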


\begin{proof}
For simplicity of notation, we let $\mathcal{N}$ stand for $\mathcal{N}_{G_B}$.
Without loss of generality, we assume that each variable $X_i$ is uniformly distributed over $[0,1]$. 
As $\A$ is elementary, each $A_i$ can be written as 
$\bigwedge_{j\in\mathcal{N}(i)}[X_{j} \in S_{i}^j]$
where $S_{i}^j\subset[0,1]$.
Let $\mu$ be the Lebesgue measure.

On one hand, according to the AM–GM inequality, 
\begin{align}\label{eq:sec_341}
\sum_{i\in[m]}\sum_{j \in \mathcal{N}(i)} \mu(S^j_i)\geq \sum_{i\in[m]}|\mathcal{N}(i)|\cdot\big(\Pi_{j\in\mathcal{N}(i)}\mu(S^j_i)\big)^{1/|\mathcal{N}(i)|}=\sum_{i\in[m]}|\mathcal{N}(i)|\cdot p_i^{1/|\mathcal{N}(i)|}.
\end{align}
On the other hand, 
\begin{align}\label{eq:sec_342}
\sum_{i \in [m]}\sum_{j \in \mathcal{N}(i)} \mu(S^j_i) = \sum_{j \in [n]}\sum_{i \in \mathcal{N}(i)} \mu(S^j_i)\leq  n + \sum_{j\in [n]}\sum_{i_0\neq i_1\in \mathcal{N}(j)}\mu\left(S^j_{i_0}\cap S^{j}_{i_1}\right)
\end{align}
By Inequalities \ref{eq:sec_341} and \ref{eq:sec_342} and noticing $G_B$ is linear, we have that
\begin{align}\label{eq-inter-side}
\sum_{(i_0,i_1)\in E_D}\sum_{j\in\mathcal{N}(i_0)\cap \mathcal{N}(i_1) }\mu\left(S^j_{i_0}\cap S^{j}_{i_1}\right) =
\sum_{j\in [n]}\sum_{i_0\neq i_1\in \mathcal{N}(j)}\mu\left(S^j_{i_0}\cap S^{j}_{i_1}\right) \geq \left(\sum_{i\in [m]} |\mathcal{N}(i)|\cdot p_i^{1/|\mathcal{N}(i)|}\right)-n.
\end{align}
Moreover, given any $(i_0,i_1)\in E_D$, where $\{j\}= \mathcal{N}(i)\cap \mathcal{N}(i')$, we have that
\begin{equation}
\begin{aligned}\label{eq-inter-event}
\Pr(A_{i_0}\cap A_{i_1}) &\geq \mu\left(S^j_{i_0}\cap S^{j}_{i_1}\right) \cdot\left(\prod_{k \in \mathcal{N}(i_0)\setminus \{j\}}
\mu(S^k_{i_0})\right)\cdot \left(\prod_{k'\in\mathcal{N}(i_1)\setminus  \{j\}}\mu(S^{k'}_{i_1})\right)
\\&\geq \mu\left(S^j_{i_0}\cap S^{j}_{i_1}\right)\cdot p_{i_0}\cdot p_{i_1}.
\end{aligned}
\end{equation}
Finally, combining (\ref{eq-inter-side}) with (\ref{eq-inter-event}), we concludes that 
\begin{align*}
\sum_{(i_0, i_1)\in E_D }\Pr(A_{i_0}\cap A_{i_1}) &\geq
\sum_{(i_0,i_1)\in E_D}\sum_{j\in\mathcal{N}(i_0)\cap \mathcal{N}(i_1)}\mu\left(S^j_{i_0}\cap S^{j}_{i_1}\right) \cdot p_{i_0}\cdot p_{i_1}
\\ &\geq  \left(\min_{i\in [m]} p_i\right)^2\left(\sum_{i} |\mathcal{N}(i)|\cdot p_i^{1/|\mathcal{N}(i)|} - n\right)\\
&= \sqrt{m}\cdot\Delta_D(G_B)\cdot\Delta_B(G_B)^2\cdot \lowerBB(G_B,\vec{p}).
\end{align*}
\end{proof}

The following lemma is a special case of Theorem \ref{thm:lower-bound-intersection} where $t=1$ and $L_1=[m]$. In fact, Theorem \ref{thm:lower-bound-intersection} is proved by applying Lemma \ref{lemma-indep-pairs} to each $G_k$ separately. 
\begin{lemma}\label{lemma-indep-pairs}
Let $G_B=([m],[n],E_B)$ be a linear bipartite graph and $\vec{p}$ be a probability vector. If $\A\sim(G_B,\vec{p})$, then $\A\sim(G_B,\vec{p},\M,\vec{\delta})$ for some matching $\M$ of $G_D(G_B)$ and some $\vecdelta\in(0,1)^{|\M|}$
satisfying that $\sum_{(i,i')\in \M}\delta^2_{i,i'} \geq \big(\lowerBB^+(G_B,\vec{p})\big)^2$.
\end{lemma}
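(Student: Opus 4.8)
The plan is to reduce everything to the two structural lemmas already established. First I would dispose of the trivial case: if $\lowerBB(G_B,\vec p)\le 0$ then $\lowerBB^+(G_B,\vec p)=0$ and the empty matching $\M=\emptyset$ satisfies the claim vacuously; so assume $\lowerBB(G_B,\vec p)>0$, which in particular forces $G_D(G_B)$ to contain an edge, i.e.\ $\Delta_D(G_B)\ge 1$.

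Next I would establish a lower bound on the \emph{total} intersection mass $\sum_{(i_0,i_1)\in E_D}\Pr(A_{i_0}\cap A_{i_1})$ over all dependent pairs. Let $\gamma$ be the minimum of this quantity over all event sets $\sim(G_B,\vec p)$, as in Lemma~\ref{lem:general2elementary}. That lemma supplies an \emph{elementary} $\A'\sim(G_B,\vec p)$ with $\sum_{(i_0,i_1)\in E_D}\Pr(A'_{i_0}\cap A'_{i_1})\le \Delta_B(G_B)^2\,\gamma$, while Lemma~\ref{lemma:elementary} applied to $\A'$ gives $\sum_{(i_0,i_1)\in E_D}\Pr(A'_{i_0}\cap A'_{i_1})\ge \sqrt m\,\Delta_D(G_B)\,\Delta_B(G_B)^2\,\lowerBB(G_B,\vec p)$. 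Cancelling $\Delta_B(G_B)^2$ and using $\gamma\ge 0$ yields $\gamma\ge \sqrt m\,\Delta_D(G_B)\,\lowerBB^+(G_B,\vec p)$. Since $\gamma$ is the minimum, the given $\A$ also satisfies $S\triangleq\sum_{(i_0,i_1)\in E_D}\Pr(A_{i_0}\cap A_{i_1})\ge \sqrt m\,\Delta_D(G_B)\,\lowerBB^+(G_B,\vec p)$.

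Finally I would extract the matching. Writing $w_{i_0,i_1}\triangleq\Pr(A_{i_0}\cap A_{i_1})$ and noting that $G_D(G_B)$ has $m$ vertices and maximum degree $\Delta_D(G_B)$, so $|E_D|\le m\,\Delta_D(G_B)/2$, the Cauchy--Schwarz inequality gives $\sum_{(i_0,i_1)\in E_D} w_{i_0,i_1}^2\ge S^2/|E_D|\ge 2\,\Delta_D(G_B)\,\big(\lowerBB^+(G_B,\vec p)\big)^2$. By Vizing's theorem $E_D$ can be covered by $\Delta_D(G_B)+1$ matchings of $G_D(G_B)$, so by pigeonhole one of them, call it $\M_0$, carries at least a $1/(\Delta_D(G_B)+1)$ fraction of the squared weights, i.e.\ $\sum_{(i,i')\in\M_0} w_{i,i'}^2\ge \frac{2\,\Delta_D(G_B)}{\Delta_D(G_B)+1}\big(\lowerBB^+(G_B,\vec p)\big)^2\ge \big(\lowerBB^+(G_B,\vec p)\big)^2$, the last inequality since $\Delta_D(G_B)\ge 1$. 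Let $\M\subseteq\M_0$ be the sub-matching of pairs with $w_{i,i'}>0$ (discarding zero-weight edges affects neither the matching property nor the sum of squares), and set $\delta_{i,i'}\triangleq\Pr(A_i\cap A_{i'})$ for $(i,i')\in\M$; then $0<\delta_{i,i'}\le\min\{p_i,p_{i'}\}<1$, so $\vecdelta\in(0,1)^{|\M|}$, and by construction $\A\sim(G_B,\vec p,\M,\vecdelta)$ with $\sum_{(i,i')\in\M}\delta_{i,i'}^2\ge \big(\lowerBB^+(G_B,\vec p)\big)^2$.

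I expect no serious obstacle here, since the substantive work is carried by Lemmas~\ref{lem:general2elementary} and~\ref{lemma:elementary}; the only points demanding care are the legitimacy of cancelling $\Delta_B(G_B)^2$ (and passing from $\lowerBB$ to $\lowerBB^+$ via $\gamma\ge 0$) and the constant bookkeeping in the Cauchy--Schwarz/Vizing step --- it is precisely the inequality $\frac{2\Delta_D(G_B)}{\Delta_D(G_B)+1}\ge 1$, together with the handshake bound $|E_D|\le m\,\Delta_D(G_B)/2$, that makes the final constant come out to exactly $\big(\lowerBB^+\big)^2$ with nothing to spare.
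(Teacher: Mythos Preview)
Your proposal is correct, and the overall architecture matches the paper's: both use Lemmas~\ref{lem:general2elementary} and~\ref{lemma:elementary} to get the total intersection mass $\sum_{(i_0,i_1)\in E_D}\Pr(A_{i_0}\cap A_{i_1})\ge\sqrt{m}\,\Delta_D\,\lowerBB^+$, then Cauchy--Schwarz with $|E_D|\le m\Delta_D/2$ to pass to a sum of squares. The one genuine difference is how the matching is extracted. You invoke Vizing's theorem to decompose $E_D$ into $\Delta_D+1$ matchings and pigeonhole a $1/(\Delta_D+1)$ fraction of the squared mass onto one of them; the paper instead runs a greedy algorithm (repeatedly take the heaviest remaining edge and delete all incident edges), which retains a $1/(2\Delta_D)$ fraction since each selection removes at most $2\Delta_D$ lighter edges. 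Both lose exactly the right constant: your $\frac{2\Delta_D}{\Delta_D+1}\ge 1$ and the paper's $\frac{m\Delta_D}{2|E_D|}\ge 1$ each land on $(\lowerBB^+)^2$ with nothing to spare. Your route is non-constructive but cleaner; the paper's is elementary and algorithmic, which may be preferable in a paper about efficient algorithms, and the same greedy idea is reused later in the proof of Claim~\ref{claim:51}.
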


\begin{proof}
Given an instance $\A\sim(G_B,\vec{p})$, we construct such a $\M$ greedily as follows.

We maintain two sets $E$ and $\M$,
which are initialized as $E_D$ and $\emptyset$ respectively. We do the following iteratively until $E$ becomes empty:
select a edge $(i_0,i_1)$ with maximum $\Pr(A_{i_0} \cap A_{i_1})$ from $E$, add $(i_0,i_1)$ to $\M$, and delete all edges connecting $i_0$ or $i_1$ from $E$ (including $(i_0,i_1)$). 

Let $\Delta_D$ and $\Delta_B$ denote $\Delta_D(G_B)$ and $\Delta_B(G_B)$ respectively. In each iteration, at most $2\Delta_D$ edges are deleted from $E$ and for each deleted edge $(i,i')$, $\Pr(A_{i} \cap A_{i'})^2 \leq \Pr(A_{i_0} \cap A_{i_1})^2$. Based on this observation, it is easy to see that
\begin{align}\label{eq:361}
\sum_{(i_0,i_1)\in \M} \Pr(A_{i_0} \cap A_{i_1})^2 \geq \frac{1}{2\Delta_D}\sum_{(i, i')\in E_D}  \Pr(A_i \cap A_{i'})^2.
\end{align}
Moreover, according to Lemma~\ref{lem:general2elementary} and ~\ref{lemma:elementary}, it has that
\begin{align}\label{eq:362}
\sum_{(i,i')\in E_D}\Pr(A_i\cap A_{i'})^2 \geq \frac{1}{|E_D|}\cdot \left(\sum_{(i,i')\in E_D }\Pr(A_i\cap A_{i'})\right)^2
\geq \frac{m\cdot\Delta_D^2\cdot \left(\lowerBB^+(G_B,\vec{p})\right)^2}{|E_D|},
\end{align}
By combining Inequality \ref{eq:361} and \ref{eq:362}, setting $\delta_{i,i'}=\Pr(A_i\cap A_{i'})$, and noting  $2|E_D|\leq m\Delta_D$, we finish the proof.
\end{proof}

\begin{proof}[Proof of Theorem~\ref{thm:lower-bound-intersection}] For each $k\in[t]$, by applying Lemma \ref{lemma-indep-pairs} to $G_k$, we have that $\A\sim(G_B,\vec{p},\M_k,\vec{\delta}_k)$ for some matching $\M_k\subseteq E_k$ and some $\vec{\delta}_k$ where $\sum_{(i,i')\in \M_k}\delta^2_{i,i'} \geq \big(\lowerBB^+(G_k,\vec{p})\big)^2$. Note that $E_k$'s are disjoint with each other, so $\M_1\cup\M_2\cup\cdots\cup \M_t$ is still a matching. By letting $\M=\M_1\cup\M_2\cup\cdots\cup \M_t$ and $\vec{\delta}=(\vec{\delta}_1,\cdots,\vec{\delta}_t)$, we conclude the theorem.
\end{proof}
	
\begin{remark}\label{remark:simplified}
Given a bipartite graph $G$, its \emph{simplified graph} is defined to be obtained from $G$ by
deleting all the right nodes which only have one neighbor
and combining all the right nodes with the same neighbor set. Notice that if $G$ is linear, so is its simplified graph.

Theorem~\ref{thm:lower-bound-intersection} can be slightly generalized: it is sufficient that the simplified graph of $G_k$ instead of $G_k$ itself is linear.
\end{remark}

\section{\lqc{The Moser-Tardos algorithm is} 
 beyond Shearer's bound}\label{sec:beyond-shearer}

In this section, we prove Theorem~\ref{thm-dep-graph-beyond}. Given a dependency graph $G_D$, a vector $\vec{p}$ and a chordless cycle $C$ in $G_D$,
define 
$$r(G_D,\vec{p},C) \triangleq |C|\cdot\big(\min_{j\in C} p_j\big)^4\cdot\left(\frac{2\sum_{j\in C} \sqrt{p_j}}{|C|}  - 1\right)^2.$$
and
$$r^+(G_D,\vec{p},C) \triangleq |C|\cdot\big(\min_{j\in C} p_j\big)^4\cdot\left(\max\left\{\frac{2\sum_{j\in C} \sqrt{p_j}}{|C|}  - 1,0\right\}\right)^2.$$
Then Theorem~\ref{thm-dep-graph-beyond} is obvious by Lemmas~\ref{lemma-beyond} and~\ref{lemma-beyond-exist}.

\begin{lemma}\label{lemma-beyond}
    Given $G_D$, $\vec{p}$ and $\eps>0$, let $C_1,C_2,\cdots,C_\ell$ be any disjoint chordless cycles in $G_D$.
    If 
    \[d((1+\eps)\vec{p},G_D) < \frac{1}{544}\sum_{i\leq \ell}r^+(G_D,\vec{p},C_i),
    \]
    then for any variable-generated event system $\A\sim(G_D,\vec{p})$, the expected number of resampling steps performed by MT algorithm is most $m/\eps$.
\end{lemma}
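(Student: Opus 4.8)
The plan is to derive Lemma~\ref{lemma-beyond} by feeding the intersection lower bound of Theorem~\ref{thm:lower-bound-intersection} into the intersection LLL of Theorem~\ref{thm:main}; the only arithmetic input is the observation that, for a chordless cycle $C$, the parameter $\lowerBB^+$ of its (simplified) event-variable graph equals $\tfrac18\sqrt{r^+(G_D,\vec p,C)}$, which after squaring produces the $\tfrac1{64}$ that combines with the $\tfrac2{17}$ from \eqref{eq-pminus} to give $\tfrac1{544}$.

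First I would fix a variable-generated system $\A\sim(G_D,\vec p)$, let $G_B$ be its event-variable graph and $G'=\DependencyGraph(G_B)\subseteq G_D$ its base graph; $G'$ is again a dependency graph of $\A$ and $\Interior_a(G')\supseteq\Interior_a(G_D)$. For each $k$ I set $L_k$ to be the vertex set of $C_k$ and let $G_k$ be the subgraph of $G_B$ induced on $L_k\cup\bigcup_{i\in L_k}\N_{G_B}(i)$. Since $C_k$ is chordless and $|C_k|\ge 4$, no variable is incident to three events of $C_k$ (that would create a chord), so the simplified graph of $G_k$ is linear; moreover, if every edge of $C_k$ is realized in $G_B$, that simplified graph is exactly the canonical bipartite graph of the $|C_k|$-cycle, for which $|\bigcup_{i\in C_k}\N(i)|=|C_k|$, every $|\N(i)|=2$, and $\Delta_D=\Delta_B=2$, so a direct substitution gives $(\lowerBB^+(\cdot,\vec p))^2=\tfrac1{64}\,r^+(G_D,\vec p,C_k)$.

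Next I would apply Theorem~\ref{thm:lower-bound-intersection} (in the strengthened form of Remark~\ref{remark:simplified}) to the disjoint sets $L_1,\dots,L_\ell$, obtaining a matching $\M$ of $G'$, hence of $G_D$, with $\sum_{(i,i')\in\M\cap E_k}\Pr(A_i\cap A_{i'})^2\ge(\lowerBB^+(G_k,\vec p))^2$ for every $k$. Setting $\delta_{i,i'}=\Pr(A_i\cap A_{i'})$ and discarding the pairs on which this vanishes, we get $\A\sim(G_D,\vec p,\M,\vecdelta)$ with $\sum_{(i,i')\in\M}\delta_{i,i'}^2\ge\tfrac1{64}\sum_{k}r^+(G_D,\vec p,C_k)$. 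With $\vec p^-$ as in \eqref{eq-pminus}, this yields $\|\vec p-\vec p^-\|_1=\tfrac2{17}\sum_{(i,i')\in\M}\delta_{i,i'}^2\ge\tfrac1{544}\sum_{k}r^+(G_D,\vec p,C_k)>d((1+\eps)\vec p,G_D)$. A short argument from the definition of $d$ together with the monotonicity of the Shearer's bound region then forces $(1+\eps)\vec p^-\in\Interior_a(G_D)$ (if $\sum_k r^+=0$ the hypothesis already asserts $(1+\eps)\vec p$ is in Shearer's bound), and the restated Theorem~\ref{thm:main} gives $\E[T]\le m/\eps$.

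The step I expect to be the real obstacle is making the per-cycle bound $\sum_{(i,i')\in\M\cap E_k}\delta_{i,i'}^2\ge\tfrac1{64}r^+(G_D,\vec p,C_k)$ survive for an \emph{arbitrary} $\A$: its event-variable graph need not realize every edge of $C_k$, in which case the simplified graph of $G_k$ is only a union of paths and $\lowerBB^+(G_k,\vec p)$ may be $0$, so the term $r^+(G_D,\vec p,C_k)$ is lost. I would resolve this by observing that across any unrealized edge $(i,i')$ of $C_k$ the two events are independent, so $\Pr(A_i\cap A_{i'})=p_ip_{i'}$ can be added to $\vecdelta$ at no cost, and then checking that this automatic intersection, together with whatever the realized remnant of $C_k$ still contributes through Theorem~\ref{thm:lower-bound-intersection}, dominates $\tfrac1{64}r^+(G_D,\vec p,C_k)$ --- an inequality in $|C_k|$ and the $p_j$'s that uses that the hypothesis keeps $\vec p$ close to Shearer's boundary. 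Alternatively one can run the whole argument with $G'$ in place of $G_D$ and instead show that deleting the edges of an unrealized chordless cycle decreases $d(\cdot,G_D)$ by at least that cycle's $r^+$-contribution, additively over the vertex-disjoint $C_k$; either route leaves only the bookkeeping above.
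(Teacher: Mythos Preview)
Your proposal is correct and follows essentially the same route as the paper's proof: plug the lower bound of Theorem~\ref{thm:lower-bound-intersection} for each $C_k$ into Theorem~\ref{thm:main}, using the computation $(\lowerBB^+(\text{simplified }G_k,\vec p))^2=\tfrac1{64}r^+(G_D,\vec p,C_k)$ for the $2|C_k|$-cycle, so that $\|\vec p-\vec p^-\|_1\ge\tfrac{2}{17}\cdot\tfrac1{64}\sum_k r^+=\tfrac1{544}\sum_k r^+>d((1+\eps)\vec p,G_D)$ and hence $(1+\eps)\vec p^-\in\Interior_a(G_D)$.

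The one place where you diverge from the paper is the ``obstacle'' you flag at the end, and here you are making life harder than necessary. The paper disposes of the unrealized-edge issue with the single phrase ``it is lossless to assume $G_k$ is a cycle of length $2|C_k|$'' (invoking Remark~\ref{remark:simplified}), and the justification is much simpler than either of your two routes: for every edge $(i,i')\in E_D$ not realized in $G_B$, add a fresh dummy variable to both $\vbl(A_i)$ and $\vbl(A_{i'})$. This does not change any event, any intersection probability $\Pr(A_i\cap A_{i'})$, or any run of the MT algorithm (resampling a variable on which no event depends is a no-op), but it makes $G_D$ the canonical dependency graph of the modified instance. Now every edge of every $C_k$ is realized, the simplified $G_k$ is exactly the $2|C_k|$-cycle, and your first two paragraphs go through verbatim. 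There is no residual inequality to check and no need to compare $d(\cdot,G_D)$ with $d(\cdot,G')$.
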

\begin{proof}
  Fix such an instance $\A$. Define $\delta_{i,i'}:=\Pr(A_i\cap A_{i'})$. Let $G_B$ denote the event-variable graph of $\A$. 
  Let $G_k$ denote the induced subgraph of $G_B$ on $C_k\cup \left(\cup_{i\in C_k}\Neighbor_{G_B}(i)\right)$. According to Remark \ref{remark:simplified}, it is lossless to assume $G_k$ is a cycle of length $2|C_k|$. 
  Thus we have 
  \begin{equation}\label{eq-F}
  \lowerBB^+(G_k,\vec{p}) \geq \frac{\big(\min_{i\in C_k} p_i\big)^2\cdot\left(- |C_k| + \sum_{i\in L} 2\sqrt{p_i} \right)}{8\sqrt{|C_k|}}.
 \end{equation}
  According to Theorem~\ref{thm:lower-bound-intersection}, there is a matching $\M$ of $G_D$ such that
	$\sum_{(i,i')\in \M}\delta_{i,i'}^2 \geq \sum_{k\leq \ell}\left(\lowerBB^+(G_k,\vec{p})\right)^2$.
	Define $\vec{p}^-$ as (\ref{eq-pminus}).
	We have $(1+\eps)\vec{p}^-\leq (1+\eps)\vec{p}$
	and
	\begin{equation*}
	\begin{aligned}
	||(1+\eps)\vec{p} - (1+\eps)\vec{p}^-||_1\geq ||\vec{p} - \vec{p}^-||_1\geq \frac{2}{17}\sum_{(i,i')\in \M}\delta^2_{i,i'} \geq \frac{2}{17}\sum_{k\leq \ell}\left(\lowerBB^+(G_k,\vec{p})\right)^2.
	\end{aligned}
	\end{equation*}
	Combining with (\ref{eq-F}), we have
	\begin{equation*}
	\begin{aligned}
	||(1+\eps)\vec{p} - (1+\eps)\vec{p}^- ||_1\geq \frac{1}{544}\sum_{i\leq \ell}r^+(G_D,\vec{p},C_i)> d((1+\eps)\vec{p},G_D),	
	\end{aligned}
	\end{equation*}
	where the last inequality is by the condition of the lemma.
	Thus by Definition~\ref{def-distance}, 
    we have $ (1+\eps)\vec{p}^-$ is in the Shearer's bound of $G_D$.
	Combining with Theorem~\ref{thm:main}, we have the expected number of resampling steps performed by the Moser-Tardos algorithm is most $m/\eps$.
\end{proof}

\begin{lemma}\label{lemma-beyond-exist}
 Given $G_D$ and any chordless cycle $C$ in $G_D$,
 there is some probability vector $\vec{p}$ beyond the Shearer's bound of $G_D$ and with
\[d(\vec{p},G_D)\geq \frac{1}{545}\cdot r(G_D,\vec{p},C)>2^{-20} \ell^{-3}
\]
such that for any variable-generated event system $\A\sim(G_D,\vec{p})$, the expected number of resampling steps performed by MT algorithm is most $2^{29}\cdot m^2 \cdot |C|^3$.
\end{lemma}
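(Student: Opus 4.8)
The plan is to exhibit a $\vec p$ supported on $C$ that sits only infinitesimally beyond the Shearer boundary of the induced cycle $G_D[C]\cong C_L$ (where $L:=|C|\ge 4$), so that both $d(\vec p,G_D)$ and $d((1+\eps)\vec p,G_D)$ are governed by the one–dimensional symmetric boundary of $C_L$. Recall that $q_{\emptyset}(C_L,p\vec 1)=2p^{L/2}\cos\!\big(L\arccos\tfrac1{2\sqrt p}\big)$, so the symmetric Shearer boundary of $C_L$ is at $p_a:=\tfrac14\sec^2(\tfrac{\pi}{2L})$; for $p$ slightly above $p_a$ one has $q_\emptyset(C_L,p\vec 1)<0$ while $q_I(C_L,p\vec 1)>0$ for all $I\neq\emptyset$, and every proper induced sub‑path of $C_L$ has symmetric Shearer threshold exceeding $p_a$ by $\Omega(L^{-2})$. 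I set $p_j=p_a+\delta$ for $j\in C$ and $p_j=0$ for $j\notin C$ (for a strictly positive vector one puts an arbitrarily tiny $\eps_0>0$ off $C$ instead, which perturbs every estimate below by only $O(m\eps_0)$, hence is harmless), where $\delta>0$ is the unique small solution of $L\delta=\tfrac1{545}\,r(G_D,\vec p,C)$. This definition is legitimate: as $\delta$ runs over a small interval, the right–hand side is a continuous increasing function of $\delta$ of size $\Theta(L^{-3})$ with derivative $O(L^{-2})\ll1$, so a fixed point $\delta=\Theta(L^{-4})$ exists and lies in the admissible range $(0,\Theta(L^{-2}))$ where $q_\emptyset(C_L,(p_a+\delta)\vec1)<0$. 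Since $\vec p$ vanishes off $C$, $q_\emptyset(G_D,\vec p)=q_\emptyset(C_L,(p_a+\delta)\vec 1)<0$, so $\vec p$ is beyond Shearer's bound of $G_D$ (and not on its boundary).

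Next, the lower bound on $d(\vec p,G_D)$. Because $\vec p$ vanishes off $C$, every $\vec r$ with $\vec 0\le\vec r\le\vec p$ is supported on $C$, and then $\vec r\notin\Interior_a(G_D)$ iff $\vec r|_C\notin\Interior_a(C_L)$ (an impossible event off $C$ is independent of everything, so one passes freely between the two systems). Hence $d(\vec p,G_D)=\|\vec p\|_1-\mu^*\big(L,p_a+\delta\big)$, where $\mu^*(L,q):=\min\{\|\vec s\|_1:\vec s\in[0,q]^{L},\ \vec s\notin\Interior_a(C_L)\}$. Since $p_a\vec 1$ lies on the Shearer boundary of $C_L$ and $p_a\vec1\le(p_a+\delta)\vec 1$, we get $\mu^*(L,p_a+\delta)\le Lp_a$, so $d(\vec p,G_D)\ge L(p_a+\delta)-Lp_a=L\delta=\tfrac1{545}r(G_D,\vec p,C)$, the first required inequality. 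The bound $\tfrac1{545}r(G_D,\vec p,C)>2^{-20}L^{-3}$ is a routine computation: $p_a\ge\tfrac14$ and $2\sqrt{p_a}-1=\sec(\tfrac{\pi}{2L})-1\ge\tfrac12(\tfrac{\pi}{2L})^2$ give $r(G_D,\vec p,C)\ge L\cdot4^{-4}\cdot\big(\tfrac{\pi^2}{8L^2}\big)^2$, a $\Theta(L^{-3})$ quantity with constant comfortably above $545\cdot2^{-20}$.

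For the running time I would invoke Lemma~\ref{lemma-beyond} with the single cycle $C_1=C$ and $\eps:=2^{-29}m^{-1}L^{-3}$, so that $m/\eps=2^{29}m^2L^3$; its hypothesis is $d((1+\eps)\vec p,G_D)<\tfrac1{544}r^{+}(G_D,\vec p,C)$, and here $r^{+}=r$ since $p_a>\tfrac14$. As above, $d((1+\eps)\vec p,G_D)=L(1+\eps)(p_a+\delta)-\mu^*\big(L,(1+\eps)(p_a+\delta)\big)$, so everything rests on the estimate
\[
\mu^*\big(L,(1+\eps)(p_a+\delta)\big)\ \ge\ Lp_a-\poly(L)\cdot\big((1+\eps)(p_a+\delta)-p_a\big)^2 .
\]
Granting it, $d((1+\eps)\vec p,G_D)\le L\delta+L\eps(p_a+\delta)+\poly(L)\,(\eps p_a+\delta)^2=\tfrac1{545}r+L\eps(p_a+\delta)+\poly(L)(\eps p_a+\delta)^2$; since $L\delta=\tfrac1{545}r$, $L\eps(p_a+\delta)=O(2^{-29}L^{-2}m^{-1})$, and the last term is far smaller, the total stays below $\tfrac1{545}r+\tfrac{r}{545\cdot544}=\tfrac1{544}r$ (using $L\le m$ and $r=\Theta(L^{-3})$), which is exactly the hypothesis. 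Lemma~\ref{lemma-beyond} then gives the claimed bound $m/\eps=2^{29}m^2|C|^3$ on the expected number of resampling steps.

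The main obstacle is the displayed lower bound on $\mu^*$, i.e.\ understanding the Shearer region of $C_L$ near its symmetric boundary point $p_a\vec 1$. I would prove it in two steps. First, since $\delta,\eps$ are so tiny that $(1+\eps)(p_a+\delta)$ is closer to $p_a$ than the $\Omega(L^{-2})$ gap separating $p_a$ from the symmetric Shearer threshold of every proper induced sub‑path of $C_L$, any $\vec s$ in the box with $\vec s\notin\Interior_a(C_L)$ must have full support and must violate only $q_\emptyset(C_L,\vec s)\le 0$ (all other $q_I$, and all sub‑path $q_\emptyset$'s, remain positive on such a box); moreover each coordinate of such an $\vec s$ lies within $O\big((1+\eps)(p_a+\delta)-p_a\big)$ of $p_a$. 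Second, Taylor‑expand $f:=q_\emptyset(C_L,\cdot)$ at $p_a\vec 1$: by symmetry $\nabla f(p_a\vec1)=-c\,\vec 1$ with $c=q_\emptyset(P_{L-3},p_a\vec1)>0$ (the deletion identity $\partial f/\partial s_k=-q_\emptyset(C_L-\Neighbor[k],\cdot)$ together with $C_L-\Neighbor[k]=P_{L-3}$), so the tangent plane to $\{f=0\}$ at $p_a\vec 1$ is exactly $\{\sum_i s_i=Lp_a\}$; writing a boundary point as $p_a\vec1+\vec v+t\vec1$ with $\sum_i v_i=0$ and solving $f=0$ gives $\sum_i s_i=Lp_a+\tfrac1{2c}\vec v^{\top}H\vec v+O(\|\vec v\|^3)\ge Lp_a-\tfrac{\|H\|}{2c}\|\vec v\|^2+O(\|\vec v\|^3)$. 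Combining with $\|\vec v\|=O\big((1+\eps)(p_a+\delta)-p_a\big)$ and with $\poly(L)$ bounds on $\|H\|$ and $1/c$ — both are $q$‑values of subgraphs of $C_L$ evaluated at $p_a\vec1$, estimable through the same trigonometric formulas — yields the claim. The remaining work is only bookkeeping with the explicit constants $545,\,544,\,17,\,2^{-20},\,2^{29}$.
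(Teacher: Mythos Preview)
Your approach is genuinely different from the paper's, and considerably more delicate. The paper never attempts to estimate $d(\vec p,G_D)$ or $\mu^*$ directly. Instead it works at the easy point $\vec\lambda=(1/4,\dots,1/4)$ on $C$, nudges one coordinate up to a point $\vec\lambda''$ still strictly inside the Shearer region of $C$, and then extends by a single scalar $q>0$ off $C$ so that the resulting vector $\vec q$ sits on the Shearer boundary of the \emph{whole} graph $G_D$. Since $d(\vec q,G_D)=0$ while $r^+(G_D,\vec q,C)>0$, the continuous function $f(\delta)=545\,d((1+\delta)\vec q,G_D)-r^+(G_D,(1+\delta)\vec q,C)$ satisfies $f(0)<0$; for large $\delta$ one has $f(\delta')\ge 0$ because otherwise Lemma~\ref{lemma-beyond} would make the MT algorithm terminate at a point that is manifestly outside $\Interior_v$. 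An intermediate value gives $\delta$ with $d(\vec p,G_D)=\tfrac1{545}r^+(\vec p)$ for $\vec p=(1+\delta)\vec q$, and the running-time bound follows from Theorem~\ref{thm:main} after a one-line $\ell_1$ estimate. No Taylor expansion, no analysis of the Shearer hypersurface, no control of $\|H\|/c$ is needed.

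Your route---taking $p_j\approx 0$ off $C$ so that everything reduces to the symmetric cycle $C_L$, and then analysing $\mu^*$ via the local geometry of $\{q_\emptyset(C_L,\cdot)=0\}$ near $p_a\vec 1$---could in principle give more explicit constants, but it has a real gap. The Taylor step is only legitimate once you know the minimiser $\vec s$ stays in an $O(g)$-neighbourhood of $p_a\vec 1$ (with $g=(1+\eps)(p_a+\delta)-p_a$), and you assert exactly this (``each coordinate of such an $\vec s$ lies within $O(g)$ of $p_a$'') without proof. The ``full support'' and ``only $q_\emptyset\le 0$'' parts do follow from the $\Omega(L^{-2})$ sub-path gap, but the coordinate bound does not: the box $[0,p_a+g]^L$ allows $s_i$ as small as $0$, and the deletion identity $s_i\ge q_\emptyset(P_{L-1},\vec s_{-i})/q_\emptyset(P_{L-3},\vec s_{-N[i]})$ only gives a lower bound once you already control the \emph{other} coordinates. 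Absent this, your $\|\vec v\|^2$ could be of order $L$ rather than $Lg^2$, and since $\|H\|/c=\Theta(L^2)$ the second-order correction $\tfrac{\|H\|}{2c}\|\vec v\|^2$ swamps the $\Theta(L^{-3})$ main term. (Note also that $1/c$ itself is \emph{exponential} in $L$, not polynomial; only the ratio $\|H\|/c$ is polynomial, so the higher-order remainders need their own careful bookkeeping.) Establishing the coordinate bound rigorously would require a separate global argument on $C_L$---e.g.\ showing the constrained minimum of $\sum s_i$ on $\{q_\emptyset=0\}$ is attained at the symmetric point---which is plausible but is the substantive missing step in your proof.
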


The following two lemmas will be used in the proof of Lemma~\ref{lemma-beyond-exist}.

\begin{lemma}\cite{shearer1985problem}\label{lem-vol-shearer}
$q_{\emptyset}(G_D,\vec{p}) = 1 - \Pr(\bigcup_{A\in \A}A)$ holds for any extremal instance $\A \sim (G_D,\vec{p})$.
\end{lemma}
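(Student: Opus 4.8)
\textbf{Proof proposal for Lemma~\ref{lem-vol-shearer}.}

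The plan is to exploit the inclusion–exclusion expansion of $\Pr(\bigcup_{A\in\A} A)$ and then use the defining property of an extremal instance — that any two dependent events are mutually exclusive — to discard every term of the expansion that is supported on a non-independent set. First I would write, by inclusion–exclusion over $\A=\{A_1,\dots,A_m\}$,
\[
\Pr\Bigl(\bigcup_{i\in[m]} A_i\Bigr) = \sum_{\emptyset\neq S\subseteq[m]} (-1)^{|S|+1}\,\Pr\Bigl(\bigcap_{i\in S} A_i\Bigr),
\]
so that
\[
1 - \Pr\Bigl(\bigcup_{i\in[m]} A_i\Bigr) = \sum_{S\subseteq[m]} (-1)^{|S|}\,\Pr\Bigl(\bigcap_{i\in S} A_i\Bigr).
\]
The key observation is then that in an extremal instance, whenever $S$ is \emph{not} an independent set of $G_D$ — i.e.\ it contains two vertices $i,j$ with $(i,j)\in E_D$ — the events $A_i$ and $A_j$ are mutually exclusive, hence $\Pr(\bigcap_{i\in S}A_i)\le \Pr(A_i\cap A_j)=0$. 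Thus only independent sets $S\in\mathrm{Ind}(G_D)$ survive, leaving
\[
1 - \Pr\Bigl(\bigcup_{i\in[m]} A_i\Bigr) = \sum_{S\in \mathrm{Ind}(G_D)} (-1)^{|S|}\,\Pr\Bigl(\bigcap_{i\in S} A_i\Bigr).
\]

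Next I would identify this sum with $q_\emptyset(G_D,\vec p)$. For an independent set $S$, the events $\{A_i : i\in S\}$ are pairwise independent (no edges among them in $G_D$), and moreover, since $G_D$ is a dependency graph and $S$ contains no neighbour of any of its own vertices beyond $S$ itself, the events $\{A_i:i\in S\}$ are in fact mutually independent as a family. Hence $\Pr(\bigcap_{i\in S}A_i)=\prod_{i\in S}p_i$. Substituting gives
\[
1 - \Pr\Bigl(\bigcup_{i\in[m]} A_i\Bigr) = \sum_{S\in \mathrm{Ind}(G_D)} (-1)^{|S|}\prod_{i\in S}p_i,
\]
and the right-hand side is exactly $q_\emptyset(G_D,\vec p)=\sum_{J\in\mathrm{Ind}(G_D),\,\emptyset\subseteq J}(-1)^{|J|}\prod_{i\in J}p_i$ by the definition of $q_I$ with $I=\emptyset$. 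This completes the argument.

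The main obstacle — the only genuinely non-trivial point — is justifying that the events indexed by an independent set are \emph{mutually} independent, not merely pairwise independent, so that the probability of their intersection factorizes as $\prod_{i\in S}p_i$. I would address this by an inductive peeling argument: order $S=\{i_1,\dots,i_k\}$; since $i_k\notin \Neighbor_{G_D}(i_\ell)\cup\{i_\ell\}$ for all $\ell<k$, the event $A_{i_k}$ is independent of the $\sigma$-algebra generated by $\{A_{i_\ell}:\ell<k\}$ (this is precisely the dependency-graph condition, applied to the sub-collection rather than a single event — one should remark that the abstract dependency-graph hypothesis as stated, "$A_i$ is independent of $\{A_j:j\notin\Neighbor_{G_D}(i)\cup\{i\}\}$", is conventionally read in the strong sense of independence from the generated $\sigma$-algebra, which is what extremal-instance arguments in \cite{kolipaka,guo2019uniform} use). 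Then $\Pr(\bigcap_{\ell\le k}A_{i_\ell}) = p_{i_k}\cdot\Pr(\bigcap_{\ell<k}A_{i_\ell})$, and induction finishes it. One should also double-check the boundary case $S=\emptyset$, where the empty intersection has probability $1=\prod_{i\in\emptyset}p_i$, consistent with the $q_\emptyset$ term.
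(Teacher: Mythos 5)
Your proof is correct. The paper offers no proof of its own for this lemma --- it is stated as a citation to Shearer's 1985 paper --- and your argument is the standard derivation: expand $1-\Pr(\bigcup_i A_i)$ by inclusion--exclusion, kill every term supported on a non-independent set using the extremal (mutual exclusivity) hypothesis, and factorize the surviving independent-set terms as $\prod_{i\in S}p_i$; your explicit peeling argument for mutual (rather than merely pairwise) independence, under the strong $\sigma$-algebra reading of the dependency-graph condition, is exactly the right point to make non-trivial, and in the variable-generated setting where the paper actually invokes the lemma it is even more immediate since an independent set of the canonical dependency graph indexes events on pairwise disjoint variable sets.
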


\begin{lemma}\cite{shearer1985problem}\label{lem-diff-shearer}
Suppose $\vec{p}$ is the Shearer's bound of $G_D=([m],E_D)$. Then for $i\in[m]$, 
\begin{align*}
\frac{\partial{q_{\emptyset}(G_D,\vec{p})}}{\partial{p_i} } = -\Pr\left(\bigcap_{j\notin \mathcal{N}_{G_D}(i)\cup\{i\}}\overline{A_j}\right)
\end{align*}
holds for any $\A\sim(G_D,\vec{p})$ satisfying that $A_{i'}\cap A_{i''}=\emptyset$ for any $(i',i'') \in E_D$ where $i',i'' \neq i$.
\end{lemma}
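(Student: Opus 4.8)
\textbf{Proof proposal for Lemma~\ref{lemma-beyond-exist}.}

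The plan is to construct an explicit probability vector $\vec{p}$ supported essentially on (a neighborhood of) the given chordless cycle $C$, verify that it lies just beyond Shearer's bound with a quantifiable gap, and then invoke Lemma~\ref{lemma-beyond} to get the efficiency bound. Let $|C| = \ell$ throughout. First I would set the probabilities so that $p_j$ is some common value $p$ for $j \in C$ and $p_j$ is tiny (or even $0$, handled by a limiting argument) for $j \notin C$; the point of making the off-cycle probabilities negligible is that, by Shearer's monotonicity, being in/out of Shearer's bound is then governed essentially by the induced subgraph on $C$, which is exactly a cycle $C_\ell$, whose Shearer threshold is classical and well understood. I would pick $p$ slightly above the symmetric Shearer threshold $p_a(\ell)$ of the cycle $C_\ell$, specifically $p = p_a(\ell) + t$ for a carefully chosen $t = \Theta(\ell^{-3})$ (up to absolute constants), so that $\vec{p}$ is beyond Shearer's bound of $G_D$ but only barely.

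The heart of the argument is the two-sided estimate on $d(\vec{p}, G_D)$. For the lower bound $d(\vec{p},G_D) \ge \frac{1}{545} r(G_D,\vec{p},C)$, I would use Lemma~\ref{lem-vol-shearer} and Lemma~\ref{lem-diff-shearer}: take the extremal instance on $G_D$ at the boundary vector, so $q_\emptyset = 1 - \Pr(\bigcup A)$ vanishes, and then Lemma~\ref{lem-diff-shearer} controls how fast $q_\emptyset$ (and the other $q_I$'s, via similar derivative bounds or via the structure of the cycle) decreases as we move $\vec{p}$ up along $C$. Quantitatively, the symmetric Shearer threshold of a cycle behaves like $1/4 + \Theta(\ell^{-2})$ and the relevant $q$-polynomials have derivatives of controlled magnitude near the boundary, so a first-order expansion shows that decreasing each $p_j$ ($j\in C$) by roughly $t$ brings $\vec{p}$ back into (or onto the boundary of) Shearer's bound — but decreasing by a total $L_1$ amount of order $\ell \cdot t = \Theta(\ell^{-2})$ still leaves us outside. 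This gives $d(\vec{p},G_D) = \Omega(\ell^{-2})$... but I must be careful: the claimed bound is $2^{-20}\ell^{-3}$, which is weaker, so a cruder estimate suffices and I can afford to lose a factor of $\ell$ in the derivative bookkeeping. Plugging $p = \Theta(1)$ and the chosen $t$ into $r(G_D,\vec{p},C) = \ell \, (\min_{j\in C} p_j)^4 (\frac{2\sum \sqrt{p_j}}{\ell} - 1)^2 = \ell \cdot p^4 \cdot (2\sqrt{p} - 1)^2$ and noting $2\sqrt{p_a(\ell)} - 1 = \Theta(\ell^{-2})$ at the cycle threshold shows $\frac{1}{545} r(G_D,\vec{p},C) = \Theta(\ell^{-3})$, and I would choose the constants in $t$ so that both $d(\vec{p},G_D) \ge \frac{1}{545} r(G_D,\vec{p},C)$ and $\frac{1}{545} r(G_D,\vec{p},C) > 2^{-20}\ell^{-3}$ hold simultaneously.

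For the efficiency conclusion, once I have a $\vec{p}$ with $d((1+\eps)\vec{p}, G_D) < \frac{1}{544}\sum_{i\le\ell} r^+(G_D,\vec{p},C_i)$ for a single cycle $C_1 = C$ and a suitable $\eps$, Lemma~\ref{lemma-beyond} directly gives expected running time at most $m/\eps$. To get the stated bound $2^{29} m^2 \ell^3$, I would take $\eps = \Theta(\ell^{-3}/m)$ — more precisely, I pick $\vec{p}$ a little more generously beyond Shearer's bound so that even $(1+\eps)\vec{p}$ with $\eps \approx 2^{-29} \ell^{-3}/m$ satisfies the hypothesis of Lemma~\ref{lemma-beyond}, which requires the ``slack'' $d((1+\eps)\vec{p},G_D)$ to remain below the $r^+$ quantity; since $r^+$ is $\Theta(\ell^{-3})$ and moving from $\vec{p}$ to $(1+\eps)\vec{p}$ changes $d$ by $O(\eps \cdot m)$, choosing $\eps$ of this order keeps the inequality intact. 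Then $m/\eps = 2^{29} m^2 \ell^3$ as claimed, and $\ell = |C|$.

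\textbf{Main obstacle.} The delicate part is the quantitative control of the Shearer polynomials of the cycle $C_\ell$ near its symmetric threshold — namely pinning down that $2\sqrt{p_a(\ell)} - 1 = \Theta(\ell^{-2})$ and that the $L_1$-distance from a point at height $p_a(\ell) + \Theta(\ell^{-3})$ back to Shearer's bound is at least $\Theta(\ell^{-3})$ — together with the reduction step showing that setting the off-cycle probabilities to near-zero genuinely reduces the Shearer analysis of $G_D$ to that of the induced cycle (this uses monotonicity of $\Interior_a$ and a continuity/limiting argument, plus the fact that adding isolated-ish low-probability vertices cannot push a boundary-of-Shearer vector outside). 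Handling the derivative bounds of Lemma~\ref{lem-diff-shearer} uniformly in $\ell$ — i.e., showing the relevant conditional probabilities $\Pr(\bigcap_{j\notin \mathcal{N}(i)\cup\{i\}}\overline{A_j})$ stay bounded away from $0$ along the path — is where most of the technical care goes; fortunately the target bound $2^{-20}\ell^{-3}$ is loose enough that order-of-magnitude estimates, rather than sharp constants, are all that is needed.
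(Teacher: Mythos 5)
Your proposal does not address the statement in question. The statement to be proved is Lemma~\ref{lem-diff-shearer}, the identity
\[
\frac{\partial q_{\emptyset}(G_D,\vec{p})}{\partial p_i} \;=\; -\Pr\Bigl(\bigcap_{j\notin \mathcal{N}_{G_D}(i)\cup\{i\}}\overline{A_j}\Bigr)
\]
for instances that are extremal on all edges not incident to $i$. What you wrote is a proof sketch for Lemma~\ref{lemma-beyond-exist} (the construction of a probability vector beyond Shearer's bound near a chordless cycle); Lemma~\ref{lem-diff-shearer} appears in your text only as a tool you plan to invoke, and nowhere do you argue why the derivative identity itself holds. So the entire proof of the target statement is missing.

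For the record, the intended argument is short and purely algebraic plus one appeal to Lemma~\ref{lem-vol-shearer}. Since $q_{\emptyset}(G_D,\vec{p})=\sum_{J\in \mathrm{Ind}(G_D)}(-1)^{|J|}\prod_{j\in J}p_j$ is multilinear, differentiating in $p_i$ keeps exactly the independent sets containing $i$, and stripping $i$ from such a set leaves an arbitrary independent set of the induced subgraph $G_D'$ on $[m]\setminus(\mathcal{N}_{G_D}(i)\cup\{i\})$. Hence $\partial q_{\emptyset}(G_D,\vec{p})/\partial p_i = -\,q_{\emptyset}(G_D',\vec{p}')$, where $\vec{p}'$ is the restriction of $\vec{p}$ to the remaining coordinates. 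The hypothesis that $A_{i'}\cap A_{i''}=\emptyset$ for every edge $(i',i'')$ with $i',i''\neq i$ says precisely that the sub-instance $\{A_j: j\notin \mathcal{N}_{G_D}(i)\cup\{i\}\}$ is extremal for $G_D'$, so Lemma~\ref{lem-vol-shearer} gives $q_{\emptyset}(G_D',\vec{p}')=1-\Pr(\bigcup_{j\notin\mathcal{N}_{G_D}(i)\cup\{i\}}A_j)=\Pr(\bigcap_{j\notin\mathcal{N}_{G_D}(i)\cup\{i\}}\overline{A_j})$, which is the claim. None of the machinery in your proposal (cycle thresholds, $L_1$-gaps, Lemma~\ref{lemma-beyond}) is relevant here; please resubmit a proof of the correct lemma.
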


\begin{proof}[Proof of Lemma~\ref{lemma-beyond-exist}]
Let $\ell = |C|$ and $\vec{\lambda} = \left(\frac{1}{4},\cdots,\frac{1}{4},\frac{1}{4}\right)$.
Let $\A\sim(C,\vec{\lambda})$ be an extremal instance defined as follows:
$\A = (A_1,\cdots,A_{\ell})$ is a variable-generated event system fully determined
a set of underlying mutually independent random variables $\{X_1,\cdots,X_{\ell}\}$.
Moreover, $A_i=[X_i<1/2]\land [X_{i+1}\geq 1/2]$ for each $i\in [\ell - 1]$, and  $A_{\ell}=[X_{\ell}<1/2]\land [X_{1}\geq 1/2]$. According to Lemma ~\ref{lem-vol-shearer},
\[
q_{\emptyset}(C,\vec{\lambda}) =\Pr\left(\bigcup_{i\in [\ell]}A_i\right)= \frac{1}{2^{\ell-1}}.
\]
Besides, according to Lemma \ref{lem-diff-shearer}, for any $\vec{\lambda}' =(\frac{1}{4},\cdots,\frac{1}{4},\frac{1}{4} + \eps)$ in the Shearer's bound of $C$, 
\begin{align*}
\frac{\partial{q_{\emptyset}(C,\vec{\lambda}')}}{\partial{\lambda'_{\ell}} } = - \Pr\left(\bigcap_{i\in[2,\ell-2]}\overline{A_i}\right) = - \frac{\ell- 2}{2^{\ell-3}}.
\end{align*}
Thus, for any $\vec{\lambda}\leq \vec{\lambda'}\leq \vec{\lambda}'':= \left(\frac{1}{4},\cdots,\frac{1}{4},\frac{1}{4} + \frac{1}{4(\ell-1)}\right)$, we have that
\begin{align*}
q_{\emptyset}(C,\vec{\lambda}'') = q_{\emptyset}(C,\vec{\lambda}) + \int_{\frac{1}{4}}^{\lambda''_\ell}\frac{\partial{q_{\emptyset}(C,\vec{\lambda}')}}{\partial{\lambda_{\ell}'} } d\lambda_\ell'
>\frac{1}{2^{\ell-1}} - \frac{\ell - 2}{\ell - 1}\cdot\frac{1}{2^{\ell-1}}
=\frac{1}{\ell - 1}\cdot\frac{1}{2^{\ell-1}}.
\end{align*}
Hence $\vec{\lambda}''$ is in the Shearer's bound of $C$. 
Thus, there exists $q>0$ such that $\vec{q}$ defined as follows is on the Shearer's boundary of $G_D$:
\begin{align*}
\forall i\in [m]:\quad
q_{i}=\begin{cases}
\frac{1}{4}& \text{if } i \in [\ell-1],\\
\frac{1}{4} + \frac{1}{4(\ell-1)}& \text{if } i = \ell,\\
q & \text{otherwise}.
\end{cases}
\end{align*}
One can verify that 
\begin{equation}
\begin{aligned}
\label{eq-lb-r}
r^+(G_D,\vec{q},C) = r(G_D,\vec{q},C)
> \ell \cdot \frac{1}{4^4}\cdot\left(\frac{1}{2\ell^2}\right)^2>\frac{1}{2^{10}\cdot \ell^{3}} .
\end{aligned}
\end{equation}
\vspace{2ex}

Define 
$$f(\delta) =  545\cdot d((1 + \delta)\vec{q},G_D) - r^+(G_D,(1 + \delta)\vec{q},C).$$
One can verify that $f(0) < 0$ because $d(\vec{q},G_D) = 0$
and $r^+(G_D,\vec{q},C)>0$.
Moreover, let $\delta'$ be large enough such that 
$(1 + \delta')\vec{q}\not \in \Interior_v(G_D)$.
One can verify that such $\delta'$ must exist.
We have $f(\delta')\geq 0$.
This is because otherwise $f(\delta')<0$ and then
$$ d((1 + \delta')\vec{q},G_D) < \frac{1}{545}\cdot r^+(G_D,(1 + \delta')\vec{q},C).$$ 
By following the proof of Lemma~\ref{lemma-beyond},
we have the MT algorithm terminates at $(1 + \delta')\vec{q}$,
which is contradictory with $(1 + \delta')\vec{q}\not \in \Interior_v(G_D)$.

Moreover, $f(\delta)$ is a continuous function of $\delta$, because $d((1 + \delta)\vec{q},G_D)$ and $r^+(G_D,(1 + \delta)\vec{q},C)$
are both continuous functions of $\delta$.
Combining with $f(0)<0$ and $f(\delta')>0$, we have 
there must be a $0 \leq \delta \leq \delta'$ such that $f(\delta)=0$.
Let $\vec{p} = (1+\delta)\vec{q}$.
By $f(\delta)=0$,
we have 
\begin{equation}\label{eq-d-ub}
d(\vec{p},G_D) = \frac{1}{545} \cdot r^+(G_D,\vec{p},C).
\end{equation}
Combining with $r^+(G_D,\vec{p},C) = r(G_D,\vec{p},C) > r(G_D,\vec{q},C)$ and (\ref{eq-lb-r}),
we have $d(\vec{p},G_D) >2^{-20} \ell^{-3}$.

\vspace{2ex}

Fix a variable-generated event system $\A\sim(G_D,\vec{p})$. Define $\delta_{i,i'}:=\Pr(A_i\cap A_{i'})$. Let $G_B$ denote the event-variable graph of $\A$. 
  Let $G$ denote the induced subgraph of $G_B$ on $C\cup \left(\cup_{i\in C}\Neighbor_{G_B}(i)\right)$. According to Remark \ref{remark:simplified}, it is lossless to assume that $G$ is a cycle of length $2|C|$. 
  Thus we have 
  \begin{equation}\label{eq-F-one-cycle}
  \lowerBB^+(G,\vec{p}) \geq \frac{\big(\min_{i\in C} p_i\big)^2\cdot\left(- |C| + \sum_{i\in L} 2\sqrt{p_i} \right)}{8\sqrt{|C|}}.
 \end{equation}
According to Theorem~\ref{thm:lower-bound-intersection}, there is a matching $\M$ of $G_D$ such that
$\sum_{(i,i')\in \M}\delta_{i,i'}^2 \geq \left(\lowerBB^+(G,\vec{p})\right)^2$.
Define $\vec{p}^-$ as (\ref{eq-pminus}).
We have 
\begin{equation*}
\begin{aligned}
||\vec{p} - \vec{p}^-||_1\geq \frac{2}{17}\sum_{(i,i')\in \M}\delta^2_{i,i'} \geq \frac{2}{17}\sum_{k\leq \ell}\left(\lowerBB^+(G_k,\vec{p})\right)^2.
\end{aligned}
\end{equation*}
Combining with (\ref{eq-F-one-cycle}), we have
\begin{equation*}
\begin{aligned}
||\vec{p} - \vec{p}^-||_1\geq \frac{1}{544}\cdot r^+(G_D,\vec{p},C).
\end{aligned}
\end{equation*}
Let 
$$\eps \triangleq \frac{1}{ 2^{29}\cdot \ell^3   \cdot m}.$$
By (\ref{eq-lb-r})
we have 
\begin{equation*}
\begin{aligned}
m\eps \leq \frac{1}{545 \cdot 544  \cdot 2^{10}\cdot \ell^3}  \leq \left(\frac{1}{544} -  \frac{1}{545}\right)  r^+(G_D,\vec{q},C) \leq \left(\frac{1}{544} -  \frac{1}{545}\right)  r^+(G_D,\vec{p},C).
\end{aligned}
\end{equation*}
Thus we have
\begin{equation*}
\begin{aligned}
||\vec{p} - (1+\eps)\vec{p}^-||_1> ||\vec{p} - \vec{p}^-||_1 - m\eps
\geq \frac{r^+(G_D,\vec{p},C)}{544} - m\eps \geq \frac{r^+(G_D,\vec{p},C)}{545} \geq d(\vec{p},G_D),
\end{aligned}
\end{equation*}
where the last inequality is by (\ref{eq-d-ub}).
Thus by Definition~\ref{def-distance}, 
we have $ (1+\eps)\vec{p}^-$ is in the Shearer's bound of $G_D$.
Combining with Theorem~\ref{thm:main}, we have the expected number of resampling steps performed by the MT algorithm is most $m/\eps$.
\end{proof}
\section{Application to periodic Euclidean graphs}\label{sec:lattice}
In this section, 
we explicitly calculate the gaps between our new criterion and Shearer's bound on periodic Euclidean graphs, including several lattices that have been studied extensively in physics. It turns out the efficient region of MT algorithm can exceed \emph{significantly} beyond Shearer's bound.  

A periodic Euclidean graph $G_D$ is a graph that is embedded into a Euclidean space naturally and has a \emph{translational unit} $G_U$ in the sense that $G_D$ can be viewed as the union of periodic translations of $G_U$. For example, a cycle of length 4 is a translational unit of the square lattice. 

Given a dependency graph $\DependencyGraph$, it naturally defines a bipartite graph $\BipartiteGraph(\DependencyGraph)$ as follows. Regard each edge of $\DependencyGraph$ as a variable and each vertex as an event. An event $A$ depends on a variable $X$ if and only if the vertex corresponding to $A$ is an endpoint of the edge corresponding to $X$.

For simplicity, we only focus on symmetric probabilities, where $\vec{p}=(p,p,\cdots,p)$. 
Given a dependency graph $G_B$ and a vector $\vec{p}$, remember that $\vec{p}$ is on Shearer's boundary
of $G_D$ if $(1 - \eps)\vec{p}$ is in Shearer's bound
and $(1 + \eps)\vec{p}$ is not for any $\eps>0$.


Given a dependency graph $G_D = ([m],E_D)$ and two vertices $i,i'\in [m]$, we use  $\dist(i,i')$ to denote the distance between $i$ and $i'$ in $G_D$.
The following Lemma  will be used.

\begin{lemma}\label{lem:probtransfer}
Suppose $\vec{p}_a = (p_a,p_a,\cdots,p_a)$ is on Shearer's boundary of $\DependencyGraph=([m],E_D)$. 
For any probability vector $\vec{p}$ other than $\vec{p}_a$, it is in the Shearer's bound
if there exist $K,d \in \mathbb{N}^+$, $\mathcal{S}\subseteq 2^{[m]}$ where $\cup_{S\in\mathcal{S}}=[m]$, and
$f:\mathcal{S} \rightarrow 2^{[m]}$ such that the following conditions hold:
\begin{itemize}
\item[(a)] for each $i\in [m]$, there are at most $K$ subsets $S\in \mathcal{S}$ such that $f(S)\ni i$;
\item[(b)] if $f(S)=T$, then $\dist(i,i')\leq d$ for each $i\in S$ and $i'\in T$;
\item[(c)]   if $f(S)=T$, then
\[
\left(\frac{1-p_a}{p_a}\right)^{d-1}\cdot \frac{K}{p_a}\cdot\sum_{i\in S}\max\{p_i-p_a,0\} 
\leq \sum_{i\in T} \max\{p_a-p_i,0\}.
\]
\end{itemize}
\end{lemma}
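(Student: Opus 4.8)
The plan is to verify Shearer's condition for $\vec p$ directly, i.e.\ to show $q_{\emptyset}(G_D\setminus N[I],\vec p)>0$ for every independent set $I$ of $G_D$; equivalently, via $q_I(G,\vec p)=\big(\prod_{i\in I}p_i\big)q_{\emptyset}(G\setminus N[I],\vec p)$, that $q_I(G_D,\vec p)>0$ for all independent $I$. The tools are the deletion--contraction recursion $q_{\emptyset}(G,\vec r)=q_{\emptyset}(G\setminus v,\vec r)-r_v\,q_{\emptyset}(G\setminus N[v],\vec r)$, its differential form $\partial q_{\emptyset}(G,\vec r)/\partial r_v=-q_{\emptyset}(G\setminus N[v],\vec r)$ (here $N[v]$ is the closed neighbourhood), and the probabilistic readings of these quantities for almost-extremal instances recorded in Lemmas~\ref{lem-vol-shearer} and \ref{lem-diff-shearer}. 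Since $\vec p_a$ lies in the closure of Shearer's bound, $q_{\emptyset}(H,\vec p_a)\ge 0$ for every induced subgraph $H$ of $G_D$; the goal is to promote this to a strict inequality at $\vec p$.

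The mechanism is a paced interpolation from $\vec p_a$ to $\vec p$. Split $[m]=P^+\sqcup P^0\sqcup P^-$ according to whether $p_i$ is $>$, $=$ or $<p_a$; conditions (a)--(c) force $P^-\neq\emptyset$ as soon as $P^+\neq\emptyset$ (a set $S\in\mathcal S$ covering a $v\in P^+$ has positive left side in (c), so $f(S)$ must contain a coordinate below $p_a$). I would build a piecewise-linear path $\gamma$ from $\vec p_a$ to $\vec p$ that only raises the coordinates in $P^+$ and only lowers those in $P^-$, pacing it so that each unit of raising at a coordinate $v\in S$ is matched against lowering inside $f(S)$ in the ratio allowed by (c), namely at most $\tfrac{p_a}{K}\big(\tfrac{p_a}{1-p_a}\big)^{d-1}$ units of raise per unit of lower. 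The covering hypothesis $\bigcup_S S=[m]$ ensures every coordinate of $P^+$ is served, and (a) — each coordinate lies in at most $K$ images $f(S)$ — ensures the lowering asked of any coordinate never exceeds the deficit available there; so $\gamma$ actually reaches $\vec p$. Along $\gamma$, by the differential identity, lowering $r_w$ changes $q_{\emptyset}(H,\vec r)$ at rate $+q_{\emptyset}(H\setminus N[w],\vec r)$ and raising $r_v$ at rate $-q_{\emptyset}(H\setminus N[v],\vec r)$.

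The heart of the proof is a spatial--decay estimate: if $\vec r$ is in Shearer's bound and $\dist_{G_D}(v,w)\le d$, then
\[
q_{\emptyset}(H\setminus N[v],\vec r)\ \le\ \frac{1}{p_a}\Big(\frac{1-p_a}{p_a}\Big)^{d-1}\,q_{\emptyset}(H\setminus N[w],\vec r).
\]
I would establish this by induction on $d$ along a shortest $v$--$w$ path, unrolling the recursion one path-vertex at a time and bounding the emerging ratios with the elementary in-Shearer inequalities $q_{\emptyset}(G\setminus x,\vec r)\ge\tfrac{1}{1-r_x}q_{\emptyset}(G,\vec r)$ and $r_x\,q_{\emptyset}(G\setminus N[x],\vec r)<q_{\emptyset}(G\setminus x,\vec r)$, using $r_x\ge p_a$ at the raised endpoint $v$ (its coordinate having been pushed above $p_a$ before any raising is charged) to create the single factor $\tfrac1{p_a}$, and $r_x\le p_a$ at the other path-vertices to create one factor $\tfrac{1-p_a}{p_a}$ per hop away from $v$. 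Granting the estimate, the rate of decrease of $q_{\emptyset}(H,\vec r)$ caused by a raising at $v$ is at most $\tfrac{p_a}{K}(\tfrac{p_a}{1-p_a})^{d-1}\cdot\tfrac1{p_a}(\tfrac{1-p_a}{p_a})^{d-1}=\tfrac1K$ times the rate of increase from the lowering at $w$ it is paired with; summing over the at most $K$ pairings a single $w$ services, the aggregate decrease never overtakes the aggregate increase. Hence $q_{\emptyset}(H,\vec r)\ge q_{\emptyset}(H,\vec p_a)\ge 0$ along $\gamma$, and strictness — either because $q_{\emptyset}(H,\vec p_a)>0$ or because there is a genuine lowering inside $H$ (using $P^-\neq\emptyset$ and the strict versions of the inequalities above) — yields $q_{\emptyset}(H,\vec p)>0$. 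As $I$ was arbitrary, $\vec p\in\Interior_a(G_D)$.

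The main obstacle is the spatial--decay estimate. Its constant is tight only because $\vec r$ is in Shearer's bound: careless use of the two ratio inequalities overshoots — already visible for a high-degree ``star'' subgraph, where the genuine ratio is pinned precisely by the Shearer inequality for that star — so the induction must route the Shearer hypothesis through every step, which is exactly what produces the exponent $d-1$ (rather than $d$) and the lone $\tfrac1{p_a}$. A secondary difficulty is that for $H=G_D\setminus N[I]$ a shortest $G_D$-path from $v$ to $w$ may leave $H$; reconciling condition (b)'s ``$\dist$ in $G_D$'' with distances inside $H$ — e.g.\ by restricting to connected induced subgraphs and controlling the influence of the removed vertices, or by re-routing the compensating coordinate — needs an additional argument.
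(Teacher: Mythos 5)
Your overall strategy—interpolate from $\vec{p}_a$ to $\vec{p}$ and control $\tfrac{d}{dt}q_{\emptyset}(H,\vec r(t))$ by pairing each raised coordinate with lowered ones via $f$—is a continuous analogue of what the paper does, but the paper argues by contradiction: it assumes $\vec p$ is beyond Shearer's bound, discretizes the excesses and deficits into small pieces, and repeatedly applies the black-boxed transfer lemma (Lemma~\ref{lem:probtransferonpath}) to push mass from the high coordinates to the low ones until it reaches a vector $\le\vec p_a$ that is still beyond Shearer's bound. Your ``spatial decay estimate'' is precisely the infinitesimal form of that transfer lemma, so the two proofs are close in substance.

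There is, however, a genuine gap at the heart of your argument, and it is exactly the point where the paper spends most of its effort. Your spatial decay estimate requires the constant $\tfrac1{p_a}\bigl(\tfrac{1-p_a}{p_a}\bigr)^{d-1}$, and the telescoping along a shortest path $v=x_0,x_1,\dots,x_d=w$ produces a factor of roughly $\tfrac{1-r_{x}}{r_{x}}$ per interior vertex $x$. Since $r\mapsto\tfrac{1-r}{r}$ is \emph{decreasing}, bounding each such factor by $\tfrac{1-p_a}{p_a}$ requires $r_x\ge p_a$ at the interior vertices — the opposite of what you wrote (``$r_x\le p_a$ at the other path-vertices''). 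This is not a cosmetic sign slip: along your simultaneous interpolation the coordinates in $P^-$ sit strictly below $p_a$ at every intermediate time, and nothing prevents them from lying on the shortest $v$--$w$ paths (indeed $w\in P^-$, and in the lattice applications the compensating vertices are interleaved with the raised ones). For such paths the per-hop factor exceeds $\tfrac{1-p_a}{p_a}$, the estimate's constant is violated, and the pacing budget supplied by condition (c) no longer covers the decrease of $q_{\emptyset}$. The paper's proof is organized precisely to avoid this: it first tops up the ``unused'' deficit (the vector $\vec p''$), then constructs the injection $g'$ by an iterative re-routing ($F(g_\kappa)$, cases (1) and (2)) whose sole purpose is to order the discrete transfers so that, at the moment a transfer along a shortest $i_\ell$--$j_\ell$ path is executed, every $S_-$ vertex on that path has already been raised to exactly $p_a$ and every other vertex is still $\ge p_a$. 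Your proposal has no counterpart to this ordering/re-routing mechanism, and without one the key estimate fails.

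A second unresolved point, which you flag yourself, is that your estimate must hold for every induced subgraph $H$ (you need $q_{\emptyset}(G_D\setminus N[I],\vec p)>0$ for all independent $I$), while condition (b) only controls distances in $G_D$; for $H$ in which the shortest $G_D$-path from $v$ to $w$ is broken (or $v,w$ land in different components of $H$), the ratio $q_{\emptyset}(H\setminus N[v])/q_{\emptyset}(H\setminus N[w])$ can exceed your constant by an unbounded amount near the boundary. The paper sidesteps this by quantifying over ``beyond Shearer's bound'' and delegating the subgraph bookkeeping to the cited Lemma~\ref{lem:probtransferonpath}. Both issues would need real arguments before your route goes through.
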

While Lemma \ref{lem:probtransfer} looks involved, the basic idea is simple: by contradiction, suppose there is such a vector $\vec{p}'$ beyond Shearer's bound; then we apply Lemma \ref{lem:probtransferonpath} repeatedly to transfer probability from one event to another while keeping the probability vector still beyond Shearer's bound; finally, the vector $\vec{p}'$ will be changed to a vector strictly below $\vec{p}$, which makes a contradiction to the assumption that $\vec{p}$ is on the Shearer's boundary.
The involved part is a transferring scheme which changes $\vec{p}'$ to another probability vector strictly below $\vec{p}$. We leave the proof to the appendix.

The main result of this section is as follows. 
\begin{theorem}\label{thm:gapbetweenlattices}
	Let $G_D = (V_D,E_D)$ be a periodic Euclidean graph with maximum degree $\Delta$, and $\vec{p}_a=(p_a,\cdots,p_a)$ be the probability vector on Shearer's boundary of $G_D$.
	Suppose $G_U=(V_U,E_U)$ is a translational unit of $G_D$ with diameter $D$. Let
	\begin{align*}
	q\triangleq \frac{p_a^{D+2}\big(\lowerBB^+(G_B(G_U),\vec{p}_a)\big)^2}{17\cdot(\Delta+1) \cdot|V|^2\cdot(1-p_a)^{D+1}}.
	\end{align*}
	Then for any $\A\sim(G_B(G_D),\vec{p})$ where $(1+\eps)\vec{p} \leq (p_a+q,\cdots,p_a+q)$,
	the expected number of resampling steps performed by the MT algorithm is most $|V_D|/\eps$.
\end{theorem}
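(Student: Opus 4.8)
The plan is to reduce Theorem~\ref{thm:gapbetweenlattices} to the intersection LLL: it suffices to produce a matching $\M$ of $G_D$ and a vector $\vec{\delta}$ with $\A\sim(G_B(G_D),\vec{p},\M,\vec{\delta})$ for which the reduced vector $(1+\eps)\vec{p}^-$ of (\ref{eq-pminus}) lies in $\Interior_a(G_D)$; then Theorem~\ref{thm:main} immediately bounds the expected number of resampling steps by $m/\eps=|V_D|/\eps$, since the canonical dependency graph of $\A$ is $G_D$. The two tools that realize this are Theorem~\ref{thm:lower-bound-intersection}, which converts the bipartite structure of $\A$ into a quantitative lower bound on pairwise intersections — hence a guaranteed ``deficit'' of $\vec{p}^-$ below $\vec{p}$, localized to translates of $G_U$ — and Lemma~\ref{lem:probtransfer}, which certifies membership in $\Interior_a(G_D)$ by transferring the tiny excess of $(1+\eps)\vec{p}^-$ over $\vec{p}_a$ into nearby coordinates that sit strictly below $\vec{p}_a$.

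First I would exploit periodicity: fix a family of pairwise disjoint vertex sets $L_1,L_2,\dots\subseteq V_D$, each a translate of $V_U$, that together cover $V_D$ (such a tiling exists for the lattices in Table~\ref{table1}; in general one takes a maximal disjoint packing of translates and treats the remaining vertices as singleton cells). Let $G_k$ be the subgraph of $G_B(G_D)$ induced on $L_k$ together with all its variable-neighbours. By Remark~\ref{remark:simplified} we may replace $G_k$ by its simplified graph, which is a copy of the simplified graph of $G_B(G_U)$; in particular each $G_k$ is linear, and $\lowerBB^+(G_k,\vec{p})$ equals $\lowerBB^+(G_B(G_U),\vec{p}_a)$ on cells where $\vec{p}$ stays near $\vec{p}_a$. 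Setting $\delta_{i,i'}:=\Pr(A_i\cap A_{i'})$ and applying Theorem~\ref{thm:lower-bound-intersection} to $G_B(G_D)$, $\vec{p}$ and $\{L_k\}$ yields a matching $\M$ of $G_D$ with $\sum_{(i,i')\in\M\cap E_k}\delta_{i,i'}^2\ge(\lowerBB^+(G_k,\vec{p}))^2$ for every $k$. Since edges of $E_k$ have both endpoints in $L_k$, the vector $\vec{p}^-$ of (\ref{eq-pminus}) satisfies, for each $k$,
\[
\sum_{i\in L_k}(p_i-p_i^-)\ \ge\ \tfrac{2}{17}\,\big(\lowerBB^+(G_k,\vec{p})\big)^2 ,
\]
so every cell carries a quantifiable deficit below $\vec{p}$.

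Next I would apply Lemma~\ref{lem:probtransfer} to $(1+\eps)\vec{p}^-$ with the Shearer's-boundary vector $\vec{p}_a$, choosing the parameters according to the lattice geometry: $d=D+1$, $K=\Delta+1$, $\mathcal{S}$ a refinement of the cells $\{L_k\}$, and $f$ sending each cell's excess into the cells at $G_D$-distance at most $d$ that carry deficit. Conditions (a) and (b) of Lemma~\ref{lem:probtransfer} follow from disjointness of the cells and from $\dist$-bounds inside translates of $G_U$ (diameter $D$). For condition (c), every coordinate of $(1+\eps)\vec{p}^-$ exceeds $p_a$ by at most $(1+\eps)p_i^--p_a\le(p_a+q)-p_a=q$, so the left-hand side of (c) is at most $\big(\tfrac{1-p_a}{p_a}\big)^{D}\tfrac{\Delta+1}{p_a}\,|V|\,q$, while the right-hand side is at least the per-cell deficit $\tfrac{2}{17}\big(\lowerBB^+(G_B(G_U),\vec{p}_a)\big)^2$, up to lower-order $O(\eps)$ and $O(|V|q)$ corrections from the factor $1+\eps$ and from discarding negative parts. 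Substituting
\[
q\ =\ \frac{p_a^{D+2}\,\big(\lowerBB^+(G_B(G_U),\vec{p}_a)\big)^2}{17\,(\Delta+1)\,|V|^2\,(1-p_a)^{D+1}}
\]
makes the right side dominate the left, so (c) holds; Lemma~\ref{lem:probtransfer} gives $(1+\eps)\vec{p}^-\in\Interior_a(G_D)$, and Theorem~\ref{thm:main} concludes.

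The hard part is the localization step and the exact calibration in Lemma~\ref{lem:probtransfer}(c). One must (i) secure a genuine disjoint cover of $V_D$ by translates of $V_U$, or otherwise control the uncovered vertices; (ii) relate $\lowerBB^+(G_k,\vec{p})$ to $\lowerBB^+(G_B(G_U),\vec{p}_a)$ despite $G_k$ acquiring extra pendant variables and possibly larger degrees, and in particular handle cells on which some coordinate of $\vec{p}$ drops far below $p_a$ (so the intersection-deficit can vanish) — there one instead uses that such a coordinate already supplies a large deficit $p_a-(1+\eps)p_i^-$, or routes the excess into neighbouring cells, which is exactly why $d$ must exceed $D$ and $K$ must be of order $\Delta+1$, and is what fixes the exponents $D+1,D+2$ and the factor $\Delta+1$ in $q$; and (iii) track the $\eps$-dependent error terms so that (c) holds with the precise constants absorbed by the stated $q$. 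Beyond this bookkeeping, the argument is a direct chaining of Theorem~\ref{thm:lower-bound-intersection}, Lemma~\ref{lem:probtransfer} and Theorem~\ref{thm:main}.
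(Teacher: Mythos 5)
Your skeleton — build a matching, pass to $\vec{p}^-$, certify $(1+\eps)\vec{p}^-\in\Interior_a(G_D)$ via Lemma~\ref{lem:probtransfer}, and finish with Theorem~\ref{thm:main} — is exactly the paper's. The genuine gap is in the localization step. You insist on a \emph{disjoint} cover of $V_D$ by translates of $V_U$ so that Theorem~\ref{thm:lower-bound-intersection} applies, and your fallback (a maximal disjoint packing plus singleton cells for the leftover vertices) does not work: a singleton cell contains no chordless cycle, so its guaranteed deficit is $\lowerBB^+=0$, yet its vertex may still carry excess up to $q$ over $p_a$, and Lemma~\ref{lem:probtransfer} as stated gives you no mechanism to route that excess into a \emph{different} cell's deficit — $f$ must assign to each $S\in\mathcal{S}$ a target $T$ whose own deficit dominates the excess of $S$. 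The paper avoids this entirely by taking $\mathcal{S}$ to be \emph{all} (overlapping) translates $V_U^1,V_U^2,\dots$, building one global greedy matching $\M$, and proving Claim~\ref{claim:51} (a per-translate analogue of Lemma~\ref{lemma-indep-pairs}, not an invocation of Theorem~\ref{thm:lower-bound-intersection}, whose hypotheses require disjointness): every translate's adjacent matched pairs $T_k\subseteq\N^+(V_U^k)$ carry intersection weight at least $\big(\lowerBB^+(G_B(G_U),\vec{p})\big)^2$. The price of overlap is paid in condition (a) of Lemma~\ref{lem:probtransfer}.

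This is also where your parameters fail. With targets $f(V_U^k)=T_k\subseteq\N^+(V_U^k)$, a fixed vertex $v$ can lie in $T_k$ for every translate meeting $\N^+(v)$, of which there can be up to $(\Delta+1)|V_U|$; so $K$ must be $(\Delta+1)|V_U|$, not $\Delta+1$. Likewise a matched vertex in $T_k$ sits at distance up to $D+1$ from $V_U^k$ (diameter $D$ plus one step into the neighborhood), forcing $d=D+2$; and it is $d-1=D+1$ that appears as the exponent of $\frac{1-p_a}{p_a}$ in condition (c), which is precisely why the stated $q$ carries $(1-p_a)^{D+1}$ and $p_a^{D+2}$ and the factor $(\Delta+1)|V|^2$. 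With your $K=\Delta+1$ and $d=D+1$, condition (a) is simply false for overlapping targets, and your bound on the left side of (c) does not correspond to the theorem's $q$. The remaining issue you flag — comparing $\lowerBB^+(G_k,\vec{p})$ with $\lowerBB^+(G_B(G_U),\vec{p}_a)$ — is handled in the paper by Remark~\ref{remark:simplified} together with the hypothesis $(1+\eps)\vec{p}\le(p_a+q,\dots,p_a+q)$, and the $O(|V_U|q)$ corrections are absorbed because condition (c) is verified with the explicit slack $\frac{2}{17}(\lowerBB^+)^2-\Delta|V_U|q$ against $\big(\frac{1-p_a}{p_a}\big)^{D+1}\frac{(\Delta+1)|V_U|^2}{p_a}q$.
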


\begin{proof}
Fix any $\A\sim(G_B(G_D),\vec{p})$ where $(1+\eps)\vec{p}\leq (p_a+q,\cdots,p_a+q)$. Let $\delta_{v_0,v_1}$ denote $\Pr(A_{v_0}\cap A_{v_1})$ for  $(v_0,v_1)\in E_D$. We construct a matching $\M\subset E_D$ greedily as follows: we maintain two sets $E$ and $\M$,
which are initialized as $E_D$ and $\emptyset$ respectively. We do the following iteratively until $E$ becomes empty:
select a edge $(v_0,v_1)$ with maximum $\delta_{v_0,v_1}$ from $E$, add $(v_0,v_1)$ to $\M$, and delete all edges connecting $v_0$ or $v_1$ from $E$ (including $(v_0,v_1)$). Let $\vec{\delta}=\big(\delta_{v_0,v_1}:(v_0,v_1)\in\M\big)$. Then $\A\sim (G_B(G_D),\vec{p},\M,\vec{\delta})$.

Define $\vec{p}^-$ as~(\ref{eq-pminus}). In the remaining part of the proof, we will show that $(1+\eps)\vec{p}^-$ is in the Shearer's bound. This implies the conclusion immediately by Theorem \ref{thm:main}.

In fact, it is a direct application of Lemma \ref{lem:probtransfer} to show that $(1+\eps)\vec{p}^-$ is in the Shearer's bound.  
To provide more detail, we need some notations. We use $v,v',v_1,v_2,\cdots$ to represent vertices in $G_D$, and use $u,u',u_1,u_2,\cdots$ to represent vertices in $G_U$.
Let $G_U^1,G_U^2,\cdots$ be the periodic translations of $G_U$ in $G_D$. And we use a surjection\footnote{$h$ is possibly not a injection, as these translations are possibly overlapped with each other. } $h:\mathbb{N}^+\times V_U\rightarrow V_D$ to represent how these periodic translations constitute $G_D$: $h(k,u)=v$ if the copy of $u\in V_U$ in $k$-th translation (i.e., $G_U^k$) is $v\in V_D$. In particular, the vertex set of $G_U^k$ , denoted by $V_U^k$, is $\{h(k,u):u\in V\}$, and the edge set of $G_U^k$ , denoted by $E_U^k$, is $\{(h(k,u),h(k,u')):(u,u')\in E_U\}$. Besides,  let $\N^+(v):=\N_{G_D}(v)\cup \{v\}$ for $v\in V_D$. For $V\subset V_D$, let $\N^+(V):=\cup_{v\in V}\N^+(v)$.  Let $T_k:=\{(v_0,v_1)\in \M: v_0,v_1\in \N^+(G_U^k)\}$ stand for the pairs in $\M$ adjacent to $G_U^k$. 
With some abuse of notation, we sometimes use $v\in T_k$ to denote that $(v,v')\in T_k$ for some $v'\in V_D$.

The following claim says that $\vec{p}^-$ is much smaller than $\vec{p}$ even projected on a single translation. Its proof uses a similar idea to Theorem \ref{lemma-indep-pairs} and can be found in the appendix.
\begin{claim}\label{claim:51}
$\sum_{(v_0, v_1)\in T_k }\delta_{v_0,v_1}^2 \geq \big(\lowerBB^+(G_B(G_U),\vec{p})\big)^2$ holds for any $k$.
\end{claim}

To apply Lemma \ref{lem:probtransfer}, let $K:=(\Delta+1)|V_U|$, $d:=D+2$, $\mathcal{S}:=\{V_U^1,V_U^2,\cdots\}$, and $f(V_U^k):=T_k$. Based on Claim \ref{claim:51}, one can check that all the three conditions in Lemma \ref{lem:probtransfer} hold (see the appendix for details). Thus, according to Lemma \ref{lem:probtransfer}, $(1+\eps)\vec{p}^-$ is in Shearer's bound.
\end{proof}

We apply Theorem \ref{thm:gapbetweenlattices} to three lattices: square lattice, Hexagonal lattice, and simple cubic lattice.
For square lattice, we take the $5\times5$ square with 25 vertices as the translational unit. 
For Hexagonal lattice, we take a graph consisting of $19$ hexagons as the translational unit, in which 
there are 3,4,5,4,3 hexagons in the five columns, respectively. 
For simple cubic lattice, we take the $3\times 3 \times 3$ cube with 27 vertices as the translational unit. 
The explicit gaps are summarized in  Table~\ref{table1}.
Finally, the lower bounds for these three lattices in Table~\ref{table1} hold for all bipartite graphs with the given canonical dependency graph,
because all such bipartite graphs are essentially the same under the reduction rules defined in~\cite{he2017variable}.



\bibliographystyle{alpha}
\bibliography{MT_arxiv}
\appendix
\section{Missing Proofs in Section \ref{sec:preliminaries}}

\begin{proof}[Proof of Proposition \ref{prop:27}] The following simple greedy procedure will find such a $\mathcal{P}$.
\vspace{-2.5ex}
	\begin{algorithm}[h]
		Initially, $\mathcal{P}=\emptyset$\;
		\For{ each $(i,i')\in\M$}{
			\For{each $k$ from $1$ to $|\mathrm{List}(D,i,i')|-1$}{
				\If{the $k$-th node and $(k+1)$-th node in $\mathrm{List}(D,i,i')$ form a reversible arc}{add this arc to $\mathcal{P}$, and $k:=k+2$;\\}
				\Else{$k:=k+1$;\\}
				}
		 
		     }
		Return $\mathcal{P}$;
	\end{algorithm}
	\vspace{-2.5ex}

Obviously, for each $(i,i')\in\M$, the procedure contains at least half of all reversible arcs $u\rightarrow v$ where $\{L(u),L(v)\}=\{i,i'\}$, hence at least half of nodes in $\mathcal{V}(D,i)$.
\end{proof}

\section{Proof of Proposition \ref{obs-mapping-dag}}
	Given a pwdag $D=(V,E,L)$ of $G_D$ and a Boolean string $\vec{R}\in \{0,1\}^{\Mat(D)}$,
	define $h(D,\vec{R})$ to be a directed graph $D':=(V',E',L')$ where $V'=V$, $E'=E$, and 
	\begin{equation*}
	\begin{aligned}
	\forall v\in V:\quad L'(v) =
	\begin{cases}
	(L(v))^{\uparrow}, & \text{if }  v\in \Mat \text{ and } R_v = 0;\\
	(L(v))^{\downarrow},& \text{if }  v\in \Mat \text{ and } R_v = 1;\\
	L(v),& \text{otherwise, } v\not \in \Mat.
	\end{cases}	
	\end{aligned}
	\end{equation*} 

	It is easy to verify that $h(D,\vec{R})$ is a pwdag of $G^\M$.
	Moreover, given any $D'\in \D(G^{\M})$, there is one and only one $D\in\mathcal{D}(G_D)$ and $\vec{R}\in\{0,1\}^{\Mat(D)}$ such that $h(D,\vec{R})=D'$. In other words, $h$ is a bijection between $\{(D,\vec{R}):D\in \mathcal{D}(G_D),\vec{R}\in \{0,1\}^{\Mat(D)}\}$ and $\mathcal{D}(G^\M)$.  So 
	\begin{align*}
	\sum_{D'\in \D(G^{\M})}\prod_{v' \text{ in } D'}p^{\M}_{L'(v')}
	&=  \sum_{D\in \D(G_D)}\sum_{\vec{R}\in \{0,1\}^{\Mat(D)}}\prod_{v' \text{ in } h(D,\vec{R})}p^{\M}_{L'(v')}
	\\&= \sum_{D\in \D(G_D)}
	\sum_{\vec{R}\in \{0,1\}^{\Mat(D)}}\prod_{v \text{ in } D}p^{\M}_{L'(v)}
	\\&= \sum_{D\in \D(G_D)}\prod_{v \not \in \Mat(D)}p^{\M}_{L'(v)}
	\left(\sum_{\vec{R}\in \{0,1\}^{\Mat(D)}}\prod_{v  \in \Mat(D)}p^{\M}_{L'(v)}\right)
	\\&= \sum_{D\in \D(G_D)}\prod_{v \not \in \Mat(D)}p^{\M}_{L(v)}
	\prod_{v \in \Mat(D)}\left(p^{\M}_{L(v)^\uparrow}  + p^{\M}_{L(v)^\downarrow} \right)
	\\&= \sum_{D\in \D(G_D)}\prod_{v \not \in \Mat(D)}p^{-}_{L(v)}
	\prod_{v \in \Mat(D)}\left(p'_{L(v)}  + p^-_{L(v)} - p'_{L(v)} \right)
	\\&= \sum_{D\in \D(G_D)}\prod_{v \text{ in }D}p^{-}_{L(v)},
	\end{align*}
	where the second equality is by that $V = V'$, the forth equality is by the definition of $L'$, and the fifth equality is by the definition of $\vec{p}^{\M}$.

\section{Proof of Theorem \ref{thm-injection}}
We first verify that the image of $h$ is a subset of $\D(G^{\M})$.
\begin{lemma}\label{lemma-is-dag}
	For any $D\in \D(G_D)$ and $\Rem\in \psi(D)$, $h(D,\Rem)\in\D(G^{\M})$.
\end{lemma}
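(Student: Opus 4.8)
The plan is to check directly that $D':=h(D,\Rem)=(V',E',L')$ meets the four requirements of being a pwdag of $G^{\M}$: $L'$ takes values in $V^{\M}$; $D'$ is a DAG; the adjacency condition of Definition~\ref{def:wdag} holds; and $D'$ has exactly one sink. That $L'$ lands in $V^{\M}$ is read off from (\ref{eq-f_v})--(\ref{eq-f_ast_v}), using that $\Rem_3\cup\Rem_4\subseteq\Mat(D)$ and that $L(v)\in\M$ for every $v\in\Mat(D)$, so that $L(v)^{\uparrow},L(v)^{\downarrow}$ and (for the $V_2'$-nodes) $i^{\uparrow},i^{\downarrow}$ are all vertices of $G^{\M}$.

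For acyclicity I would equip $V'$ with the total order $\prec'$ refining the topological order $\pi_D$ of $D$ through $g$: declare $u'\prec' v'$ when $g(u')\prec g(v')$ in $\pi_D$, and break the only possible ties -- exactly the pairs $\{f(v),f^{\ast}(v)\}$ with $v\in\Rem_3\cup\Rem_4$, which are the unique pairs of distinct nodes sharing a $g$-value -- by putting $f^{\ast}(v)$ before $f(v)$. Every arc of $E_1'$ has the form $f^{\ast}(v)\to f(v)$ and is thus forward in $\prec'$, and every arc $u'\to v'$ of $E_2'$ satisfies $g(u')\prec g(v')$ by construction, hence is also forward; so $D'$ is a DAG.

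The crux is the adjacency condition, for which I would first record a base-graph description of $E^{\M}$: writing $\mathrm{base}(i)=\mathrm{base}(i^{\uparrow})=\mathrm{base}(i^{\downarrow})=i$, one verifies from the definition of $G^{\M}$ that for $a,b\in V^{\M}$ one has $a=b$ or $(a,b)\in E^{\M}$ if and only if $\mathrm{base}(a)=\mathrm{base}(b)$ or $(\mathrm{base}(a),\mathrm{base}(b))\in E_D$ (for ``$\Leftarrow$'' with $\mathrm{base}(a)=\mathrm{base}(b)=i$, $a\ne b$, use that $i\in\M$, so $(i^{\uparrow},i^{\downarrow})\in E^{\M}$; for ``$\Rightarrow$'' use that every edge of $G^{\M}$ comes from some $(i_0,i_1)\in E_D$). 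Since $\mathrm{base}(L'(f(v)))=L(v)$ for all $v\in V$ and $\mathrm{base}(L'(f^{\ast}(v)))=i$ for $v\in\Rem_3\cup\Rem_4$ with $(L(v),i)\in\M$, a short case analysis on the locations of $u',v'$ in $V_1'$ versus $V_2'$ finishes the job. If $g(u')\ne g(v')$ there is no $E_1'$ arc between $u'$ and $v'$, and $E_2'$ contains exactly one of $u'\to v'$, $v'\to u'$ precisely when $L'(u'),L'(v')$ are equal-or-adjacent in $G^{\M}$, giving the equivalence. If $g(u')=g(v')$ then $\{u',v'\}=\{f(v),f^{\ast}(v)\}$ for some $v\in\Rem_3\cup\Rem_4$; $E_2'$ contributes no arc, and $E_1'$ supplies $f^{\ast}(v)\to f(v)$, while $L'(f(v))=L(v)^{\downarrow}$ and $L'(f^{\ast}(v))\in\{i^{\uparrow},i^{\downarrow}\}$ with $(L(v),i)\in\M\subseteq E_D$, so the two labels are adjacent in $G^{\M}$, again matching the required equivalence. (In particular, between nodes whose labels are unequal and non-adjacent there is no arc, since $E_1'$ arcs always join labels of adjacent base.)

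It remains to check that $D'$ has a unique sink. Let $w$ be the unique sink of $D$; since $D$ is a finite DAG, every node reaches $w$ by a directed path, so $w$ is the last element of $\pi_D$. Then $f(w)$ has no outgoing arc: none in $E_1'$ (its source lies in $V_1'$), and none in $E_2'$ (which would require $w=g(f(w))\prec g(v')$). Every other node $v'$ does have an outgoing arc: if $v'\in V_2'$ then $E_1'$ gives $v'\to f(g(v'))$; if $v'=f(v)$ with $v\ne w$, choose an arc $v\to v''$ of $D$ (one exists as $v$ is not a sink of $D$), so $L(v),L(v'')$ are equal-or-adjacent in $G_D$, hence $L'(f(v)),L'(f(v''))$ are equal-or-adjacent in $G^{\M}$, and $v\prec v''$ in $\pi_D$ yields $f(v)\to f(v'')\in E_2'$. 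Thus $f(w)$ is the unique sink, so $D'\in\D(G^{\M})$. The part I expect to be fiddliest is the case analysis for the adjacency condition -- in particular keeping straight the tie case $g(u')=g(v')$ together with the extra nodes in $V_2'$ -- but it is entirely routine once the base-graph description of $E^{\M}$ is in place.
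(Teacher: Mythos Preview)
Your proposal is correct and follows essentially the same approach as the paper's proof: the same total order $\prec'$ on $V'$ (refining $\pi_D$ through $g$, with $f^{\ast}(v)$ placed before $f(v)$) for acyclicity, the same case split on $g(u')=g(v')$ versus $g(u')\ne g(v')$ for the adjacency condition, and the same identification of $f(w)$ as the unique sink. Your introduction of the $\mathrm{base}$ function and the explicit characterization of $E^{\M}$ in terms of $E_D$ is a tidy organizational device that the paper handles more implicitly, but the underlying argument is the same.
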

\begin{proof}
First, we prove that $h(D,\Rem) = (V',E',L')$ is a DAG. Define a total order $\pi'$ over the set $V'$ as follows:
for any two distinct nodes $u',v'\in V'$,
\begin{itemize}
\item if $g(u')\neq g(v')$, then $u' \prec v'$ in $\pi'$ if and only if $g(u')\prec g(v')$ in $\pi_D$;
\item if $g(u') = g(v')$, then $u'  \prec v'$ in $\pi'$ if and only if $u'=f^\ast(g(u'))$ (and then $v'=f(g(u'))$).
\end{itemize}
One can verify that $\pi'$ is a topological order of $h(D,\Rem)$, which means that $h(D,\Rem)$ is a DAG.
	
Secondly, we prove that $h(D,\Rem)$ is a wdag of $G^{\M}$. As $h(D,\Rem)$ has been shown to be a DAG, we only need to verify that: for any two distinct nodes $u',v'$ in $D'$, 
there is a arc between $u'$ and $v'$ (in either direction)	if and only if either $L'(u')=L'(v')$ or $(L'(v'),L'(u')) \in E^{\M}$.

\noindent$\Longrightarrow$: By symmetry, suppose $(u' \rightarrow v') \in E'$. If $(u' \rightarrow v') \in E'_1$, then $u' = f^\ast(w)$ and $v' = f(w)$ for some vertex $w \in \Rem_3\cup\Rem_4$.
	Thus, by (\ref{eq-f_v}) and (\ref{eq-f_ast_v}) we have 		
	$L'(u') \in \{i^{\uparrow},i^{\downarrow}\}$ and 
	$L'(v')=L(w)^{\downarrow}$ where 
	$(L(w),i)\in \M$.
	By $(L(w),i)\in \M$,
	any two vertices in $\{(L(w))^{\uparrow},(L(w))^{\downarrow},i^{\uparrow},i^{\downarrow}\}$ are connected in $G^{\M}$. In particular, $(L'(v'),L'(u')) \in E^{\M}$.
	If $(u' \rightarrow v') \in E'_2$, we have 
	$L'(u')=L'(v')$ or $(L'(u'),L'(v')) \in E^{\M}$
	immediately.

\noindent$\Longleftarrow$: Suppose $u',v'\in V'$ are two distinct nodes where $L'(u')=L'(v')$ or $(L'(u'),L'(v')) \in E^{\M}$. If $g(u')\not = g(v')$, then either 
	$g(u') \prec g(v')$ or $g(v') \prec g(u')$ in $\pi_D$, which implies that
	either $(u'\rightarrow v') \in E'_2$ or $(v'\rightarrow u') \in E'_2$.
	Otherwise, $g(u')=g(v')$. Let $v:=g(u')=g(v')$.
	By (\ref{eq-f_v}) and (\ref{eq-f_ast_v}), we have 
	$v\in \Rem_3 \cup \Rem_4$ and $\{u',v'\} = \{f(v),f^\ast(v)\}$.
	Therefore either $u'\rightarrow v'$ or 
	$v'\rightarrow u'$ is in  $E'_1$.	

Finally, one can check that $f(v)$ where $v$ is the unique sink of $D$ is the unique sink of $D'$. This completes the proof.		  
\end{proof}

In the rest of this section, we show that $h$ is injective.
	Given $D\in\D(G_D)$ and $(i,j) \in \M$,
	recall that $\mathrm{List}(D,i,j)$ is the sequence listing all nodes in $D'$ labelled with $i$ or $j$ in the topological order. 
	Similarly, 
	\begin{definition}
		Given $D' = (V',E',L')\in\D(G^\M)$ and $(i,j) \in \M$, we use $\mathrm{List'}(D',i,j)$ to denote the unique sequence listing all nodes in $D'$ with label in $\{i^{\uparrow},i^{\downarrow},j^{\uparrow},j^{\downarrow}\}$ in the topological order.
	
\end{definition}

Claims \ref{claim-order-v_i} and \ref{claim-hdr-ex} are two properties about $\mathrm{List}'(D',i,j)$, which will be used to show the injectiveness of $h$.
\begin{claim}\label{claim-order-v_i}
	Suppose $D'=h(D,\Rem)$ for some $D\in\D(G_D)$ and $\Rem\in\psi(D)$. Let $(i,j)\in \M$. Then for any node $v'$ in $D'$,
	\begin{itemize}
	\item[(a)]  $v'\in \mathrm{List}'(D',i,j)$ if and only if $g(v')\in\mathrm{List}(D,i,j)$;
	\item[(b)] for any other node $u'$ in $D'$, if $g(u')$ precedes $g(v')$ in $\mathrm{List}(D,i,j)$, then $u'$ precedes $v'$ in $\mathrm{List}'(D',i,j)$;
	\item[(c)] if $v\in\Rem_3\cup\Rem_4$, then $f(v)$ is next to $f^\ast(v)$ in $\mathrm{List}'(D',i,j)$.
	\end{itemize}
\end{claim}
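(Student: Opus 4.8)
The plan is to derive all three parts directly from two ingredients: the explicit description of the labelling $L'$ in Definition~\ref{def:h}, and the explicit topological order $\pi'$ of $D'=h(D,\Rem)$ constructed in the proof of Lemma~\ref{lemma-is-dag}. Recall from that construction that, for distinct $u',v'\in V'$, if $g(u')\ne g(v')$ then $u'\prec v'$ in $\pi'$ iff $g(u')\prec g(v')$ in $\pi_D$, while if $g(u')=g(v')$ then $f^\ast(g(u'))$ precedes $f(g(u'))$. I would first record that $\mathrm{List}'(D',i,j)$ is well defined, being exactly the subsequence of $\pi'$ on the nodes whose label lies in $\{i^\uparrow,i^\downarrow,j^\uparrow,j^\downarrow\}$: since $(i,j)\in\M\subseteq E_D$ and both $i,j$ are split in $G^{\M}$, the four vertices $i^\uparrow,i^\downarrow,j^\uparrow,j^\downarrow$ are pairwise adjacent in $G^{\M}$, so any two such nodes of $D'$ are joined by an arc and hence comparable in every topological order (exactly as in the footnote justifying that $\mathrm{List}(D,i,j)$ is well defined).

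For part~(a) I would argue according to which piece of $V'=V_1'\sqcup V_2'$ contains $v'$, using that $\M$ is a matching through $(i,j)$. If $v'=f(v)$, then $L'(v')\in\{(L(v))^\uparrow,(L(v))^\downarrow\}$ exactly when $v\in\Mat(D)$, and this label lies in $\{i^\uparrow,i^\downarrow,j^\uparrow,j^\downarrow\}$ iff $L(v)\in\{i,j\}$, i.e.\ iff $g(v')\in\mathrm{List}(D,i,j)$; when $v\notin\Mat(D)$ the label $L'(v')=L(v)\in[m]$ is never of that form and $L(v)\notin\{i,j\}$ anyway. If $v'=f^\ast(v)$ with $v\in\Rem_3\cup\Rem_4$ and $(L(v),i')\in\M$, then $L'(v')\in\{(i')^\uparrow,(i')^\downarrow\}$, and because $\M$ is a matching the unique edge through $L(v)$ equals $(i,j)$ precisely when $L(v)\in\{i,j\}$, so $i'\in\{i,j\}\iff L(v)\in\{i,j\}\iff g(v')\in\mathrm{List}(D,i,j)$. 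Part~(b) is then immediate: the hypothesis forces $g(u')\ne g(v')$ and both in $\mathrm{List}(D,i,j)$, hence by~(a) both $u',v'$ are in $\mathrm{List}'(D',i,j)$; since $g(u')\prec g(v')$ in $\pi_D$, the rule for $\pi'$ gives $u'\prec v'$ in $\pi'$, so $u'$ precedes $v'$ in $\mathrm{List}'(D',i,j)$.

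For part~(c), fix $v\in\Rem_3\cup\Rem_4$; the only substantive case is $L(v)\in\{i,j\}$ (otherwise part~(a) shows neither copy of $v$ is in $\mathrm{List}'(D',i,j)$, and there is nothing to prove). By the computation in~(a), both $f(v)$ and $f^\ast(v)$ then lie in $\mathrm{List}'(D',i,j)$, and $f^\ast(v)\prec f(v)$ in $\pi'$ since they share their $g$-value. If some $w'\in\mathrm{List}'(D',i,j)$ satisfied $f^\ast(v)\prec w'\prec f(v)$ in $\pi'$, then $g(w')=v$ would force $w'\in\{f(v),f^\ast(v)\}$, contradicting strictness; $g(w')\prec v$ in $\pi_D$ would force $w'\prec f^\ast(v)$; and $v\prec g(w')$ in $\pi_D$ would force $f(v)\prec w'$ — both also contradictions. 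Hence $f(v)$ is next to $f^\ast(v)$ in $\mathrm{List}'(D',i,j)$.

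I do not expect a real obstacle here: once the descriptions of $L'$ and $\pi'$ are in hand, every part is a short unwinding. The only thing to be careful about is the matching bookkeeping in part~(a) — arguing cleanly that for $v'=f^\ast(v)$ the partner index $i'$ of $L(v)$ in $\M$ satisfies $i'\in\{i,j\}\iff L(v)\in\{i,j\}$ — and keeping track of the degenerate cases in which $v'$, or the two copies $f(v),f^\ast(v)$, simply fail to appear in $\mathrm{List}'(D',i,j)$, where the stated assertions hold vacuously.
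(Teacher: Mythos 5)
Your proof is correct and takes essentially the same route as the paper: part (a) is an unwinding of the labelling rules in Definition~\ref{def:h}, and parts (b) and (c) follow from the order structure of $h(D,\Rem)$. The only presentational difference is that you route (b) and (c) through the explicit topological order $\pi'$ built in the proof of Lemma~\ref{lemma-is-dag}, whereas the paper exhibits the relevant arcs of $E_1'\cup E_2'$ directly — equivalent arguments, since $\pi'$ is defined precisely to respect those arcs; your explicit handling of the well-definedness of $\mathrm{List}'$ and of the degenerate case $L(v)\notin\{i,j\}$ in (c) is a welcome tightening.
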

\begin{proof}
Part (a) is immediate by Definition \ref{def:h}.

 Now, we show Part (b). Suppose $g(u')$ precedes $g(v')$ in $\mathrm{List}(D,i,j)$. Then $g(u')\prec g(v')$ in $\pi_D$.
Thus one can check that all the four arcs $f(g(u'))\rightarrow f(g(v'))$, $f^\ast(g(u'))\rightarrow f(g(v'))$, $f(g(u'))\rightarrow f^\ast(g(v'))$, and $f^\ast(g(u'))\rightarrow f^\ast(g(v'))$ are contained in $E_2'$. In particular, $(u'\rightarrow v')\in E'$ as  $u'\in\{f(g(u')),f^\ast(g(u'))\}$ and $v'\in\{f(g(v')),f^\ast(g(v'))\}$. This implies that $u'$ precedes $v'$ in $\mathrm{List}'(D',i,j)$.

Finally, we prove Part (c). According to Part (b), $f(v)$ and  $f^\ast(v)$ are adjacent in $\mathrm{List}'(D',i,j)$. Besides, as there is an arc $f^\ast(v)\rightarrow f(v)$ in $E_1'$, we conclude that $f(v)$ is next to $f^\ast(v)$ in $\mathrm{List}'(D',i,j)$.


	
\end{proof}

\begin{definition}
	For a reversible arc $u'\rightarrow v'$ in $D'$, we call it $(*,\downarrow)$-reversible in $D'$ if  $L'(u')\in\{i^\uparrow,i^\downarrow\}$ and $L'(v')=j^\downarrow$ for some $(i,j)\in E_D$.
\end{definition}
\begin{claim}\label{claim-hdr-ex}
	Suppose $D'=h(D,\Rem)$ for some $D\in\D(G_D)$ and $\Rem\in\psi(D)$. Let $(i,j)\in\M$. Let $u',v'$ be two nodes in $\mathrm{List'}(D',i,j)$ where $v'$ is next to $u'$. Then $u'\in V_2'$ if and only if $u'\rightarrow v'$ is $(*,\downarrow)$-reversible in $D'$ and $v'\in V_1'$.
\end{claim}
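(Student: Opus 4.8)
The plan is to first make the sequence $\mathrm{List}'(D',i,j)$ completely explicit, and then handle the two implications separately. From Claim~\ref{claim-order-v_i} one reads off that $\mathrm{List}'(D',i,j)$ is a ``block expansion'' of $\mathrm{List}(D,i,j)$: by parts (a)--(b) of that claim the nodes of $D$ labelled $i$ or $j$ appear in $\mathrm{List}'(D',i,j)$ in the same relative order as in $\mathrm{List}(D,i,j)$, where a node $v\in\Rem_1\cup\Rem_2$ becomes the single node $f(v)$, and a node $v\in\Rem_3\cup\Rem_4$ becomes the pair $f^{\ast}(v),f(v)$ --- consecutive and in this order, since $f^{\ast}(v)\to f(v)\in E_1'$ (Claim~\ref{claim-order-v_i}(c)). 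In particular every node of $\mathrm{List}'(D',i,j)$ lying in $V_2'$ has the form $f^{\ast}(v)$ with $v\in\Rem_3\cup\Rem_4$, and the node immediately following it is $f(v)\in V_1'$.

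For the forward direction, let $u'\in V_2'$, say $u'=f^{\ast}(v)$ with $v\in\Rem_3\cup\Rem_4$; since $u'$ is a node of $\mathrm{List}'(D',i,j)$ one gets $L(v)\in\{i,j\}$. By the paragraph above the node $v'$ next to $u'$ is $f(v)\in V_1'$, and $u'\to v'$ is the arc of $E_1'$. This arc is reversible in $D'$: along any directed path of $D'$ the value $g(\cdot)$ is nondecreasing with respect to the total order $\pi_D$ and strictly increases across every arc of $E_2'$, so any path from $f^{\ast}(v)$ to $f(v)$ (both with $g$-value $v$) uses only $E_1'$-arcs; the unique $E_1'$-arc out of $f^{\ast}(v)$ goes to $f(v)$, and $f(v)$ has no outgoing $E_1'$-arc, so $u'\to v'$ is the unique path, hence reversible by Fact~\ref{claim-exchangeable}. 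Writing $b$ for the $\M$-partner of $L(v)$, (\ref{eq-f_v}) and (\ref{eq-f_ast_v}) give $L'(u')\in\{b^{\uparrow},b^{\downarrow}\}$ and $L'(v')=L(v)^{\downarrow}$, and $(b,L(v))\in\M\subseteq E_D$; thus $u'\to v'$ is $(*,\downarrow)$-reversible and $v'\in V_1'$.

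For the backward direction I would argue the contrapositive: assuming $u'=f(u)$ and $v'=f(v)$ are both in $V_1'$ and consecutive in $\mathrm{List}'(D',i,j)$, show $u'\to v'$ is not $(*,\downarrow)$-reversible. By Claim~\ref{claim-order-v_i}(a)--(b), $u,v\in\mathrm{List}(D,i,j)$ with $u$ before $v$, and they are in fact consecutive there, since an intermediate node $w$ would put $f(w)$ strictly between $u'$ and $v'$ in $\mathrm{List}'$. Also $v\notin\Rem_3\cup\Rem_4$, since otherwise Claim~\ref{claim-order-v_i}(c) would place $f^{\ast}(v)\in V_2'$ immediately before $v'=f(v)$, forcing $u'=f^{\ast}(v)\notin V_1'$. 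Now: if $v\in\Rem_1$ then $L'(v')=L(v)^{\uparrow}$ is an $\uparrow$-label, so $u'\to v'$ cannot match the pattern ``$L'(u')\in\{i^{\uparrow},i^{\downarrow}\}$ and $L'(v')=j^{\downarrow}$''. If $v\in\Rem_2$ and $L(u)=L(v)$, then $L'(u')\in\{L(v)^{\uparrow},L(v)^{\downarrow}\}$ and $L'(v')=L(v)^{\downarrow}$, and that pattern would require $(L(v),L(v))\in E_D$, impossible. The remaining case is $v\in\Rem_2$ with $L(u)\ne L(v)$; then $\{L(u),L(v)\}=\{i,j\}$, $(L(u),L(v))\in\M$, and $u\to v$ is an arc of $D$ (consecutive, adjacently labelled nodes in a topological order). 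This arc is not reversible in $D$: otherwise it would be $\M$-reversible, giving $v\in\mathcal{V}(D)\subseteq\Rem_1$, contradicting $v\in\Rem_2$. Hence $D$ contains some other directed path $u\to w_1\to\cdots\to w_k\to v$ ($k\ge1$), and $f(u)\to f(w_1)\to\cdots\to f(w_k)\to f(v)$ is a directed path of $D'$ distinct from $u'\to v'$: each step is an $E_2'$-arc because $\pi_D$ orders $u,w_1,\dots,w_k,v$ increasingly and, for every arc $a\to b$ of $D$, the labels $L'(f(a))$ and $L'(f(b))$ are equal or adjacent in $G^{\M}$ (by the definition of $E^{\M}$ together with the fact that $a^{\uparrow},a^{\downarrow}$ are adjacent in $G^{\M}$ whenever $a\in\M$). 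Thus $u'\to v'$ is not the unique path, so not reversible, so not $(*,\downarrow)$-reversible.

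The step I expect to be the main obstacle is exactly this last case of the backward direction: one must extract non-reversibility of $u\to v$ \emph{inside $D$} --- which is precisely where the constraint $\mathcal{V}(D)\subseteq\Rem_1$ built into $\psi(D)$ does its work --- and then verify that a witnessing alternative path of $D$ is transported intact to $D'$ by the relabel-and-split map $h$. Everything else is bookkeeping organized around the ``block expansion'' description of $\mathrm{List}'(D',i,j)$ supplied by Claim~\ref{claim-order-v_i}.
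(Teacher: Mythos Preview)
Your proposal is correct and follows essentially the same approach as the paper. The forward direction is identical in spirit (identify $v'=f(v)$ via Claim~\ref{claim-order-v_i}(c), then argue uniqueness of the path $f^\ast(v)\to f(v)$ from monotonicity of $g$ along directed paths). For the backward direction the paper is slightly more direct---it assumes the $(*,\downarrow)$ label pattern outright and proves non-reversibility---whereas you do a case split on the $\Rem$-class of $v$ first; but the substantive case ($v\in\Rem_2$, $L(u)\neq L(v)$) is handled identically: $v\notin\Rem_1$ forces $v\notin\mathcal{V}(D)$, so the $\M$-arc $u\to v$ is non-reversible in $D$, and an alternative $u$--$v$ path in $D$ lifts verbatim to $D'$ via $f$. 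Your observation that $u,v$ are consecutive in $\mathrm{List}(D,i,j)$ is correct but not actually needed.
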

\begin{proof}
$\Longrightarrow$: Let $u:=g(u')$. Assume $u'\in V_2'$, i.e., $u'=f^\ast(u)$. By Definition \ref{def:h}, $u\in\Rem_3\cup\Rem_4$. According to Part (c) of Claim \ref{claim-order-v_i}, as $v'$ is next to $u'$,  we have $v'=f(u)$ and then $v'\in V_1'$. 

Now we show that $u'\rightarrow v'$ is $(*,\downarrow)$-reversible. First, by Definition \ref{def:h}, either $L'(u')\in\{i^\uparrow,i^\downarrow\}$ and $L'(v')=j^\downarrow$, or  $L'(u')\in\{j^\uparrow,j^\downarrow\}$ and $L'(v')=i^\downarrow$. 
	What remains is to show $u'\rightarrow v'$ is reversible, by Fact~\ref{claim-exchangeable} which is equivalent to show that $f^\ast(u) \rightarrow f(u)$ is the unique path from $u'$ to $v'$ in $D'$. 
By contradiction, assume that
there is a path $f^\ast(u) \rightarrow w'_1 \rightarrow \cdots \rightarrow w'_k
	\rightarrow f(u)$ in $D'$ where $w'_1\neq f(u)$ and $w'_k\neq f^\ast(u)$.
	As $w'_1\neq f(u)$, we have $(f^\ast(u) \rightarrow w'_1)$ is not in $E'_1$ and then should be in $E'_2$, which further implies that $u \prec g(w'_1)$ in $\pi_D$.
	Similarly, we have $g(w'_k) \prec u$ in $\pi_D$. So $g(w'_k) \prec u\prec g(w'_1)$.
	Meanwhile, for each $\ell<k$, if $(w'_\ell \rightarrow w'_{\ell+1}) \in E'_1$, then $g(w'_\ell) = g(w'_{\ell+1})$;
	if $(w'_\ell \rightarrow w'_{\ell+1}) \in E'_2$, then  $g(w'_\ell) \prec g(w'_{\ell+1})$ in $\pi_D$.
	So, it always holds that $g(w'_\ell) \preccurlyeq g(w'_{\ell+1})$ in $\pi_D$ for each $\ell<k$. In particular, $g(w_1')\preccurlyeq g(w'_k)$. A contradiction.
	
\vspace{1ex}
$\Longleftarrow$: Let $u:=g(u')$ and $v:=g(v')$. Assume $u'\notin V_2'$ and $v'\in V_1'$, i.e., $u'=f(u)$ and $v'=f(v)$. Furthermore, assume $L'(v')=j^\downarrow$, then $v\notin\Rem_1$ and $L(v)=j$. 
We will show that $(f(u)\rightarrow f(v))$ is not reversible.

Note that $(f(u)\rightarrow f(v))$ should be in $E'_2$ and then $u \prec v$ in $\pi_D$.
By $L'(u') \in \{i^{\uparrow},i^{\downarrow}\}$,
$u' = f(u)$, and~(\ref{eq-f_v}),
we have
$L(u) = i$.
Thus, $(L(u),L(v)) = (i,j) \in \M \subseteq E_D$.
As $D$ is a wdag and $u \prec v$ in $\pi_D$, the arc $(u\rightarrow v)$ exists in $D$.
Since $v\notin \Rem_1$, $v \notin \mathcal{V}$, which means that $u\rightarrow v$ is not reversible in $D$. 
According to Fact~\ref{claim-exchangeable}, there is a path $u = w_1\rightarrow w_2\rightarrow \cdots \rightarrow w_{k} \rightarrow w_{k+1} = v$ from $u$ to $v$ in $D$ other than the arc $u\rightarrow v$, where  $w_{\ell} \prec w_{\ell+1}$ in $\pi_D$ and $\left(L(w_\ell) = L(w_{\ell+1})\right) \lor \left((L(w_\ell),L(w_{\ell+1})) \in E_D\right)$ for each $\ell\in [k]$.

According to the definition of $G^{\M}$ and (\ref{eq-f_v}), one can check that $(f(w_i) \rightarrow f(w_{i+1})) \in E'_2$. Therefore $u' = f(w_1)\rightarrow f(w_2)\rightarrow \cdots \rightarrow f(w_{k}) \rightarrow f(w_{k+1}) = v'$ is a path from $u'$ to $v'$ in $D'$, which implies that $u' \rightarrow f(v)$ is not reversible in $D'$ by Fact~\ref{claim-exchangeable}.
\end{proof}

 	Having Claims \ref{claim-order-v_i} and \ref{claim-hdr-ex}, we are ready to show that $h$ is injective. 
 	\begin{lemma}\label{lemma-unique-D}
 		$h$ is injective.
 		\end{lemma}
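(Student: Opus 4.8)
The plan is to show that both $D$ and the partition $\Rem$ can be uniquely recovered from $D' = h(D,\Rem)$. Recovering $D$ is the easier half: given $D'=(V',E',L')$, the set $V_2'$ consists precisely of the nodes $u'$ that are the source of an $E_1'$-arc, equivalently (by Claim~\ref{claim-hdr-ex}) the nodes $u'$ appearing in some $\mathrm{List}'(D',i,j)$ which are immediately followed by a node $v'$ such that $u'\to v'$ is $(*,\downarrow)$-reversible and $v'\in V_1'$; but more directly, $V_1'$ and $V_2'$ are distinguished by looking at the $E_1'$ matching, since every node of $V_2'$ has out-degree contribution into $V_1'$ via $E_1'$ and $|V_2'|=|\Rem_3\cup\Rem_4|$. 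Once $V_1'$ is identified, we recover $V=g(V_1')$, the labels $L(v)$ by stripping the $\uparrow/\downarrow$ decoration from $L'(f(v))$ (using $L'(i^\uparrow)=L'(i^\downarrow)\mapsto i$), and the arcs of $D$: an arc $u\to v$ exists in $D$ iff $g(u)\prec g(v)$ in $\pi_D$ and $(L(u),L(v))\in E_D$ or $L(u)=L(v)$; the order $\pi_D$ is read off from the topological order $\pi'$ of $D'$ restricted to $V_1'$. So $D$ is determined.

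The main work is recovering $\Rem=\{\Rem_1,\Rem_2,\Rem_3,\Rem_4\}$ from $D'$. The plan is: first, $\Rem_3\cup\Rem_4 = g(V_2')$, because these are exactly the nodes of $D$ that receive an auxiliary copy. Within this set, $v\in\Rem_3$ iff $L'(f^\ast(v))=i^\uparrow$ and $v\in\Rem_4$ iff $L'(f^\ast(v))=i^\downarrow$, where $i$ is determined by $(L(v),i)\in\M$ — so $\Rem_3$ and $\Rem_4$ are distinguished by reading the decoration on the $V_2'$-node paired to $v$. For the remaining nodes of $\Mat(D)$, i.e.\ those $v$ with $f(v)\in V_1'$ and $v\notin g(V_2')$: by~(\ref{eq-f_v}), $v\in\Rem_1$ iff $L'(f(v))=(L(v))^\uparrow$, and $v\in\Rem_2$ iff $L'(f(v))=(L(v))^\downarrow$. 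Hence the entire partition is a function of $L'$ and the already-recovered data $V_1',V_2',f,g,L$. Therefore $(D,\Rem)$ is uniquely determined by $D'$, proving injectivity.

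The delicate point — and the one I expect to be the main obstacle — is justifying cleanly that $V_1'$ and $V_2'$ are themselves intrinsically recoverable from the abstract directed labelled graph $D'$, without reference to the bijections $f,f^\ast$ which are part of the construction rather than of $D'$. Here Claim~\ref{claim-hdr-ex} is the key tool: it characterizes membership in $V_2'$ purely in terms of $D'$ via the $(*,\downarrow)$-reversibility condition on consecutive nodes of the lists $\mathrm{List}'(D',i,j)$, and Claim~\ref{claim-order-v_i}(c) guarantees that each $V_2'$-node sits immediately before its partner $V_1'$-node in that list. One must also check the matching $E_1'$ is recoverable — but an arc $u'\to v'$ lies in $E_1'$ rather than $E_2'$ exactly when $u'\in V_2'$ and $v'=f(g(u'))$, which by Claim~\ref{claim-hdr-ex} is detectable. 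Once $V_1',V_2'$ and $E_1'$ are pinned down abstractly, the rest is the bookkeeping described above, and the argument closes.
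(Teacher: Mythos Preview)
Your proposal is essentially correct and follows the same approach as the paper: recover $(V_1',V_2')$ intrinsically from $D'$, then read off $D$ as the induced subgraph on $V_1'$ with decorations stripped, then read off $\Rem$ from the labels. One point deserves sharpening. You correctly flag the recovery of $(V_1',V_2')$ as the delicate step and invoke Claim~\ref{claim-hdr-ex}, but note that the characterization there is not ``purely in terms of $D'$'': it says $u'\in V_2'$ iff $u'\to v'$ is $(*,\downarrow)$-reversible \emph{and} $v'\in V_1'$, so it references $V_1'$ on the right-hand side. This is a genuine recursion, not a closed-form test, and your write-up does not say how the recursion grounds out. The paper resolves this by processing each $\mathrm{List}'(D',i,j)$ backward from its last node: the last node is necessarily in $V_1'$ (since every $V_2'$-node has its $V_1'$-partner immediately after it by Claim~\ref{claim-order-v_i}(c)), and then Claim~\ref{claim-hdr-ex} determines the status of each earlier node from that of its successor. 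Once you add this backward sweep, your argument is complete and matches the paper's. Your aside that ``more directly, $V_1'$ and $V_2'$ are distinguished by looking at the $E_1'$ matching'' should be dropped---$E_1'$ is construction data, not intrinsic to $D'$, which you yourself note later.
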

 
    \begin{proof}
    Fix a $D=(V,E,L)\in \D(G_D)$ and a $\Rem\in\psi(D)$. Let $D'=(V',E',L')$ denote $h(D,\Rem)$. We show $(D,\Rem)$ can be recovered from $D'$, which implies the injectiveness of $h$.
 
 First, we recover the partition $(V_1',V_2')$. That is, given a node $u'\in V'$, we distinguish whether $u'\in V_1'$ or $u'\in V_2'$. If $L'(u') \in [m]\setminus \M$, then $u'\in V_1'$ according to (\ref{eq-f_v}). Otherwise, we have $L'(u')\in \{i^\uparrow,i^\downarrow\}$ for some $(i,j)\in\M$, hence $u'$ is in $\mathrm{List}'(D',i,j)$.  
 Assume the nodes in $\mathrm{List}'(D,i,j)$ are $v_1'v_2'v_3'\cdots v_{k}'$. According to Claim \ref{claim-hdr-ex}, we can see that the following procedure distinguishes whether $v_\ell'\in V_1'$ or $v_k'\in V_2'$ for all $v_\ell'\in\mathrm{List}'(D',i,j)$, including $u'$.
 \begin{algorithm}[h]
 	Initially, mark that $v_{k}'\in V_1'$, and let $\ell:=k-1$;\\
  	\While{ $\ell\geq 1$}
  	{
  	  \eIf{the arc $(v_{\ell}'\rightarrow v_{\ell+1}')$ is $(*,\downarrow)$-reversible and $v_{\ell+1}'\in V_1'$}{
  	  	Mark that $v_{\ell}'\in V_2'$;\\
  	  }{
      Mark that $v_{\ell}'\in V_1'$;\\
      }	
     $\ell:=\ell-1$;
  	}
  \end{algorithm}

Secondly, we can easily recover $D=(V,E,L)$ from $D'$ and $(V_1',V_2')$. Ignoring labels, it is easy to see that $D$ is exactly the induced subgraph of $D'$ on $V_1'$. By the way, we also get the function $f:V\rightarrow V_1'$.   For labels, we simply replace each label $i^\uparrow$ or $i^\downarrow$ with $i$.

Finally, we recover $\Rem$ from $D'$, $D$ and $(V_1',V_2)$. That is, we distinguish which one of $\{\Rem_1,\Rem_2,\Rem_3,\Rem_3\}$ contains a given node $v\in \Mat(D)$. Assume $L(v)=i$ and $(i,j)\in \M$. Let $u'$ be the node previous to $f(v)$ in $\mathrm{List'}(D,i,j)$. According to Part (c) of Claim \ref{claim-order-v_i}, $u'\in V_2'$ if and only if $v\in \Rem_3\cup \Rem_4$. When $v\in \Rem_3\cup \Rem_4$, $v\in \Rem_3$ if $L'(u')=j^\uparrow$, and $v\in \Rem_4$ if $L'(u')=j^\downarrow$. When $v\notin \Rem_3\cup \Rem_4$, $v\in \Rem_1$ if $L'(v')=i^\uparrow$, and $v\in \Rem_2$ if $L'(v')=i^\downarrow$.

\end{proof}

\section{Proof of Lemma \ref{lem:probtransfer}}
Let $\vec{e_i}$ denote the vector whose coordinates are all 0 except the $i$-th that equals 1.
The following lemmas will be used in the proof.

\begin{lemma}\cite{stoc19}\label{lem:probtransferonpath}
	Let $\DependencyGraph=([m],E_D)$ be a dependency graph  and  $\vec{p}$ be a probability vector beyond the Shearer's bound. Suppose $i,i_1,i_2,\cdots,i_{k-1},i'$ form a shortest path from $i$ to $i'$ in $G_D$. Then for any $q \leq p_{i'}$, $\vec{p} - q\vec{e_{i'}}+\big( \prod_{\ell \in [k-1]}\frac{1-p_{i_{\ell}}}{p_{i_{\ell}}}\big)\cdot\frac{1-p_i}{p_{i'}}\cdot q\vec{e_i}$ is also beyond the Shearer's bound.
\end{lemma}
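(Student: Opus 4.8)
The plan is to reduce the statement to the case of a single edge ($k=1$) by induction on the path length, and to settle the edge case by an explicit computation with the independence polynomial $q_\emptyset$. Two standard facts are used. \emph{(i)} The set of vectors in Shearer's bound of $G_D$ is downward closed in the coordinatewise order (equivalently, ``beyond Shearer's bound'' is upward closed): for genuine probability vectors $\vec p\le\vec p'$ this follows by taking any instance $\A\sim(G_D,\vec p)$ and enlarging each bad event with fresh independent randomness to raise its probability to $p'_j$ without changing the dependency graph, so $\Interior_a(G_D)$ is a down-set; in general one reads it off the criterion in \emph{(ii)}. \emph{(ii)} For any induced subgraph $H$ of $G_D$ and any $v\in H$ one has the deletion--contraction identity $q_\emptyset(H,\vec p)=q_\emptyset(H-v,\vec p)-p_v\,q_\emptyset(H-N_H[v],\vec p)$ (here $N_H[v]$ is the closed neighborhood of $v$ inside $H$); hence $q_\emptyset(H,\cdot)$ is affine in every coordinate, $\partial q_\emptyset(H,\vec p)/\partial p_v=-q_\emptyset(H-N_H[v],\vec p)$, and together with $q_I(G_D,\vec p)=\big(\prod_{j\in I}p_j\big)\,q_\emptyset\big(G_D-N_{G_D}[I],\vec p\big)$ this gives the working criterion: $\vec p$ is beyond Shearer's bound of $G_D$ if and only if $q_\emptyset(H,\vec p)\le 0$ for some induced subgraph $H$ of $G_D$.

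\emph{Induction on $k$.} Write $i=i_0,i_1,\dots,i_k=i'$ and note that the amplification constant equals $c=\prod_{\ell=0}^{k-1}\frac{1-p_{i_\ell}}{p_{i_{\ell+1}}}$. The base case $k=1$ is the edge case below. For $k\ge 2$, first apply the induction hypothesis to the (necessarily shortest) sub-path $i_1,\dots,i_k$ and to $\vec p$, using $q\le p_{i_k}$: the vector $\vec p^{(1)}:=\vec p-q\vec e_{i_k}+c_1 q\,\vec e_{i_1}$ is beyond Shearer's bound, where $c_1=\prod_{\ell=1}^{k-1}\frac{1-p_{i_\ell}}{p_{i_{\ell+1}}}$. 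Now apply the edge case to $\vec p^{(1)}$ and the edge $(i_0,i_1)$, transferring the amount $q':=c_1 q\ (\le p^{(1)}_{i_1})$ from $i_1$ to $i_0$: the resulting vector $\vec p^{(2)}$ is beyond Shearer's bound, it agrees with the target $\vec p':=\vec p-q\vec e_{i_k}+cq\,\vec e_{i_0}$ everywhere except at $i_0$, and there $\vec p^{(2)}_{i_0}=p_{i_0}+\frac{1-p_{i_0}}{p_{i_1}+c_1 q}\,c_1 q\le p_{i_0}+\frac{1-p_{i_0}}{p_{i_1}}c_1 q=p'_{i_0}$. Hence $\vec p^{(2)}\le\vec p'$ coordinatewise, and upward closedness of ``beyond'' completes the step. (Sending $i_1$ back to its original probability at the price of a slightly smaller bump at $i_0$, and then recovering the slack by monotonicity, is exactly what makes the per-edge factors telescope into the product $c$. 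Intermediate vectors may leave $(0,1]^m$; one then reads ``beyond Shearer's bound'' purely through the $q_I$-criterion of \emph{(ii)}, for which the edge case is proved anyway.)

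\emph{Edge case $k=1$.} Let $(i,i')\in E_D$, let $\vec p$ be beyond Shearer's bound, let $q\le p_{i'}$, and set $\vec p'=\vec p-q\vec e_{i'}+\frac{1-p_i}{p_{i'}}q\,\vec e_{i}$. Fix a minimal (in number of vertices) induced subgraph $H$ with $q_\emptyset(H,\vec p)\le 0$, and split on $\{i,i'\}\cap H$. If $i'\notin H$, then $\vec p'$ and $\vec p$ agree on $H$ except possibly at $i$, where $\vec p'$ is larger; expanding $q_\emptyset(H,\vec p')$ at $i$ shows that $H$ itself is a certificate for $\vec p'$ when $q_\emptyset(H-N_H[i],\vec p)\ge 0$, and $H-N_H[i]$ is one otherwise. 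If $i'\in H$, trying $H$, $H-N_H[i']$ (and, when $i\notin H$, also $H-N_H(i)$) handles every situation except the one in which all these ``shrunk'' subgraphs have strictly positive $q_\emptyset$; in that case one passes to the enlarged induced subgraph $H^{+}:=H\cup\{i\}$, expands $q_\emptyset(H^{+},\vec p')$ at $i$ and $i'$ (there is no mixed $p_ip_{i'}$ term since $i\sim i'$), and verifies $q_\emptyset(H^{+},\vec p')\le 0$ using $q_\emptyset(H,\vec p)\le 0$, the identity $c\,p_{i'}=1-p_i$, and $q\le p_{i'}$ --- the minimality of $H$, which forces $i'$ to be non-isolated in $H$, being what makes the final inequality close.

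The main obstacle is this last part of the edge case: one must verify an inequality of the shape
\[
q_\emptyset(H,\vec p)-\big(p_i+\tfrac{1-p_i}{p_{i'}}q\big)\,q_\emptyset\big(H^{+}-N_{H^{+}}[i],\vec p\big)+q\,q_\emptyset\big(H^{+}-N_{H^{+}}[i'],\vec p\big)\ \le\ 0,
\]
in which the positive term $q\,q_\emptyset\big(H^{+}-N_{H^{+}}[i'],\vec p\big)$ has to be absorbed. Making this go through requires exploiting the relation between $q_\emptyset(H-N_H[i],\cdot)$ and $q_\emptyset(H-N_H[i'],\cdot)$ imposed by the edge $i\sim i'$ and by $q_\emptyset(H,\vec p)\le 0$, together with the precise form of the amplification factor and the bound $q\le p_{i'}$; this bookkeeping is the computational heart of the argument and is where I expect the work to concentrate. (An alternative, following \cite{stoc19}, is a boundary/continuity argument: run the one-parameter family $\vec p(s)=\vec p-sq\vec e_{i'}+scq\,\vec e_i$, $s\in[0,1]$, and show that whenever $\vec p(s)$ lies on Shearer's boundary the derivative along this family of the vanishing minor $q_\emptyset(G',\cdot)$ is $\ge 0$, via Lemma~\ref{lem-diff-shearer}; the algebra is essentially the same.)
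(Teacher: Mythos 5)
First, a point of reference: the paper does not prove this lemma at all --- it is imported verbatim from \cite{stoc19} and used as a black box in the proof of Lemma~\ref{lem:probtransfer} --- so there is no in-paper proof to compare against, and your proposal has to stand on its own.

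Your overall architecture is sound as far as it goes: the reduction by induction to the single-edge case is correct (the per-edge factors $\frac{1-p_{i_\ell}}{p_{i_{\ell+1}}}$ do telescope to the stated constant, the sub-path of a shortest path is shortest, the slack from using $p^{(1)}_{i_1}=p_{i_1}+c_1q\geq p_{i_1}$ goes in the right direction, and the down-closedness of the Shearer region that you invoke is a standard fact). The genuine gap is that the edge case --- which is the entire content of the lemma, since the induction is routine --- is not actually proved. You set up the correct expansion, arrive at the inequality
$q_\emptyset(H,\vec p)+q\,q_\emptyset(H-N_H[i'],\vec p)-p'_i\,q_\emptyset(H^{+}-N_{H^{+}}[i],\vec p)\le 0$,
and then state that this ``is where I expect the work to concentrate'' without carrying it out; the surrounding case analysis (``trying $H$, $H-N_H[i']$, \dots handles every situation except\dots'') is likewise asserted rather than verified. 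Note also that certifying this inequality is not mere bookkeeping: the natural bound $q\,q_\emptyset(H-N_H[i'])\le p_{i'}q_\emptyset(H-N_H[i'])=q_\emptyset(H-i')-q_\emptyset(H)$ leaves you needing $q_\emptyset(H-i')\le\frac{(1-p_i)q}{p_{i'}}q_\emptyset(H-N_H[i])$, which fails for small $q$, so the certificate cannot always be read off $H$ or $H^{+}$ by a one-line estimate. What actually closes the argument is the monotonicity of $q_\emptyset$ under passing to induced subgraphs inside the Shearer region (available for all proper induced subgraphs of a minimal $H$), which yields $q_\emptyset(H-i-i',\vec p)\le q_\emptyset(H-N_H[i],\vec p)$ and makes the boundary/derivative computation via Lemma~\ref{lem-diff-shearer} go through --- the route you relegate to a parenthetical at the end, and essentially the one taken in \cite{stoc19}. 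As written, the proposal is an incomplete plan rather than a proof.
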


	Without loss of generality, we assume that $p_i - p_a$ is rational for each $i\in [m]$. 
	By contradiction, let $\vec{p}$ be such a vector which is beyond Shearer's bound.
	Let $S_{+}:=\{i\in [m]:p_i >  p\}$ and $S_{-}:=\{i\in [m]:p_i <  p\}$.
	Let $\Delta_p$ be a real number such that the following hold:
	\begin{itemize}[leftmargin=16pt]
		\item For each $i\in S_+$, 	$p_i - p_a = \gamma_i\cdot \Delta_p$ for some $\gamma_i\in \mathbb{N}^+$. Intuitively, we cut $p_i-p_a$ into $\gamma_i$  pieces each of size $\Delta_p$. Besides, we call such pieces \emph{positive pieces}.
		\item For each $i\in S_-$, 
			$$p_a - p_i= \tau_i \cdot K \cdot\left(\frac{1-p_a}{p_a}\right)^{d-1}\cdot \frac{\Delta_p}{p_a}$$ for some $\tau_i\in\mathbb{N}^+$.
		Intuitively, we cut $p_a-p_i$ into $\tau_i\cdot K$  pieces each of size $\left(\frac{1-p_a}{p_a}\right)^{d-1}\cdot \frac{\Delta_p}{p_a}$. We call such pieces \emph{negative pieces}.
	\end{itemize}
We use
$\mathcal{R} :=\{(i,r):i\in S_+,r\in [\gamma_i]\}$ and  $\mathcal{T}\triangleq  \{(i',t,k):i'\in S_-,t\in [\tau_{i'}],k\in [K]\}$
to denote the set of positive pieces and negative pieces respectively.

For convenience, let $\gamma_i = 0$ if $i\not \in S_{+}$, and $\tau_i = 0$ if $i\not \in S_{-}$. 
Then Condition (c) can be restated as: for $f(S)=T$, the  positive pieces in $S$ are no more than the negative pieces in $T$, i.e., 
	\begin{align}\label{eq-gamma-tau}
	\sum_{i\in S}\gamma_{i} \leq \sum_{i'\in T} \tau_{i'}.
	\end{align}
	
The basic idea of Lemma \ref{lem:probtransfer} is relatively simple: for each $S\in\mathcal{S}$, we move positive pieces in $S$ to $f(S)$ such that (i) all the positive pieces in $S$ are absorbed by the negative pieces in $f(S)$ and (ii) the resulted probability vector is still beyond Shearer's bound. Finally, all positive pieces will be absorbed, and we will get a vector strictly smaller than $\vec{p}$. By Lemma \ref{lem:probtransferonpath}, this vector is beyond Shearer's bound, which makes a contradiction.

For $i'\in [m]$, remember Condition (a) which says that there are at most $K$ subsets $S\subset\mathcal{S}$ such that $i'\in f(S)$, and we use $S^1_{i'},S^2_{i'},\cdots$ to represent these subsets.
Let $g:\mathcal{R}\rightarrow \mathcal{T}$ be a injection mapping each $(i,r)\in \mathcal{R}$ to some $(i',t,k)\in \mathcal{T}$ satisfying that (i) $i\in S^k_{i'}$ and (ii)
	$$\sum_{i_0\in S^k_{i'},i_0< i}\gamma_{i_0}+r =  \sum_{i_1\in f(S^k_{i'}),i_1< i'}\tau_{i_1}+t.$$
	By (\ref{eq-gamma-tau}), one can verify that such mapping $g$ exists.
In addition, according to Condition (b), if $g(i,r) = (i',t,k)$, then $\dist(i,i')\leq d$.

In the following, we will apply Lemma \ref{lem:probtransferonpath} repeatedly. 

	Let $g_0$ be $g$, $S_0$ be $S_-$ and $\mathcal{R}_0$ be $\mathcal{R}$.
	Given an injection $g_{\kappa}:\mathcal{R}\rightarrow \mathcal{T}$, $S_{\kappa}$ and $\mathcal{R}_{\kappa}$
	where $\text{dis}(i,j)\leq d$ if $g_{\kappa}(i,r) = (j,t,k)$,
	we construct another injection $g_{\kappa+1}:\mathcal{R}\rightarrow \mathcal{T}$, $S_{\kappa+1}$ and $\mathcal{R}_{\kappa+1}$ as follows.
	There are two possible cases for $g_{\kappa}$, $S_{\kappa}$ and $\mathcal{R}_{\kappa}$.
	\begin{itemize}
		\item[(1)] there exists $i,r,j,t,k$ such that $(i,r)\in \mathcal{R}_{\kappa}$, $g_{\kappa}(i,r)=(j,t,k)$
		and there is a shortest path between $i$ and $j$ 
		such that no vertex in $S_{\kappa}$ is on the path;
		\item[(2)] For each $g_{\kappa}(i,r)=(j,t,k)$ where $(i,r)\in \mathcal{R}_{\kappa}$
		and each shortest path between $i$ and $j$,
		there is a vertex in $ S_{\kappa}$ on the path.
	\end{itemize}
	
	For case (1), we let $g_{\kappa+1} = g_{\kappa}$, $\mathcal{R}_{\kappa+1} = \mathcal{R}_{\kappa}\setminus \{(i,r)\}$, and 
	$$S_{\kappa+1} = \{j\in S_-: \text{there exists } i,r,t,k \text{ where }(i,r)\in \mathcal{R}_{\kappa+1} 
	\text{ such that } g_{\kappa+1}(i,r) = (j,t,k) \}.$$
	For case (2), 
	there must be $(i_1,r_1,j_1,t_1,k_1),\cdots,(i_n,r_n,j_n,t_n,k_n)$ for some $n\in \mathbb{N}^+$
	such that 
	\begin{itemize}
		\item[-] $(i_{\ell},r_{\ell})\in \mathcal{R}_{\kappa}$, $j_{\ell}\in S_{\kappa}$,
		$g_{\kappa}(i_{\ell},r_{\ell})=(j_{\ell},t_{\ell},k_{\ell})$ for each $\ell \in [n]$,
		\item[-] $j_{\ell+1}$ is on a shortest path between $i_{\ell}$ and $j_{\ell}$ for each $\ell \in [n-1]$,
		\item[-] $j_{1}$ is on a shortest path between $i_{n}$ and $j_{n}$.
	\end{itemize}
	We define the injection $F(g_{\kappa})$ as follows.
	\begin{align*}
	\begin{cases}
	F(g_{\kappa})(i_{n},r_{n})  &= (j_{1},t_{1},k_{1}),\\
	F(g_{\kappa})(i_{\ell},r_{\ell}) &= (j_{\ell+1},t_{\ell+1},k_{\ell+1}) \text{ for each } \ell \in [n-1],\\
	F(g_{\kappa})(i,r) &= g_{\kappa}(i,r)  \text{ for other } (i,r).
	\end{cases}
	\end{align*}
	
	One can verify that $\text{dis}(i,j)\leq d$ if $F(g_{\kappa})(i,r) = (j,t,k)$
	and 
	\begin{align*}
	N \triangleq \sum_{\substack{(i,r,j,t,k):\\   g_{\kappa}(i,r) = (j,t,k)}}\text{dis}(i,j) \geq 
	1 + \sum_{\substack{(i,r,j,t,k):\\   F(g_{\kappa})(i,r) = (j,t,k)}}\text{dis}(i,j) .
	\end{align*}
	Since $N$ is bounded, 
	there must be a constant $\ell \leq N$ and $i,r,j,t,k$
	such that $(i,r)\in \mathcal{R}_{\kappa}$, $F^{\ell}(g_{\kappa})(i,r)=(j,t,k)$
	and there is a shortest path between $i$ and $j$ 
	such that no vertex in $S_{\kappa}$ is on the path. 
	Let $g_{\kappa+1} = F^{\ell}(g_{\kappa})$, $\mathcal{R}_{\kappa+1} = \mathcal{R}_{\kappa}\setminus \{(i,r)\}$ and 
	$$S_{\kappa+1} = \{j\in S_-: \text{there exists } i,r,t,k \text{ where }(i,r)\in \mathcal{R}_{\kappa+1} 
	\text{ such that } g_{\kappa+1}(i,r) = (j,t,k) \}.$$
	One can verify that in both cases, 
	$g_{\kappa+1}$ is an injection from $\mathcal{R}$ to $\mathcal{T}$
	and $\text{dis}(i,j)\leq d$ if $g_{\kappa+1}(i,r) = (j,t,k)$.
	
	Let $g'$ be $g_{|\mathcal{R}|}$.
	For each $\ell\in [|\mathcal{R}|]$,
	let $(i_{\ell},r_{\ell})$ be the unique element in $\mathcal{R}_{\ell-1}\setminus \mathcal{R}_{\ell}$.
	Let $(j_{\ell},t_{\ell},k_{\ell})$ denote $g'(i_{\ell},r_{\ell})$.
	Thus, we have 
	\begin{itemize}
		\item[-] $g'$ is an injection from $\mathcal{R}$ to $\mathcal{T}$,
		\item[-] $\text{dis}(i_{\ell},j_{\ell})\leq d$ for each $\ell\in [|\mathcal{R}|]$,
		\item[-] there is a shortest path between $i_{\ell}$ and $j_{\ell}$
		such that $j_{\ell+1},j_{\ell+2},\cdots,j_{|\mathcal{R}|}\in S_{\ell}$ are not on the path.
	\end{itemize}
	For each $j\in S_{-}$, define
	\begin{align*}
	\eta_j = |\{(i,r):g'(i,r) = (j,t,k) \text{ for some } t\in [\tau_j],k\in [K]\}|.
	\end{align*}
	Because $g'$ is an injection, we have $\eta_j \leq \tau_j\cdot K$.
	Let 
	$$\vec{p''} \triangleq \vec{p}' + \sum_{j\in S_-} (K \cdot \tau_j - \eta_j)\cdot\left(\frac{1-p}{p}\right)^{d-1}\cdot \frac{\Delta_p}{p}\cdot \vec{e}_{j}.$$
	By $\vec{p}'$ is beyond Shearer's bound and $\eta_j \leq K\cdot \tau_j $ for each $j\in S_-$,
	we have $\vec{p}''$ is also beyond Shearer's bound.
	For each $\ell \in [0, |\mathcal{R}|]$,
	let 
	\begin{align*}
	\vec{p}_{\ell} \triangleq \vec{p}'' - \Delta_p\cdot\left( \sum_{\kappa \leq \ell - 1}\left( \vec{e}_{i_{\kappa}}-\left(\frac{1-p}{p}\right)^{d-1}\cdot\frac{1}{p}\cdot\vec{e}_{j_{\kappa}}\right) + \vec{e}_{i_{\ell}}-\left(\frac{1-p}{p}\right)^{d-1}\cdot\frac{1}{p + \Delta_p}\cdot\vec{e}_{j_{\ell}}\right).
	\end{align*}
	Then we have the following claim.
	\begin{claim}
		For $\ell \in [0, |\mathcal{R}|]$, $\vec{p}_{\ell}$ is beyond Shearer's bound.
	\end{claim}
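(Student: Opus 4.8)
The plan is to prove the claim by induction on $\ell$, using Lemma~\ref{lem:probtransferonpath} as the engine for each inductive step. The base case $\ell = 0$ needs separate (but easy) handling: observe that $\vec{p}_0 = \vec{p}'' - \Delta_p\big(\vec{e}_{i_0} - (\tfrac{1-p}{p})^{d-1}\tfrac{1}{p+\Delta_p}\vec{e}_{j_0}\big)$, and I would check that this is exactly the vector obtained by one application of Lemma~\ref{lem:probtransferonpath} to $\vec{p}''$ along the shortest path from $j_0$ to $i_0$ (of length at most $d$), transferring a small amount $q = \Delta_p$ of probability mass off of $j_0$ and onto $i_0$. Since $\vec{p}''$ is beyond Shearer's bound and $\Delta_p \le p_{j_0}$ (the relevant coordinate has been chosen with enough budget via $\eta_j \le K\tau_j$), Lemma~\ref{lem:probtransferonpath} applies and yields that $\vec{p}_0$ is beyond Shearer's bound.

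For the inductive step, assume $\vec{p}_{\ell-1}$ is beyond Shearer's bound. The key observation is that $\vec{p}_\ell$ and $\vec{p}_{\ell-1}$ differ in a controlled way: passing from $\vec{p}_{\ell-1}$ to $\vec{p}_\ell$ amounts to two sub-steps. First, one ``completes'' the $(\ell-1)$-st transfer by replacing the coefficient $\tfrac{1}{p+\Delta_p}$ of $\vec{e}_{j_{\ell-1}}$ with $\tfrac{1}{p}$; this corresponds to transferring an additional tiny mass along the same shortest path between $i_{\ell-1}$ and $j_{\ell-1}$, and Lemma~\ref{lem:probtransferonpath} again keeps us beyond Shearer's bound. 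Second, one performs the $\ell$-th transfer: move mass $\Delta_p$ off $j_\ell$ along the shortest path between $i_\ell$ and $j_\ell$ onto $i_\ell$, with the endpoint coefficient $\tfrac{1}{p+\Delta_p}$ reflecting that only part of the mass has been transferred. The crucial point that makes this legitimate is the third bulleted property established just before the claim: the shortest path between $i_\ell$ and $j_\ell$ avoids all of $j_{\ell+1}, \dots, j_{|\mathcal{R}|}$, and — more to the point — the vertices $j_\ell$ has not yet been ``used'' as a target before step $\ell$, so its coordinate in $\vec{p}_{\ell-1}$ still has enough mass (at least $\Delta_p$, by the $\eta_j \le K\tau_j$ bookkeeping) to be transferred. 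The internal path vertices $i_{\ell-1}, \dots$ may coincide with earlier ones, but Lemma~\ref{lem:probtransferonpath} only requires the path to be a shortest path and the amount transferred to be at most the current probability of the source, both of which hold.

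The main obstacle I anticipate is the bookkeeping: one must verify carefully that at step $\ell$, the coordinate $p_{j_\ell}$ in the current vector $\vec{p}_{\ell-1}$ is still at least $\Delta_p$ so that Lemma~\ref{lem:probtransferonpath} is applicable (its hypothesis is $q \le p_{i'}$). This is where the construction of $\vec{p}''$ — padding each $j \in S_-$ back up by $(K\tau_j - \eta_j)(\tfrac{1-p}{p})^{d-1}\tfrac{\Delta_p}{p}$ — is used: it ensures every target vertex $j$ starts with exactly $K\tau_j$ ``negative pieces'' of mass available, and since $g'$ is an injection each piece is consumed at most once, so subtraction never drives a coordinate negative. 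One also needs to confirm the telescoping identity defining $\vec{p}_\ell$ matches the cumulative effect of the transfers — a routine but slightly tedious check that the coefficients of $\vec{e}_{i_\kappa}$ and $\vec{e}_{j_\kappa}$ for $\kappa < \ell$ are precisely $-\Delta_p$ and $+\Delta_p(\tfrac{1-p}{p})^{d-1}\tfrac{1}{p}$ respectively, while for $\kappa = \ell$ the target coefficient is the ``half-done'' value $\tfrac{1}{p+\Delta_p}$. Once the claim is proved, taking $\ell = |\mathcal{R}|$ gives a vector beyond Shearer's bound that is entrywise at most $\vec{p}$ (since all positive pieces have been absorbed into negative slack), contradicting that $\vec{p}_a$ lies on Shearer's boundary with $\vec{p} \ge \vec{p}_a$ — completing the proof of Lemma~\ref{lem:probtransfer}.
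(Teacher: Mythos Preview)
Your proposal has the right skeleton (induction on $\ell$, invoking Lemma~\ref{lem:probtransferonpath} at each step), but it misidentifies both the direction of the transfer and, more seriously, the role of the path-avoidance property.

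First, the transfer goes the other way: in the passage from $\vec{q}$ to $\vec{p}_\ell$ one \emph{subtracts} $\Delta_p$ from $i_\ell$ (a vertex in $S_+$) and \emph{adds} to $j_\ell$ (a vertex in $S_-$). So in the lemma's notation the source $i'$ is $i_\ell$, not $j_\ell$; the hypothesis $q\le p_{i'}$ becomes $\Delta_p\le q_{i_\ell}$, which is immediate since $q_{i_\ell}\ge p+\Delta_p$. Your bookkeeping about $j_\ell$ ``still having mass $\ge \Delta_p$'' is aimed at the wrong vertex and is not what needs checking.

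Second --- and this is the real gap --- you write that ``Lemma~\ref{lem:probtransferonpath} only requires the path to be a shortest path and the amount transferred to be at most the current probability of the source.'' That covers the \emph{hypothesis}, but the lemma's \emph{conclusion} adds to $j_\ell$ an amount $\big(\prod_t \tfrac{1-q_{k_t}}{q_{k_t}}\big)\cdot\tfrac{1-q_{j_\ell}}{q_{i_\ell}}\cdot\Delta_p$, which depends on the current probabilities $q_{k_t}$ at the \emph{internal} path vertices. To conclude that $\vec{p}_\ell$ (whose $j_\ell$-increment has the fixed coefficient $(\tfrac{1-p}{p})^{d-1}\tfrac{1}{p+\Delta_p}$) dominates the lemma's output, you must show $q_{k_t}\ge p$ for every intermediate $k_t$. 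This is exactly why the avoidance property matters: it guarantees that any $k_t\in S_-$ on the path is not a future target $j_{\ell+1},\dots,j_{|\mathcal R|}$, hence all $\eta_{k_t}$ transfers into $k_t$ have already occurred by step $\ell$, and the padding in $\vec{p}''$ was calibrated so that these $\eta_{k_t}$ increments bring $q_{k_t}$ up to exactly $p$. Your proposal notes the avoidance property but attaches it to the wrong conclusion (that $j_\ell$ itself hasn't been used --- which the property does not assert, since $j_\ell$ may well equal some earlier $j_\kappa$), and dismisses the internal vertices as irrelevant. Without the bound $q_{k_t}\ge p$, the transfer factor could exceed $(\tfrac{1-p}{p})^{d-1}\tfrac{1}{p+\Delta_p}$ and the inductive step fails.
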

	\begin{proof}
		We prove this claim by induction. 
		Obviously, $\vec{p}_{0}$ is beyond Shearer's bound.
		In the following, we prove that if $\vec{p}_{\ell-1}$ is beyond Shearer's bound,
		then $\vec{p}_{\ell}$ is also beyond Shearer's bound.
		
		Let 
		$$\vec{q} \triangleq \vec{p}'' - \Delta_p\cdot\sum_{\kappa \leq \ell-1}\left( \vec{e}_{i_{\kappa}}-\left(\frac{1-p}{p}\right)^{d-1}\cdot\frac{1}{p}\cdot\vec{e}_{j_{\kappa}}\right).$$
		Obviously, $\vec{q}\geq \vec{p}_{\ell-1}$.
		By $\vec{p}_{\ell-1}$ is beyond Shearer's bound,
		we have $\vec{q}$ is also beyond Shearer's bound.
		Note that there is a shortest path $i_{\ell},k_1,k_2,\cdots,k_n,j_{\ell}$
		between $i_{\ell}$ and $j_{\ell}$
		such that $j_{\ell+1},j_{\ell+2},\cdots,j_{|\mathcal{R}|}$ are not on the path.
		Because $\vec{q}$ is beyond Shearer's bound,
		by Lemma~\ref{lem:probtransferonpath},
		we have  
		$$\vec{q}'\triangleq \vec{q} - \Delta_p\cdot\left( \vec{e}_{i_{\ell}}-\left(\prod_{t \in [n]}\frac{1-q_{k_{t}}}{q_{k_{t}}}\right)\cdot\frac{1}{q_i}\cdot\vec{e}_{j_{\ell}}\right)$$
		is also beyond Shearer's bound.
		Meanwhile, by $(i_{\ell},r_{\ell}) \in \mathcal{R}$,
		we have 
		\begin{align*}
		q_{i_\ell} = p'_i - \Delta_p\sum_{\kappa \in \ell - 1}\mathbbm{1}(i_{\kappa} = i_{\ell}) \geq p'_i - (\gamma_i - 1)\Delta_p \geq p_i + \Delta_p.
		\end{align*}
		For each $t\in [n]$, if $k_t \not\in S_-$, we have 
		$q_{k_t} \geq p$.
		Otherwise, $k_t \in S_-$, and $k_t \neq j_{\kappa}$ for each $\kappa\geq\ell$.
		Thus, we have $\sum_{\kappa \in \ell - 1}\mathbbm{1}(j_{\kappa} = k_t) = \eta_{k_t}$.
		Therefore, 
		\begin{align*}
		q_{k_t}&=  p'_{k_t} +  (K \cdot \tau_{k_t} - \eta_{k_t})\cdot\left(\frac{1-p}{p}\right)^{d-1}\cdot \frac{\Delta_p}{p} + \sum_{\kappa \in \ell - 1}\mathbbm{1}(j_{\kappa} = k_t)\cdot\left(\frac{1-p}{p}\right)^{d-1}\cdot \frac{\Delta_p}{p} 
		\\&= p'_{k_t} +  (K \cdot \tau_{k_t} - \eta_{k_t})\cdot\left(\frac{1-p}{p}\right)^{d-1}\cdot \frac{\Delta_p}{p} + \eta_{k_t}\cdot\left(\frac{1-p}{p}\right)^{d-1}\cdot \frac{\Delta_p}{p} =  p.
		\end{align*}
		By $\text{dis}(i,j)\geq d, q_{i_{\ell}} \geq p + \Delta_p$ and $q_{k_t} \geq p$ for each $t \in [n]$,
		we have 
		$$\left(\prod_{t \in [n]}\frac{1-q_{k_{t}}}{q_{k_{t}}}\right)\cdot\frac{1}{q_i}< \left(\frac{1-p}{p}\right)^{d-1}\cdot\frac{1}{p + \Delta_p}.$$
		Thus,  by $\vec{q}'$ is beyond Shearer's bound, we have 
		$$\vec{p}_{\ell} = \vec{q} - \Delta_p\cdot\left( \vec{e}_{i_{\ell}}-\left(\frac{1-p}{p}\right)^{d-1}\cdot\frac{1}{p + \Delta_p}\cdot\vec{e}_{j_{\ell}}\right)$$
		is also beyond Shearer's bound.
	\end{proof}
	
	Thus, we have $\vec{p}_{|\mathcal{R}|} $ is beyond Shearer's bound. 
	It is easy to verify that $\vec{p}_{|\mathcal{R}|} < \vec{p}$,
	which is contradictory with that $\vec{p}$ is on Shearer's boundary.

\section{Missing part in the proof of Theorem \ref{thm:gapbetweenlattices}}
\begin{proof}[Proof of Claim \ref{claim:51}]
Observe that for each $(v,v')\in E_U^k$, if $(v,v')\notin \M$, then one of its neighboring edge $(v_0,v_1)$ is in $T_k$ and satisfies that $\delta_{v,v'}\leq \delta_{v_0,v_1}$. Here, we say two edges neighboring if they share a common vertex. Besides, note that each edge has at most $2\Delta$ neighboring edges. So	
\begin{align}\label{eq:521}
\sum_{(v_0,v_1)\in T_k} \delta_{v_0,v_1}^2 \geq \frac{1}{2\Delta}\sum_{(v,v')\in E_U^k}  \delta_{v,v'}^2.
\end{align}

Moreover, according to Lemma~\ref{lem:general2elementary} and ~\ref{lemma:elementary}, it has that
	\begin{align}\label{eq:522}
	\sum_{(v,v')\in E_U^k}  \delta_{v,v'}^2 \geq \frac{1}{|E_U^k|}\cdot \left(\sum_{(v,v')\in E_U^k }\delta_{v,v'}\right)^2
	\geq \frac{|V_U^k|\cdot\Delta^2}{|E_U^k|}\cdot \left(\lowerBB^+(G_B(G_D),\vec{p})\right)^2,
	\end{align}
	By combining Inequality \ref{eq:521}, \ref{eq:522} and the fact that $2|E_U^k|\leq |V_U^k|\Delta$, we finish the proof.
\end{proof}

\vspace{1ex}
Let $K:=(\Delta+1)|V_U|$, $d:=D+2$, $\mathcal{S}:=\{V_U^1,V_U^2,\cdots\}$, and $f(V_U^k):=T_k$. In the following, we check that all the three conditions in Lemma \ref{lem:probtransfer} hold.

\noindent\textit{\underline{Condition (a).}} That is, we want to show $|\{k:T_k\ni v\}|\leq(\Delta+1)|V_U|$ for each $v\in V_D$. Observe that if $v\in T_k$, then $v\in \N^+(V_U^k)$. So 
\[
|\{k:T_k\ni v\}|\leq |\{k:\N^+(V_U^k)\ni v\}| \leq |\{k:\N^+(v)\cap V_U^k\neq \emptyset\}|\leq \sum_{v'\in \N^+(v)}|\{k: V_U^k\ni v'\}|\leq (\Delta+1)\cdot |V_D|.
\]
The last inequality uses the fact that $h(k',u)\neq h(k,u)$ if $k\neq k'$. 

\vspace{1ex}
\noindent\textit{\underline{Condition (b).}} That is, we want to show $\dist(v,v')\leq D+2$ for any $v\in V_U^k$ and $v'\in T_k$. This is obvious, because if $v'\in T_k$, then $v'\in \N^+(V_U^k)$. 
\vspace{1ex}

\vspace{1ex}
\noindent\textit{\underline{Condition (c).}} We verify that 
\begin{align}\label{eq:conditionc}
\left(\frac{1-p_a}{p_a}\right)^{D+1}\cdot \frac{K}{p_a}\cdot\sum_{i\in S}\max\{p_i-p_a,0\} 
\leq \sum_{i\in T} \max\{p_a-p_i,0\}.
\end{align}

On one hand, noting that $\max\{(1+\eps)p^-_{v}-p_a,0\} \leq \max\{(1+\eps)p_{v}-p_a,0\} \leq q$, we have
\begin{align}\label{eq:c2}
\text{L.H.S of }(\ref{eq:conditionc})\leq \left(\frac{1-p_a}{p_a}\right)^{D+1}\cdot \frac{(\Delta+1)  |V_U|^2}{p_a}\cdot q.
\end{align}

On the other hand, observe that 
\begin{align*}
\max\{p_a - (1+\eps)p^-_{v},0\} &\geq 
p_a - (1+\eps)p^-_{v} 
=(p_a+q-(1+\eps)p^-_{v})-q
\geq (1+\eps)(p_v-p_v^-)-q\\
&\geq (p_v-p_v^-)-q, 
\end{align*}
where the last inequality is due to the assumption that $(1+\eps)\vec{p}\leq (p_a+q,\cdots,p_a+q)$. Then 
\begin{align}
\text{R.H.S of }(\ref{eq:conditionc})\geq &\left(\sum_{v\in V_U^k} (p_v-p_v^-)\right)-|\N^+(V_U^k)|q\geq \frac{2}{17}\left(\sum_{(v_0,v_1)\in T_k}\delta_{v_0,v_1}^2\right)-\Delta|V_U|q\notag\\
\geq & \frac{2}{17}\left(\lowerBB^+(G_B(G_D),\vec{p})\right)^2-\Delta|V_U|q\label{eq:c3}.
\end{align}
Putting Inequality \ref{eq:c2} and \ref{eq:c3} together and noting that $\frac{1-p_a}{p_a}\geq 1$. 

\end{document}